\title{Checking Presence Reachability Properties on Parameterized Shared-Memory Systems}
\titlerunning{Checking PRPs on Parameterized Shared-Memory Systems} 
\author{Nicolas Waldburger}{Univ Rennes, Inria, CNRS, IRISA, France}{nicolas.waldburger@irisa.fr}{}{}
\authorrunning{N.~Waldburger}
\keywords{Verification, Parameterized models, Distributed algorithms}
\begin{document}
\maketitle

\begin{abstract}
We consider the verification of distributed systems composed of an arbitrary number of asynchronous processes. Processes are identical finite-state machines that communicate by reading from and writing to a shared memory. Beyond the standard model with finitely many registers, we tackle round-based shared-memory systems with fresh registers at each round. In the latter model, both the number of processes and the number of registers are unbounded, making verification particularly challenging.
The properties studied are generic presence reachability objectives, which subsume classical questions such as safety or synchronization by
expressing the presence or absence of processes in some states. In the more general round-based setting, we establish that the parameterized verification of presence reachability properties is \PSPACE-complete.
Moreover, for the roundless model with finitely many registers, we prove that the complexity drops down to \NP-complete and we provide several natural restrictions that make the problem solvable in polynomial time. 
\keywords{Verification, Parameterized models, Distributed algorithms}
\end{abstract}

\section{Introduction}
\label{sec:intro}
\textbf{Parameterized verification.}
Distributed systems consist of multiple processes running in parallel. Verification of such systems is a major topic of modern verification, because of how common these systems are and how difficult their verification has proven to be. Indeed, when multiple processes run asynchronously, the number of relevant interleavings to consider quickly becomes large. 
An intuitive approach for their verification is to fix the number of processes involved and try to apply classical verification techniques. Another approach is that of parameterized verification, where 
one aims to prove the more general statement that the property of interest holds for any number of participants. The interest of this approach is threefold. First, it allows to prove that the system is correct regardless of the number of processes. 
Second, the efficiency of parameterized techniques does not depend on the number of participants, which makes them more suitable for large systems for which classical techniques scale poorly. Third, parameterized verification often yields decidability or better computational complexity for problems that are hard to solve with classical techniques; see for example \cite{hague} for a problem that becomes decidable in the parameterized case.
In their seminal work \cite{GS-jacm92}, German and Sistla consider systems consisting of a leader and arbitrarily many contributors, all of which are finite-state machines communicating via \emph{rendez-vous}. In this setting, the safety verification problem is \EXPSPACE-complete and the complexity drops down to polynomial time when the leader is removed. Since then, many similar models have been studied, with variations on the expressiveness of the processes and the means of communication in order to capture the large variety of existing distributed algorithms \cite{Esparza-stacs14,BJKKRVW-book15}.

\textbf{Contributions.}
We study parameterized verification of systems where all processes are identical and anonymous finite-state machines that communicate via reading from and writing to a shared memory. The read and write actions are performed non-atomically, meaning that no process may perform a read-write combination while preventing all other processes from acting. Our registers are \emph{initialized} with a special symbol; this assumption is common in parameterized verification of shared-memory systems \cite{BMRSS-icalp16,tso}, since some algorithms require initialized registers, \emph{e.g.} \cite{Aspnes-ja02}. 
First, we study a model with finitely many registers. This model is inspired by \cite{EGM-jacm16} where registers were uninitialized and the verification is restricted to safety properties. In contrast, we study the more general \emph{presence reachability problems}, in which one asks whether one may reach a configuration that satisfies a property. This property takes the form of a Boolean combination of constraints expressing whether there is at least one process in a given state of the finite-state machine. We prove that this problem is \NP-complete and we provide several natural restrictions on the process description and on the property that make the problem solvable in polynomial time. 
We then work on the more general setting of \emph{round-based} shared-memory systems \cite{myicalppaper}, which are designed to model round-based shared-memory algorithms present in the literature, see \emph{e.g.} \cite{Aspnes-ja02,raynalSimpleAsynchronousShared2012}. In this model, the processes proceed in asynchronous rounds, each round having its own fresh set of registers. The source of infinity is twofold, as the number of processes and the number of registers are both unbounded, making round-based systems particularly challenging to verify. The safety problem was proved to be \PSPACE-complete in round-based shared-memory systems \cite{myicalppaper}. In this article, we go beyond safety by considering a round-based, richer version of the presence reachability problem where the property may quantify existentially and universally over the rounds. 
Nonetheless, we establish that the round-based presence reachability problem is \PSPACE-complete.

\textbf{Related work.}
Similar models and problems have been studied in the literature. In the shared-memory model (without rounds and without register initialization), the safety problem has been studied extensively with variations on the expressiveness given to the leader and the contributors \cite{EGM-jacm16}; in particular, when processes are finite-state machines, the safety problem is shown to be \coNP-complete and to decrease to \PTIME when the leader is removed. However, this result does not hold when registers are initialized or when the property is more general than safety. 
A model that has perhaps been more studied is that of \emph{reconfigurable broadcast networks} (RBN), where processes communicate via broadcasting messages that can be received by any of the other processes. This model has similarities with shared-memory systems 
, although broadcast tends to be simpler (messages disappear after being sent, while written values remain in the registers).
A source of inspiration for the first part of our article is the study of reachability problems in RBN \cite{delzanno2012}, where it is shown that the cardinality reachability problem, where one wants to reach a configuration that satisfies cardinality constraints, is \PSPACE-complete. When the constraints cannot count processes, this problem is analogous to our presence reachability problem; for RBN, it is shown to be \NP-complete, a complexity that we also obtain in our setting. Finally, this complexity drops down to \PTIME{} in RBN when considering the special case of safety. This tractability result no longer holds in the shared-memory world unless we make further assumptions about the number of registers or their initialization.
The cube reachability problem is a generalization of the cardinality constraint problem  where the initial configuration is also subject to cardinality constraints; this problem is \PSPACE-complete both in RBN and in (roundless) asynchronous shared-memory systems \cite{delzanno2012, BW-gandalf21,  BW-fossacs21}, although it is unknown whether this remains true when allowing the $\textsf{Pre}^*$ and $\textsf{Post}^*$ operators in the description of the cubes \cite{BW-fossacs21, BW-erratum}. While it is interesting to compare results on RBN with our results on shared-memory systems without rounds, such a comparison is not possible with the more expressive model of round-based shared-memory systems, 
in particular because the unboundedness in the number of registers has no equivalent in broadcast networks. 

Due to space constraints, most of the proofs can be found in the appendix.
\begin{scope}
\section{Roundless Register Protocols}
\label{sec:roundless_prots}
In this section, we introduce \emph{"register protocols@@rl"}, a model inspired by \cite{EGM-jacm16}. We call these systems \emph{roundless} to distinguish them from \emph{round-based} systems introduced later in this article. 
\subsection{Definitions}
\label{subsec:defs_roundless}
\begin{definition}[Roundless register protocols]
  A ""roundless register protocol@@rl"" is a tuple $\prot =
    \nobreak{\tuple{\states, \initialstates,\rdim, \dataalp, \datainit, \transitions}}$
  where
  \begin{itemize}
    \item $\states$ is a finite set of ""states@@rl"" with a distinguished subset of 
          ""initial states@@rl""~$\initialstates \subseteq \states$;
    \item $\rdim \in \nats$ is the number of shared ""registers@@rl""; 
    \item $\dataalp$ is a finite ""data
          alphabet@@rl"" containing the initial symbol $\datainit$;
    \item
          $\transitions \subseteq \states \times \actions \times \states$ is
          the set of ""transitions@@rl"", where
          $\actions :=  \{\rlreadact{\regid}{\asymb} \mid \regid \in \regint,\allowbreak \asymb \in \dataalp\} \cup \{
            \writeact{\regid}{\asymb} \mid \regid \in \regint, \asymb \in \datawrite\}$
          is the set of ""actions@@rl"".
  \end{itemize}
\end{definition}
"Roundless register protocols@@rl" are executed on multiple processes that behave asynchronously and can only communicate via reading from and writing to the shared registers. The behavior of a process is described by a finite-state machine. The possible actions of the transitions are reading a symbol from and writing a symbol to one of the $\rdim$ shared registers; $\asymb \in \dataalp$ denotes the symbol and $\regid$ indicates the register on which the action is performed. Each register stores one ""symbol"" from the finite set $\dataalp$ at a time. Read-write combinations are performed non-atomically, i.e., no process can perform a read-write combination while excluding all other processes. The ""size@@prot"" of the protocol $\prot$ is defined as $|\prot| := |\states| + |\dataalp| + |\transitions| + \rdim$. For all $\regid \in \regint$, we write $\rlreg{\regid}$ for the "register" of index $\regid$. We also write $\regset{}$ for the set $\set{\rlreg{\regid} \mid \regid \in \regint}$ of all "registers". 

Processes are assumed to have no identifiers so they are identical anonymous agents. Therefore, a ""configuration@@arl"" is a pair $\cconfig = \confpair{\mu}{\dvec} \in \NN^{\states} \times \dataalp^{\regset{}}$ such that $0 < \sum_{q \in \states} \mu(q) < \infty$.  Let $\state{\cconfig} := \mu$ which indicates the number of processes in each state, and $\datafun{\cconfig} := \dvec$ mapping to each register its symbol: for all $\regvar \in \regset{}$, $\data{\cconfig}{\regvar}$ is the symbol contained in "register@@rl" $\regvar$ in $\cconfig$. Let $\cconfigs := \NN^{\states} \times \dataalp^{\regset{}}$ denote the set of all "configurations@@arl". Let $\supp{\cconfig} := \set{q \in Q \mid \state{\cconfig}(q) >0}$ denote the support of the multiset $\state{\cconfig}$. We write $\intro{\oplus}$ and $\intro{\ominus}$ the operations on multisets that add and remove elements, respectively.
A configuration is ""initial@@arl"" if all processes are in states from $\initialstates$ while all "registers@@rl" have value $\datainit$.
We denote by $\cconfiginitset$ the set of "initial configurations@@arl" (the letter $\mathsf{c}$ stands for ``concrete'' as opposed to ``abstract'' configurations defined later). Formally,
$\cconfiginitset := \set{\cconfig \mid \state{\cconfig} \subseteq \initialstates, \, \datafun{\cconfig} = \datainit^{\regset{}}}$.

Given $\cconfig, \cconfig' \in \cconfigs$, $\cconfig'$ is a ""successor@@concrete"" of $\cconfig$ when there exists $\atrans = (q,a,q') \in \transitions$ such that $\state{\cconfig}(q) >0$, $\state{\cconfig'} = (\state{\cconfig} \ominus \set{q}) \oplus \set{q'}$ and:
\begin{itemize}
\item if $a = \rlreadact{\regid}{\asymb}$ then $\data{\cconfig}{\rlreg{\regid}} = \asymb$ and $\datafun{\cconfig'} = \datafun{\cconfig}$,
\item if $a = \writeact{\regid}{\asymb}$ then $\data{\cconfig'}{\rlreg{\regid}} = \asymb$ and $\forall \regid' \ne \regid$, $\data{\cconfig'}{\rlreg{\regid'}} = \data{\cconfig}{\rlreg{\regid'}}$. 
\end{itemize}
In that case, we write $\cconfig \step{\atrans} \cconfig'$ or simply $\cconfig \step{} \cconfig'$, which is called a ""step"". . 
A ""concrete execution"" is a sequence $\cexec = \cconfig_0, \atrans_1, \cconfig_1, \dots, \cconfig_{l-1}, \atrans_{l}, \cconfig_l$ such that for all $i$, $\cconfig_{i} \step{\atrans_{i+1}} \cconfig_{i+1}$. We write $\cconfig_0 \step{*} \cconfig_l$ for the existence of such an execution. 
$\cconfig'$ is ""reachable from $\cconfig$"" when $\cconfig \step{*} \cconfig'$. Given a set $C$ of configurations, we write $\creach{C} := \set{\cconfig' \mid \exists \cconfig \in C, \, \cconfig \step{*} \cconfig'}$.
A configuration is ""reachable"" when it is in $\creach{\cconfiginitset{}}$.   

\begin{example}
\label{example:roundless_protocol}
Figure~\ref{fig:example_roundless_protocol} provides an example of a "roundless register protocol" $\prot$ with $\dataalp = \set{\datainit, \exa, \exb, \exc}$, $\initialstates = \set{q_0}$ and $\rdim = 1$, hence read and write actions are implicitly on register $\regid = 1$. The red and blue labels are to be ignored for now. 

The set of initial configurations is $\cconfiginitset := \set{\confpair{q_0^n}{\datainit} \mid n \geq 1}$. 
The following execution with two processes witnesses that $\confpair{\errorstate \oplus C}{\exa} \in \creach{\cconfiginitset{}}$: \\
 $\confpair{q_0^2}{\datainit} \step{(q_0, \rlreadact{}{\datainit}, B)} \confpair{q_0 \oplus B}{\datainit} \step{(B, \rlreadact{}{\datainit}, C)} \confpair{q_0 \oplus C}{\datainit} \step{(q_0, \writeact{}{\exc}, A)} \smallskip \\ \confpair{A \oplus C}{\exc} \step{(C, \writeact{}{\exa}, C)} \confpair{A \oplus C}{\exa} \step{(A, \rlreadact{}{\exa}, \errorstate)} \confpair{\errorstate \oplus C}{\exa}$.
\begin{figure}[tp]
\centering
\begin{tikzpicture}[x = 1.5cm, y= 0.8cm, node distance = 0.2cm]
\tikzstyle{every node} = [font = \small]
\node[state] (q0) at (0,0) {$q_0$};
\node[state] (B) at (0,-2) {$B$};
\node[state] (A) at (2,0) {$A$};
\node[state] (C) at (2,-2) {$C$};
\node[state] (qf) at (4,-1) {$\errorstate$};

\path[-latex', thick]
(q0.180) edge[latex'-] +(180:4mm)
(q0) edge node[left, name = readb, overlay, yshift = 0.15cm] {$\rlreadact{}{\datainit$}} (B)
(q0) edge node[above] {$\writeact{}{\exc}$} (A)
(B) edge node[below] {$\rlreadact{}{\datainit}$} (C)
(C) edge node[left, name = readc, overlay, yshift = 0.15cm] {$\rlreadact{}{\exc}$} (A)
(A) edge[bend left = 15] node[above, sloped] {$\rlreadact{}{\exa}$} (qf)
(qf) edge[bend left = 15] node[below, sloped, yshift = 0.08cm] {$\writeact{}{\exb}$} (A)
(C) edge node[below, sloped, bend right = 15] {$\rlreadact{}{\exb}$} (qf)
(C) edge[loop below] node[right, yshift = 0.2cm, xshift = 0.1cm] {$\writeact{}{\exa}$} ();   

\node[below = of readb, text = blue!50!white, yshift = 0.35cm] {$\rlreadact{}{\exc}$};
\node[below = of readc, text = red!50!white, yshift = 0.35cm] {$\writeact{}{\exa}$};
\end{tikzpicture}
\caption{An example of a protocol}
\label{fig:example_roundless_protocol}
\end{figure}
\end{example}

\subsection{Reachability Problems}
\label{subsec:rl_problems}

Our first problem of interest is the ""coverability problem"" (\COVER): 
\smallskip

\noindent\fbox{\begin{minipage}{.99\linewidth}
    \textsc{\COVER for roundless register protocols} \\
    {\bf Input}: A "roundless register protocol" 
    $\prot$, $\errorstate \in \states$\\
    {\bf Question}: Does there exist $\cconfig \in \creach{\cconfiginitset}$ such that $\state{\cconfig}(\errorstate)>0$?
  \end{minipage}} \smallskip

Note that, because the model is parameterized, a witness execution of \COVER may have an arbitrarily large number of processes. 
The dual is the ""safety problem"", the answer to which is yes when an error state cannot be covered regardless of the number of processes.
A similar problem is the ""target problem"" (\TARGET) where processes must synchronize at $\targetstate$ : \smallskip

\noindent\fbox{\begin{minipage}{.99\linewidth}
    \textsc{\TARGET for roundless register protocols} \\
    {\bf Input}: A "roundless register protocol" 
    $\prot$, $\targetstate \in \states$\\
    {\bf Question}: Does there exist $\cconfig \in \creach{\cconfiginitset}$ s.t. for all $q \ne \targetstate$, $\state{\cconfig}(q)=0$?
  \end{minipage}} \smallskip

\begin{remark}
\label{rem:target_esier_than_cover} \TARGET is harder than \COVER:
consider the reduction in which one adds a loop on $\errorstate$ writing a joker symbol which, from any state, may be read to reach $\errorstate$. 
\end{remark}

""Presence constraints"" are Boolean combinations (with $\land$, $\lor$ and $\neg$) of "atomic propositions" of the form ``$q \populated$'' with $q \in \states$, or of the form ``$\regvar \contains \asymb$'' with $\regvar \in \regset{}$ and $\asymb \in \dataalp$. A "presence constraint" is interpreted over a "configuration@@concrete" $\cconfig$ by interpreting ``$q \populated$'' as true if and only if $\state{\cconfig}(q) > 0$ and ``$\regvar \contains \asymb$'' as true if and only if $\data{\cconfig}{\regvar} = \asymb$. Note that "presence constraints" cannot refer to how many processes are on a given state.
We write $\cconfig \models \phi$ when "configuration@@concrete" $\cconfig$ satisfies "presence constraint" $\phi$. 

\begin{example}
If $\states = \set{q_1, q_2, q_3}$, $\rdim = 2$, $\dataalp = \{\datainit, a,b\}$ and $\phi :=  (q_1 \populated) \lor ((q_2 \populated) \land (\rlreg{1} \contains a))$ then $\confpair{q_1 \oplus q_3}{\datainit^{2}} \models \phi$, $\confpair{q_2^2}{(a,b)} \models \phi$ but $\confpair{q_2^2}{b^2} \not \models \phi$.  
\end{example}

The ""Presence Reachability Problem"" (\PRP) generalizes both \COVER and \TARGET. It corresponds to the cardinality reachability problem for cardinality constraints restricted to CC[$\geq 1, =0$] studied for broadcast protocols \cite{delzanno2012}. \smallskip

\noindent\fbox{\begin{minipage}{.99\linewidth}
    \textsc{\PRP for roundless register protocols} \\
    {\bf Input}: A roundless register protocol 
    $\prot$, a "presence constraint" $\phi$\\
    {\bf Question}: Does there exist $\cconfig \in \creach{\cconfiginitset}$ such that $\cconfig \models \phi$?
  \end{minipage}} \smallskip

The formula $\phi$ automatically makes \PRP \NP-hard, since one can encode the \SAT problem. Therefore, we also consider the ""DNF Presence Reachability Problem"" (\dnfPRP), in which $\phi$ is in ""disjunctive normal form"".
\COVER and \TARGET are special cases of \dnfPRP, with $\phi = (\errorstate \populated)$ for \COVER and $\phi = \bigland_{q \ne \targetstate} \neg (q \populated)$ for \TARGET. 

\begin{example}
\label{example:roundless_reachabilities}
Consider again the protocol $\prot$ defined in Figure~\ref{fig:example_roundless_protocol}. $(\prot, q_f)$ is a positive instance of \COVER, as proved in Example~\ref{example:roundless_protocol}. Let $\prot_{\mathsf{blue}}$ be the protocol obtained from $\prot$ by changing to $\rlreadact{}{\exc}$ the label of the transition from $q_0$ to $B$ (in blue in Figure~\ref{fig:example_roundless_protocol}). $(\prot_{\mathsf{blue}}, q_f)$ is a negative instance of \COVER. In fact, a process can only get to $B$ if $\exc$ has been written to the register, and then $\datainit$ can no longer be read so no process may go to state $C$, $\exa$ cannot be written and no process may go from $A$ to $\errorstate$. 

$(\prot, q_f)$ is a negative instance of \TARGET: to leave $A$, one needs to read $\exa$, hence must have a process on state $C$, and to leave $C$, one must read $\exb$ which would force us to send a process to $A$. Let $\prot_{\mathsf{red}}$ be the protocol obtained from $\prot$ by changing to $\writeact{}{\exa}$ the label of the transition from $C$ to $A$ (in red in Figure~\ref{fig:example_roundless_protocol}). $(\prot_{\mathsf{red}}, q_f)$ is a positive instance of \TARGET:
$\confpair{q_0^2}{\datainit} \step{(q_0, \rlreadact{}{\datainit}, B)} \confpair{q_0 \oplus B}{\datainit} \step{(B, \rlreadact{}{\datainit}, C)} \confpair{q_0 \oplus C}{\datainit} \step{(q_0, \writeact{}{\exc}, A)} \\\confpair{A \oplus C}{\exc} \step{(C, \writeact{}{\exa}, A)} \confpair{A^2}{\exa} \step{(A, \rlreadact{}{\exa}, \errorstate)} \confpair{A \oplus \targetstate}{\exa} \step{(A, \rlreadact{}{\exa}, \errorstate)} \confpair{\targetstate^2}{\exa}$.
\smallskip


\noindent Let $\phi := \neg (C \populated) \land ((\onlyreg \contains \exa) \lor [(\onlyreg \contains \exb) \land \neg(A \populated)])$. $\phi$~is a "presence constraint" and $(\prot, \phi)$ is a negative instance of \PRP. Indeed, if $\exa$ is in the register, then $C$ must be populated and if $\exb$ is in the register, then $A$ must be populated.
\end{example}

\subsection{Abstract Semantics}
\label{subsec:abstract_semantics_roundless}

In this subsection, we define an abstraction of the semantics that is sound and complete with respect to \PRP. The intuition of this abstraction is that the exact number of processes in a given state is not relevant. Indeed, "register protocols", thanks to non-atomicity, enjoy a classical monotonicity property named copycat property. 

\begin{restatable}[Copycat]{lemma}{copycat}
\label{lem:copycat_roundless}
Consider $\cconfig_1$, $\cconfig_2$, $q_2$ such that $\cconfig_1 \step{*} \cconfig_2$, $q_2 \in \supp{\cconfig_2}$.
There exists $q_1 \in \supp{\cconfig_1}$ s.t. $\confpair{\state{\cconfig_1} \oplus q_1}{\datafun{\cconfig_1}} \step{*} \confpair{\state{\cconfig_2} \oplus q_2}{\datafun{\cconfig_2}}$. 
\end{restatable}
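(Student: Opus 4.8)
The plan is to proceed by induction on the length $l$ of a witnessing execution $\cconfig_1 = \cconfig^{(0)} \step{} \cconfig^{(1)} \step{} \cdots \step{} \cconfig^{(l)} = \cconfig_2$. The guiding intuition is that, thanks to non-atomicity, a single extra process can \emph{shadow} some process that ends in $q_2$, replaying each of its actions: a read is replayable because the register contents are identical at that moment, and a write is replayable because writing the same symbol a second time leaves the register unchanged. Rather than manage a global shadow across the whole interleaving, I peel off the last step and duplicate the move only there, letting the induction hypothesis provide the spare process along the prefix up to $\cconfig^{(l-1)}$. The base case $l=0$ is immediate: then $\cconfig_1 = \cconfig_2$ and $q_2 \in \supp{\cconfig_2} = \supp{\cconfig_1}$, so $q_1 := q_2$ works via the empty execution.

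For the inductive step, write the last transition as $\atrans_l = (q,a,q')$, so that $\state{\cconfig_2} = (\state{\cconfig^{(l-1)}} \ominus q) \oplus q'$ and $q \in \supp{\cconfig^{(l-1)}}$, and distinguish two cases according to whether the tracked state $q_2$ equals the target $q'$ of this last step. If $q_2 \neq q'$, then $q_2 \in \supp{\cconfig^{(l-1)}}$ (removing one copy of $q$ and adding one copy of $q' \neq q_2$ cannot exhaust $q_2$). I apply the induction hypothesis to the prefix $\cconfig_1 \step{*} \cconfig^{(l-1)}$ with tracked state $q_2$, obtaining $q_1 \in \supp{\cconfig_1}$ together with an execution reaching $\confpair{\state{\cconfig^{(l-1)}} \oplus q_2}{\datafun{\cconfig^{(l-1)}}}$. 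Since $q$ is still present and the register valuation is $\datafun{\cconfig^{(l-1)}}$, firing $\atrans_l$ once behaves exactly as in the original execution, leaves the spare copy of $q_2$ untouched, and lands in $\confpair{\state{\cconfig_2} \oplus q_2}{\datafun{\cconfig_2}}$, as required.

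The interesting case is $q_2 = q'$, where the copycat duplication genuinely happens. Here I apply the induction hypothesis to $\cconfig_1 \step{*} \cconfig^{(l-1)}$ with tracked state $q$ (which lies in $\supp{\cconfig^{(l-1)}}$ since the last step consumes a $q$-process), obtaining $q_1$ and an execution reaching $\confpair{\state{\cconfig^{(l-1)}} \oplus q}{\datafun{\cconfig^{(l-1)}}}$, i.e. a configuration with one spare process on $q$ and registers valued $\datafun{\cconfig^{(l-1)}}$. I then fire $\atrans_l$ twice in a row. The first firing is the original move, bringing the registers to $\datafun{\cconfig_2}$; the second firing is still enabled because a $q$-process remains present and, register-side, if $a$ is a read it leaves the registers unchanged while if $a = \writeact{}{\asymb}$ the register already holds $\asymb$, so re-writing $\asymb$ is harmless. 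Both firings move a copy from $q$ to $q' = q_2$, so the net multiset effect is $\state{\cconfig^{(l-1)}} \oplus q \mapsto \state{\cconfig^{(l-1)}} \oplus q' \mapsto \state{\cconfig_2} \oplus q_2$, reaching $\confpair{\state{\cconfig_2} \oplus q_2}{\datafun{\cconfig_2}}$.

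I expect the only real subtlety to be this last case: one must verify that the two consecutive firings of $\atrans_l$ are both enabled and produce the correct register valuation. This is exactly where non-atomicity is essential — it licenses scheduling the shadow's identical action immediately after the original — together with the idempotence of writing a fixed symbol. Everything else reduces to routine multiset and register bookkeeping, which I would check but not belabor.
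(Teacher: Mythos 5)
Your proof is correct and follows essentially the same route as the paper's: induction on the length of the execution, peeling off the last step, splitting on whether the tracked state is the destination of that step, and in the duplication case firing the last transition twice, justified by reads leaving registers unchanged and writes being idempotent. The register and multiset bookkeeping you sketch checks out, so nothing is missing.
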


An ""abstract configuration"" is a pair $\aconfig = \confpair{\state{\aconfig}}{\datafun{\aconfig}} \in 2^{\states} \times \dataalp^{\regset{}}$ such that $\state{\aconfig} \ne \emptyset$. 
The set of ""initial configurations@@abstract"" is $\aconfiginitset := \set{\confpair{S}{\datainit^{\rdim}} \mid S \subseteq \initialstates}$. 
Given a "concrete configuration" $\cconfig$, the "projection" $\abstractproj{\cconfig}$ is the "abstract configuration" $\confpair{\supp{\cconfig}}{\datafun{\cconfig}}$. Let $\aconfigs:=2^{\states} \times \dataalp^{\regset{}}$ denote the set of abstract configurations.
For $\aconfig, \aconfig' \in \aconfigs$, $\aconfig'$ is the ""successor@@abstract"" of $\aconfig$ when there exists $\atrans = (q, a,q') \in \transitions$ such that $q \in \state{\cconfig}$,
either $\state{\cconfig'} = \state{\cconfig} \cup \set{q'}$ 
or $\state{\cconfig'} = (\state{\cconfig} \setminus \set{q}) \cup \set{q'}$,
and: if $a = \rlreadact{\regid}{\asymb}$ then $\data{\cconfig}{\rlreg{\regid}} = \asymb$ and $\datafun{\aconfig} = \datafun{\aconfig'}$, and if $a = \writeact{\regid}{\asymb}$ then  $\data{\aconfig'}{\rlreg{\regid}} = \asymb$ and for all $\regid' \ne \regid$, $\data{\aconfig'}{\rlreg{\regid'}} = \data{\aconfig}{\rlreg{\regid'}}$. 
Again, we denote such a step by $\aconfig \step{\atrans} \aconfig'$ or $\aconfig \step{} \aconfig'$. Note that one could equivalently define $\aconfig \step{\atrans} \aconfig'$ by: $
\aconfig \step{\atrans} \aconfig' \Longleftrightarrow \exists \cconfig, \cconfig' \in \cconfigs, \, \cconfig \step{\atrans} \cconfig' \text{ and } \abstractproj{\cconfig} = \aconfig, \abstractproj{\cconfig'} = \aconfig'$. 
This notion of abstraction is classical in parameterized verification of systems with identical anonymous agents that enjoy monotonicity properties. 
Note, however, that this semantics is non-deterministic: one could have $\aconfig'' \ne \aconfig'$ such that $\aconfig \step{\atrans} \aconfig'$ and $\aconfig \step{\atrans} \aconfig''$. This alternative corresponds to whether all processes in $q$ take the transition ($\state{\cconfig'} = (\state{\cconfig} \setminus \set{q}) \cup \set{q'}$) or only some ($\state{\cconfig'} = \state{\cconfig} \cup \set{q'}$).
We define ""abstract executions"" similarly to concrete ones, and denote them using $\exec$. We also define the ""reachability set"" $\areach{A}$ and the notion of ""coverability"" as in the concrete case.  
This  abstraction is sound and complete for \PRP:

\begin{restatable}[Soundness and completeness of the abstraction]{proposition}{soundcompleteroundless}
  \label{prop:soundcomplete_roundless}
For all 
$S \subseteq \states$, $\dvec \in \dataalp^{\regset{}}$: 
\[
    (\exists \cconfig \in \creach{\cconfiginitset{}}:
    \supp{\cconfig} {=} S, \, \datafun{\cconfig} {=} \dvec)
     \  \Longleftrightarrow \\   (\exists \aconfig \in \areach{\aconfiginitset{}}:\
    \state{\aconfig} {=} S, \, \datafun{\aconfig} {=} \dvec). 
  \]
\end{restatable}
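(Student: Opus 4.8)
The plan is to prove the two directions separately, using the alternative characterization of the abstract step relation noted just before the statement, namely that $\aconfig \step{\atrans} \aconfig'$ holds if and only if there exist concrete configurations $\cconfig \step{\atrans} \cconfig'$ projecting to $\aconfig$ and $\aconfig'$ respectively.

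\textbf{Soundness ($\Leftarrow$).} I would prove by induction on the length of the abstract execution that every reachable abstract configuration is the projection of some reachable concrete configuration. The base case is immediate: an initial abstract configuration $\confpair{S}{\datainit^{\rdim}}$ with $S \subseteq \initialstates$ is the projection of the concrete initial configuration $\confpair{\bigoplus_{q \in S} q}{\datainit^{\rdim}}$, placing exactly one process on each state of $S$. For the inductive step, suppose $\aconfig \step{\atrans} \aconfig'$ and, by induction hypothesis, $\aconfig = \abstractproj{\cconfig}$ for some reachable $\cconfig$. The subtlety here is that the concrete witness given by the alternative characterization need not be reachable, nor need it coincide with $\cconfig$. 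I would instead argue directly from the definition: since $\atrans = (q,a,q')$ is firable from $\aconfig$, we have $q \in \state{\aconfig} = \supp{\cconfig}$, so $\cconfig$ has at least one process on $q$ and the register values match the read guard; hence $\atrans$ is firable from $\cconfig$. The only delicate point is the ``add versus move'' non-determinism of the abstract semantics: if the abstract step keeps $q$ in the support ($\state{\aconfig'} = \state{\aconfig} \cup \{q'\}$) I must ensure the concrete successor still has a process on $q$. This is exactly where the \emph{copycat} lemma (Lemma~\ref{lem:copycat_roundless}) is needed: I first duplicate a process on $q$ so that firing $\atrans$ leaves $q$ populated, obtaining a concrete successor projecting to $\aconfig'$.

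\textbf{Completeness ($\Rightarrow$).} For the converse, I would show by induction on the length of the concrete execution that $\abstractproj{\cconfig}$ is reachable in the abstract semantics whenever $\cconfig$ is reachable concretely. The base case again matches initial configurations under projection. For the inductive step, given a concrete step $\cconfig \step{\atrans} \cconfig'$, I apply the alternative characterization in the forward direction: this concrete step immediately witnesses $\abstractproj{\cconfig} \step{\atrans} \abstractproj{\cconfig'}$, so reachability transfers. One should verify that projection indeed sends this particular step to a legal abstract step, checking the two cases depending on whether the process leaving $q$ was the last one on $q$ (giving a ``move'') or not (giving an ``add''); both are covered by the abstract successor definition.

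The main obstacle is the soundness direction, and specifically reconciling the abstract non-determinism with concrete monotonicity. The statement of the proposition quantifies over a \emph{fixed} target support $S$ and register valuation $\dvec$, so I cannot simply ignore whether the abstract execution chose to retain $q$ in the support: I must realize the exact support $S$ concretely. The copycat lemma is the key tool, letting me maintain the invariant ``$\state{\aconfig} = \supp{\cconfig}$'' throughout by adding a duplicate process precisely when the abstract step elects to keep the source state populated, and refraining otherwise. Carefully threading this invariant through the induction, and confirming that the register component is preserved identically by both semantics (trivial, since reads and writes act on registers identically in the concrete and abstract definitions), is the bulk of the work; everything else is a routine unwinding of the definitions.
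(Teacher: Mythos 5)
Your proposal is correct and follows essentially the same route as the paper: the paper also splits the equivalence into a concrete-to-abstract direction (its Lemma~\ref{lem:concrete_to_abstract}, an induction on the concrete execution that picks the deserting or non-deserting abstract step according to whether the source state is emptied) and an abstract-to-concrete direction (its Lemma~\ref{lem:abstract_to_concrete}, an induction on the abstract execution that invokes the copycat lemma to add a process on the source state exactly when the abstract step is non-deserting). You correctly identify the copycat lemma as the key tool and the add-versus-move non-determinism as the only delicate point, so no further comment is needed.
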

The intuition of the proof is the following: any concrete configuration can easily be lifted into an abstract one. Conversely, any abstract execution may be simulated in the concrete semantics for a sufficiently large number of processes by using the copycat property.

Given a "presence constraint" $\phi$ and $\aconfig \in \aconfigs$, we define whether $\aconfig$ satisfies $\phi$, written $\aconfig \models \phi$, in a natural way. 
Given a concrete configuration $\cconfig$, one has $\cconfig \models \phi$ if and only if $\abstractproj{\cconfig} \models \phi$. Indeed, $\cconfig$ and $\abstractproj{\cconfig}$ have the same "populated states" and "register values".  Therefore, there exists $\cconfig \in \creach{\cconfiginitset}$ such that $\cconfig \models \phi$ if and only if there exists $\aconfig \in \areach{\aconfiginitset}$ such that $\aconfig \models \phi$: one can consider \PRP directly in the abstract semantics.




\section{Complexity Results for Roundless Register Protocols}
\label{sec:roundless_results}
In this section, we provide complexity results for the presence reachability problems defined above in the general case and in some restricted cases. Throughout the rest of the section, all configurations and executions are implicitly "abstract". 

\subsection{\NP-Completeness of the General Case}
\label{subsec:roundless_np_completeness}
First, all problems defined in the previous section are \NP-complete. 

\begin{restatable}{proposition}{prpnpc}
\label{prop:prp_npc_roundless}
\COVER, \TARGET, \dnfPRP and \PRP for roundless register protocols are all \NP-complete.
\end{restatable}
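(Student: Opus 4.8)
\textbf{\NP-hardness.} Since \COVER, \TARGET, and \dnfPRP are special cases of \PRP (as noted in the excerpt), it suffices to establish \NP-hardness for the most restricted problems and \NP-membership for the most general one. The excerpt already observes that \PRP is \NP-hard via \SAT because $\phi$ may be an arbitrary Boolean formula; but this does not cover \COVER, \TARGET, or \dnfPRP, where $\phi$ is structurally simple. So the interesting direction is to show that \COVER (equivalently \dnfPRP, since \COVER reduces to it) is already \NP-hard. First I would reduce from a standard \NP-complete problem such as \SAT or a restricted reachability problem, building a protocol in which a reachable configuration's register values and populated states encode a satisfying assignment. The natural idea: have processes nondeterministically ``guess'' truth values by writing to registers, with the protocol structured so that $\errorstate$ is coverable exactly when the guessed assignment satisfies all clauses; the copycat property (Lemma~\ref{lem:copycat_roundless}) lets arbitrarily many processes participate, which is what makes it possible to have one ``verifier'' process per clause reading the written values.

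\textbf{\NP-membership.} By Proposition~\ref{prop:soundcomplete_roundless}, it suffices to work in the abstract semantics, so I would look for a polynomial-size certificate for the existence of a reaching abstract execution $\aconfig_0 \step{*} \aconfig$ with $\aconfig \models \phi$. The key structural fact is that an abstract configuration lives in $2^{\states} \times \dataalp^{\regset{}}$, so the set of reachable abstract configurations is finite but of exponential size, meaning a naive witness execution could be exponentially long. The crucial observation I would exploit is \emph{monotonicity of the state-support}: along an abstract execution one may always choose the ``add'' branch ($\state{\aconfig'} = \state{\aconfig} \cup \{q'\}$) rather than the ``move'' branch, so the support of populated states only grows. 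Thus the support changes at most $|\states|$ times. Between two support-changes only the register contents evolve, and since there are only $\rdim$ registers, I would argue one can summarize the execution by a polynomial-length sequence of ``phases,'' each recording which transitions fire and which new states/register-values become available.

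\textbf{The certificate.} Concretely, I expect the \NP{} certificate to be a sequence of abstract configurations (or equivalently a set of transitions together with an ordering) of length polynomial in $|\prot|$, from which one reconstructs a valid abstract execution ending in some $\aconfig \models \phi$. The guess consists of the final support $S$, the final register valuation $\dvec$, and a polynomially bounded schedule justifying reachability; verification checks in polynomial time that each step is a legal abstract successor and that the final configuration satisfies $\phi$ (which is trivial to evaluate once $S$ and $\dvec$ are fixed, even for arbitrary Boolean $\phi$). The main obstacle is establishing that such a polynomial-length witness always exists, i.e.\ that abstract reachability admits short certificates. I expect this to follow from a \emph{saturation} argument: once a symbol has been written to a register, it remains available to be read as long as that register is not overwritten, and because the support of populated states is monotone under the ``add'' convention, no progress is ever lost by compressing repeated or redundant steps. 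Formalizing this compression---bounding the number of distinct register-value configurations that genuinely need to be visited by roughly $|\dataalp| \cdot \rdim$ per support increment---is the technical heart of the membership proof and the step I would spend the most care on.
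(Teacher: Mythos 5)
Your hardness sketch is precisely the construction that the paper warns against. Having processes ``nondeterministically guess truth values by writing to registers'' works when a distinguished leader makes the guesses, but in the leaderless parameterized setting nothing enforces consistency: by the very copycat property you invoke, if some process can write the value encoding ``$x$ true'' then another process can write ``$x$ false'', and the verifier for one clause may then read a different assignment than the verifier for another, so $\errorstate$ becomes coverable even for unsatisfiable formulas. The paper's reduction resolves this by exploiting register \emph{initialization}: there are two registers $\regsat{l}$ and $\regsat{\neg l}$ per variable, only the symbol $\sattrue$ is ever written, and a literal $l$ counts as true only if $\regsat{l}$ holds $\sattrue$ while $\regsat{\neg l}$ still holds $\datainit$. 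Since $\datainit$ can never be rewritten, truth is determined by the order of \emph{first writes} to the registers, a single global order that yields one consistent valuation across all clause gadgets. Without some such consistency mechanism your reduction does not go through; this is the essential idea your proposal is missing on the hardness side.

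Your membership argument also has a flaw. The normalization ``always choose the add branch, so the support of populated states only grows'' is sound for \COVER{} but unsound for \TARGET, \dnfPRP{} and \PRP: there the constraint contains negative literals $\neg(q \populated)$, so the witness execution genuinely must end with some states empty, and turning every step into a non-deserting one changes the final configuration and can destroy satisfaction of $\phi$. The paper instead keeps deserting steps and normalizes so that a state, once deserted, is never repopulated; this bounds the number of deserting steps by $|Q|$, in addition to the at most $|Q|$ steps that populate a new state, hence at most $2|Q|$ reads and at most $2|Q|$ writes whose value is actually read, giving a witness of length at most $4|Q|$ that can be guessed outright. Your phase-based compression would need an analogous accounting of deserting steps to handle the general case. (The overall architecture --- work in the abstract semantics, reduce hardness to \COVER{} and membership to \PRP, and guess a short witness --- does match the paper's.)
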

\begin{proof}
First, we prove that all four problems are in \NP. It suffices to prove it for \PRP, as the three other problems reduce to it.

Let $\exec: \aconfig_0 \step{*} \aconfig$ an abstract execution, we simply prove the existence of $\exec': \aconfig_0 \step{*} \aconfig$ of length at most $4 |Q|$. To obtain $\exec'$ from $\exec$, we iteratively:
\begin{itemize}
\item remove any read step that is "non-deserting" and does not cover a new location,
\item remove any write step that is "non-deserting", does not "populate" a new state and whose written symbol is never read,
\item make "non-deserting" any "deserting" step whose "source" state is  "populated" again later in $\exec$.
\end{itemize}
In $\exec'$, at most $|Q|$ steps populate a new state and at most $|Q|$ steps are "deserting". This implies that there are at most $2|Q|$ read steps, therefore, at most $2|Q|$ write steps whose written value is actually read. In total, this bounds the number of steps by $4 |Q|$. In particular, for \PRP, we can look for an execution of length less than $4 |Q|$ which can be guessed in polynomial time.

\begin{figure}[tb]
   \centering
\resizebox{0.95\linewidth}{!}{
    \begin{tikzpicture}[node distance = 1.5cm, auto, y = 0.6cm, x = 0.7cm]
\tikzstyle{every node}=[font=\small]
\tikzstyle{every state} = [minimum size = 0.8cm, inner sep = 0.05cm]

\node (q0) [state] at (-0.25,0) {$q_0$}; 
\node(testc1) [state] at (1.5,0) {$C_1?$};
\node (c1ok) [state] at (6,0) {$C_2?$};
\node (susp) at (8.125,0) {{\large \dots}};
\node (testcm) [state] at (10.25,0) {$C_m?$};
\node (qf) [state] at (14.75,0) {$\errorstate$};
\node (int1) [state, rectangle] at (3.75,2) {$\testsat{l_{1,1}}$}; 
\node (int2) [state, rectangle] at (3.75,0) {$\testsat{l_{1,2}}$}; 
\node (int3) [state, rectangle] at (3.75,-2) {$\testsat{l_{1,3}}$}; 
\node (int1bis) [state, rectangle] at (12.5,2) {$\testsat{l_{m,1}}$}; 
\node (int2bis) [state, rectangle] at (12.5,0) {$\testsat{l_{m,2}}$}; 
\node (int3bis) [state, rectangle] at (12.5,-2) {$\testsat{l_{m,3}}$}; 
\node (subprot) [state, rectangle, minimum size = 0.8cm, inner sep = 0.1cm] at (1, -4) {$\testsat{l}$};
\node (assign) at (2.3,-4) {$:=$};
\node (test1) [state] at (4,-4) {};
\node (test2) [state] at (8,-4) {};
\node (test3) [state] at (12,-4) {};

\path[-latex', thick]
    (q0.180) edge[latex'-] +(180:4mm)
    (q0) edge[loop above] node[align = center]{{\small $\forall j \in \nset{n}$} \\ {\small $\writeact{\regsat{x_j}}{\sattrue}$}} ()
    (q0) edge[loop below] node[align = center]{{\small $\writeact{\regsat{\neg x_j}}{\sattrue}$}\\{\small $\forall j \in \nset{n}$}} ()
    (q0) edge (testc1)
    (testc1) edge (int1)
    (testc1) edge (int2)
    (testc1) edge (int3)
    (testcm) edge (int1bis)
    (testcm) edge (int2bis)
    (testcm) edge (int3bis)
    (int1bis) edge (qf)
    (int2bis) edge (qf)
    (int3bis) edge (qf)
    (int1) edge (c1ok)
    (int2) edge (c1ok)
    (int3) edge (c1ok)
    (test1.180) edge[latex'-] +(180:4mm)
    (test3.0) edge[-latex'] +(0:4mm)   
    (test1) edge node[above] {{\small $\rlreadact{\regsat{l}}{\sattrue}$}} (test2)
    (test2) edge node[above] {{\small $\rlreadact{\regsat{\neg l}}{\datainit}$}} (test3);

\path[dashed] 
    (c1ok) edge (6 + 0.5*3, 0.5*3)
    (c1ok) edge (6+0.5*3, 0)
    (c1ok) edge (6 + 0.5*3, -0.5*3);
\path[dashed, -latex']
    (10.25-0.5*3, 0.5*3) edge (testcm)
    (10.25-0.5*3, 0) edge (testcm)
    (10.25-0.5*3, -0.5*3) edge (testcm);
\end{tikzpicture}
}    
\caption{The protocol $\protsat{\phi}$ for \NP-hardness of \COVER.}
\label{fig:prot_3SAT_to_cover}
\end{figure}
We now prove \NP-hardness of \COVER, as it reduces to the three other problems.


The proof is by a reduction from 3-\SAT. Consider a 3-CNF formula $\phi = \bigland_{i=1}^{m} l_{i,1} \lor l_{i,2} \lor l_{i,3}$ over $n$ variables $x_1, \dots, x_n$
where, for all $i \in \nset{m}$, for all $k \in \nset{3}$, $l_{i,k} \in \set{x_j, \neg x_j \mid j \in \nset{n}}$.  
We define a "roundless register protocol" $\protsat{\phi}$ with a distinguished state $\errorstate$ which is "coverable" if and only if $\phi$ is satisfiable. In $\protsat{\phi}$, one has $\dataalp=\set{\datainit,\sattrue}$ and $\rdim = 2n$, there are two registers for each variable $x_i$, $\regsat{x_i}$ and $\regsat{\neg x_i}$. The protocol is represented on Figure~\ref{fig:prot_3SAT_to_cover}. 

While any register may be set to $\sattrue$ thanks to the loops on $q_0$, a register set to $\sattrue$ can never be set back to $\datainit$. $l$ is considered true if $\regsat{l}$ is set to $\sattrue$ while $\regsat{\neg l}$ still has value $\datainit$.

Suppose that the instance of 3-\SAT is positive, \emph{i.e.}, $\phi$ is satisfiable by some assignment $\nu$. Consider an execution that writes $\sattrue$ exactly to all $\regsat{l}$ with $l$ true in $\nu$. For each clause, one of the three literals is true in $\nu$. Therefore the execution may cover $C_i?$ for all $i$ so it may cover $\errorstate$ and the instance of \COVER is positive.
Conversely, if the instance \COVER is positive, there exists an execution $\exec : \aconfig_0 \step{*} \aconfig$ with $\aconfig_0 \in \aconfiginitset$ and $\errorstate \in \state{\aconfig}$. Consider $\nu$ that assigns to each variable $x$ value true if $\regsat{x}$ is written before $\regsat{\neg x}$ in $\exec$ and false otherwise. Given a litteral $l$, $\exec$ may only go through  $\testsat{l}$ if $\nu(l)$ is true; because $\exec$ covers $q_f$, this proves that $\nu \models \phi$.   
\end{proof}

\begin{remark}
In \cite{EGM-jacm16}, the authors prove \NP-completeness of \COVER in a similar model, but with a leader: in the \NP-hardness reduction, the leader make non-determinstic decisions about the values of the variable. This argument does not hold in the leaderless case. \end{remark}

\subsection{Interesting Restrictions}
\label{subsec:restrictions}
Although all the problems defined above are \NP-complete, they are sometimes tractable under appropriate restrictions on the protocols.
We will consider two restrictions on the protocols.  The first one is having $\rdim =1$, \emph{i.e.}, a single register.
The second restriction is the ""uninitialized case"" where processes are not allowed to read the initial value $\datainit$ from the registers. Formally, a protocol $\prot$ is ""uninitialized"" if its set of transitions $\transitions$ does not contain an action reading symbol $\datainit$: in "uninitialized" protocols, it is structurally impossible to read from an unwritten register. One might object that forbidding transitions that read $\datainit$ contradicts the intuition that, when a process reads from a register, it does not know whether the value is initial or not; one could settle the issue by considering that reading $\datainit$ sends processes to a sink state. 
The uninitialized setting tends to yield better complexity than the general, initialized case, see for example \cite[Section 7]{tso}.  

Of course, for \PRP, the formula itself always makes the problem \NP-hard.
\begin{proposition}
\label{prop:prp_np_hard_restrictions}
"\PRP" for "roundless register protocols" is \NP-hard even with $\rdim = 1$ and the register "uninitialized".
\end{proposition}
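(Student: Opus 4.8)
The plan is to reuse the observation already made for the general case: since the presence constraint $\phi$ in \PRP may be an arbitrary Boolean combination of atomic propositions, the entire Boolean structure of a \SAT instance can be delegated to $\phi$, leaving the protocol itself trivial. This is what lets both restrictions hold at once, because the trivial protocol we build will use a single register that is never read, hence it is automatically "uninitialized" and has $\rdim = 1$.

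Concretely, given a propositional formula $\psi$ over variables $x_1, \dots, x_n$, I would build a "roundless register protocol" $\prot$ with $\rdim = 1$, $\dataalp = \set{\datainit, \asymb}$ for a fixed symbol $\asymb \ne \datainit$, initial state $q_0$, and one extra state $P_i$ per variable, where $P_i$ is meant to encode ``$x_i$ is assigned true''. The only transitions are, for each $i \in \nset{n}$, the write move $q_0 \step{} P_i$ labelled $\writeact{1}{\asymb}$. The crucial feature is that a write action is unconditionally enabled, as its firing does not depend on the current register content; consequently no transition ever needs to read, so $\prot$ is "uninitialized" and the single register is irrelevant to everything. I then set $\phi$ to be the formula obtained from $\psi$ by the substitution replacing each positive literal $x_i$ by the "atomic proposition" $(P_i \populated)$ and each negative literal $\neg x_i$ by $\neg (P_i \populated)$. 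Both $\prot$ and $\phi$ are clearly computable in polynomial time.

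Working directly in the "abstract semantics", which is sound and complete for \PRP by Proposition~\ref{prop:soundcomplete_roundless}, I would then prove the two directions. For soundness, from a satisfying assignment $\nu$ I fire the write $q_0 \step{} P_i$ for each $i$ with $\nu(x_i)$ true, choosing each time the non-"deserting" option that keeps $q_0$ in the support; this reaches from $\aconfiginitset$ an "abstract configuration" $\aconfig$ with $\state{\aconfig} = \set{q_0} \cup \set{P_i \mid \nu(x_i) \text{ true}}$, and by construction $\aconfig \models \phi$. For completeness, from any reachable $\aconfig \models \phi$ I read off the assignment $\nu(x_i) := (P_i \in \state{\aconfig})$; since $\phi$ mentions only the atoms $P_i \populated$, its truth value on $\aconfig$ coincides with the truth value of $\psi$ under $\nu$, so $\psi$ is satisfiable. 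Combining the two directions with the \NP{} membership of Proposition~\ref{prop:prp_npc_roundless} gives \NP-hardness (indeed \NP-completeness) in the restricted setting.

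I do not expect a genuine obstacle. The only subtlety is the mismatch between the protocol's \emph{independent} presence atoms and the \emph{logical dependence} of a literal on its negation in \SAT; this is exactly why I use a single state $P_i$ per variable rather than two states for $x_i$ and $\neg x_i$, so that $(P_i \populated)$ and its negation faithfully play the roles of $x_i$ and $\neg x_i$. Everything else, in particular the reachability of the desired support, is immediate from the monotone abstract semantics.
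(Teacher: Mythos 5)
Your reduction is correct and is exactly the argument the paper has in mind: the paper gives no detailed proof for this proposition, only the remark that ``the formula $\phi$ automatically makes \PRP{} \NP-hard, since one can encode the \SAT{} problem,'' and your construction (a trivial protocol with one never-read register and unconditional writes populating one state $P_i$ per variable, with the Boolean structure of the \SAT{} instance pushed entirely into the presence constraint) is a faithful instantiation of that sketch. Both directions of your equivalence go through as you describe, so there is nothing to add.
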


\subsection{Tractability of \COVER and \dnfPRP under Restrictions}
\label{subsec:tractability_cover_dnfprp}

In this subsection, we prove that \COVER is solvable in \PTIME when the "protocol@@roundless" is "uninitialized" or when $\rdim$ is fixed and that \dnfPRP is solvable is \PTIME when $\rdim = 1$. 


In \cite[Theorem 9.2]{EGM-jacm16}, "uninitialized" \COVER is proved to be \PTIME-complete; their approach, based on languages, is quite different from the one presented here. Our approach, similar to the one presented in \cite[Algorithm 1]{delzanno2012} in the setting of reconfigurable broadcast networks, is to compute the set of coverable states using a simple ""saturation"" technique, a fixed-point computation over the set of states. 

When registers are initialized, the saturation technique breaks down as it may be that some states are coverable but not in the same execution, as they require registers to lose their initial value in different orders (see the notion of first-write order developed in \cite{myicalppaper} for more development on this in a round-based setting). 
However, in the initialized case with a fixed number of registers, one can iterate over every such order and \COVER is tractable as well.

\begin{restatable}{proposition}{coverrestrictions}
\label{prop:ptime_cover_restrictions}
\COVER for "roundless register protocols" is \PTIME{}-complete either when the registers are "uninitialized" or when $\rdim$ is fixed.
\end{restatable}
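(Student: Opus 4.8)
The plan is to show membership in \PTIME{} for each restriction by a least--fixed--point \emph{saturation} computation, and to obtain \PTIME{}-hardness from a single reduction. For hardness I would reduce from the monotone circuit value problem, using a protocol with a single register ($\rdim = 1$) that never reads $\datainit$: each gate $g$ gets its own symbol $a_g \in \dataalp$, a true input gate has a transition from an initial state writing $a_g$, an $\vee$-gate $g = g_1 \vee g_2$ is a state that may read $a_{g_1}$ or $a_{g_2}$ before writing $a_g$, and an $\wedge$-gate $g = g_1 \wedge g_2$ reads $a_{g_1}$ and then $a_{g_2}$ through an intermediate state before writing $a_g$. Since written values are produced by already--covered states and there are arbitrarily many processes (copycat, \cref{lem:copycat_roundless}), one may always re--provide any needed symbol, so the state writing the output symbol, taken as $\errorstate$, is coverable iff the circuit evaluates to true. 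As this protocol never reads $\datainit$, the reduction simultaneously proves \PTIME{}-hardness for the \emph{uninitialized} case and for the fixed-$\rdim$ case.

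For the \emph{uninitialized} membership, I would compute two growing sets: a set $\mathrm{Cov} \subseteq \states$ of coverable states, initialized to $\initialstates$, and a set $W \subseteq \regset{} \times \dataalp$ of \emph{writable pairs}. The saturation rules are: if $q \in \mathrm{Cov}$ and $(q, \writeact{\regid}{\asymb}, q') \in \transitions$ then add $q'$ to $\mathrm{Cov}$ and $(\rlreg{\regid}, \asymb)$ to $W$; and if $q \in \mathrm{Cov}$, $(q, \rlreadact{\regid}{\asymb}, q') \in \transitions$ and $(\rlreg{\regid}, \asymb) \in W$ then add $q'$ to $\mathrm{Cov}$. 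This terminates in polynomially many rounds. Completeness (every coverable state lies in $\mathrm{Cov}$) follows by induction on the length of an abstract execution: a newly covered state comes either from a write, or from a read of some $\asymb \neq \datainit$ (reads of $\datainit$ being forbidden), and such an $\asymb$ must have been written earlier by a covered state, so $(\rlreg{\regid},\asymb) \in W$. Soundness relies on the monotonicity of the abstract semantics: using the non--deserting form of each step ($\state{\aconfig'} = \state{\aconfig} \cup \{q'\}$) the support only grows, so covered states remain available forever. I can thus realize $\mathrm{Cov}$ in a single execution following the order in which elements enter the saturation: to cover $q'$ by reading $\asymb$ from $\rlreg{\regid}$, I first fire the write of $\asymb$ (its source state being already present) and then the read; distinct reads never conflict since each merely needs the register set momentarily, and the register can be reset between them.

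For the fixed-$\rdim$ membership, this saturation fails because reading $\datainit$ is now allowed but disappears as soon as a register is first written, and different targets may require registers to lose their initial value in different orders. Since $\rdim$ is a constant, I would enumerate all \emph{first-write orders}, i.e. all orderings of all subsets of $\regset{}$ (at most $e \cdot \rdim!$ of them, a constant), and run a staged saturation for each. Fixing an order $\rlreg{\regid_1}, \dots, \rlreg{\regid_j}$, the computation proceeds in phases $0, \dots, j$: in phase $i$, registers $\rlreg{\regid_1}, \dots, \rlreg{\regid_i}$ have already been first written, so $\datainit$ may no longer be read from them but any symbol recorded for them in $W$ may, whereas every other register still holds $\datainit$ and allows a $\datainit$-read but no other read and no write; the move from phase $i$ to $i{+}1$ is triggered by the first write to $\rlreg{\regid_{i+1}}$ from a covered state. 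Within each phase one runs the monotone saturation of $\mathrm{Cov}$ and $W$ as above, restricted to the currently available reads and writes. The instance is positive iff $\errorstate \in \mathrm{Cov}$ for some order, and each staged saturation is polynomial, so the procedure is in \PTIME{}.

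The main obstacle is the correctness of this phased procedure. For completeness I would take an execution covering $\errorstate$, read off the order in which its registers are first written (appending the never--written ones arbitrarily), and show by induction that every state it covers is captured by the staged saturation for that order, each read being justified either by a $\datainit$-read on a not--yet--written register (consistent with its phase) or by a previously recorded writable pair. For soundness I would, from a successful staged saturation, build a concrete (hence abstract, via \cref{prop:soundcomplete_roundless}) execution phase by phase, realizing the newly covered states within a phase as in the uninitialized argument and crucially performing \emph{all} needed $\datainit$-reads of register $\rlreg{\regid_{i+1}}$ before issuing the first write that opens phase $i{+}1$. Guaranteeing this temporal consistency — that every $\datainit$-read precedes the overwriting of its register while every other read still finds its value present thanks to the monotone support — is the delicate point; the rest is a routine interleaving argument.
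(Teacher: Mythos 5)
Your proposal matches the paper's proof in all three parts: the same monotone saturation for the uninitialized case, the same enumeration of first-write orders followed by a staged saturation for fixed $\rdim$, and a \PTIME-hardness reduction from the circuit value problem with a single register (the paper uses the general CVP with a true/false symbol per wire where you use monotone CVP with one symbol per gate; both are logspace reductions and both protocols never read $\datainit$). The only substantive difference is that for soundness of the staged saturation the paper invokes a merging lemma from prior work (executions sharing a first-write order can be combined), whereas you sketch the interleaving directly; your sketch is correct, since non-deserting steps keep the support monotone and all $\datainit$-reads of a register can be scheduled before its first write.
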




For \dnfPRP, we provide a \PTIME algorithm in the more restrictive case of $\rdim = 1$.

\begin{restatable}{proposition}{ptimednfprponereg}
\label{prop:ptime_dnfprp_onereg_roundless}
\dnfPRP for "roundless register protocols" with $\rdim = 1$ is in \PTIME.
\end{restatable}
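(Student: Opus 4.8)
The plan is to handle each disjunct of the formula separately. Writing $\phi = \bigvee_i C_i$ in "disjunctive normal form", a configuration of $\areach{\aconfiginitset}$ satisfies $\phi$ if and only if some $C_i$ is satisfied at a reachable configuration; since there are linearly many disjuncts and I treat them independently, it suffices to give a polynomial procedure for a single conjunction $C$. Such a $C$ amounts to three requirements on the target abstract configuration $(S,d)$ (recall $\rdim=1$, so the register content is a single symbol $d$): a set $P \subseteq \states$ of states that must be populated ($P \subseteq S$), a set $N \subseteq \states$ that must be empty ($S \cap N = \emptyset$), and a set $D \subseteq \dataalp$ of admissible register values ($d \in D$); if the positive ``$\regvar \contains \asymb$'' literals of $C$ are contradictory, then $C$ is unsatisfiable and I discard it. I then iterate over the at most $|\dataalp|$ candidate final values $d \in D$, so the task reduces to deciding whether some $(S,d) \in \areach{\aconfiginitset}$ satisfies $P \subseteq S \subseteq \states \setminus N$.

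For the positive part I reuse the saturation of Proposition~\ref{prop:ptime_cover_restrictions}. With a single register the only subtlety is that $\datainit$ can be read only before the first write, so I first compute the states coverable while the register still holds $\datainit$, and then run the monotone fixed point that alternately adds writable symbols and states reachable by reading an already-writable symbol. Using the copycat property (Lemma~\ref{lem:copycat_roundless}) I keep a reservoir in every coverable state, perform the reads in a suitable order, and finish with a write of $d$; this shows that, provided $d$ is writable (or $d=\datainit$ with no write performed), the whole coverable set can be populated simultaneously with register value $d$. Hence the positive requirement $P \subseteq S$ together with $d \in D$ is decided by this fixed point.

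The crux, and the main obstacle, is the negative requirement $S \cap N = \emptyset$, which --- as already for \TARGET, the special case $N = \states \setminus \{\targetstate\}$ --- asks to empty states rather than to populate them, so monotonicity no longer applies. The key observation is that, thanks to non-atomicity and copycat, an unwanted state $q$ can be removed from the support by a \emph{deserting} step $(q,a,q')$ that moves its whole contents to a successor $q'$; thus $q$ can be kept out of the final support as long as all of its processes can be routed into $\states \setminus N$. This turns the negative requirement into a \emph{drainage} question: starting from the configuration produced by the build-up, can every process sitting on an $N$-state be pushed, through enabled transitions, into $\states \setminus N$, while the states of $P$ are never deserted and the register is finally restored to $d$? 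I compute drainability by a second fixed point over states, marking $q$ as drainable once it carries an enabled transition into an already-drainable state or a state of $\states \setminus N$ (a write $\writeact{}{a}$ is always enabled, whereas a read $\rlreadact{}{a}$ is usable only once $a$ is writable). Since a non-drainable $N$-state can never be emptied, it must never be populated; I therefore fold the two analyses into a single combined fixed point that computes coverability restricted to states avoiding the non-drainable part of $N$, so that the writable symbols, the coverable states, and the drainable states are all stabilised together.

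The main difficulty I expect lies precisely in this combined step: draining an $N$-state may require a register value produced only by another state that itself has to be emptied, and the final value must be $d$. Resolving this requires ordering the deserting steps consistently with the symbols they rely on --- keeping a reservoir in each symbol-provider and draining it last --- and arguing that these dependencies can always be met whenever the drainability fixed point succeeds, any register excursion used while draining being undone by a closing write of $d$ from a populated non-$N$ state. Making this argument precise while keeping the procedure polynomial is the technical heart of the proof. As there are polynomially many disjuncts, polynomially many candidate values $d$, and each fixed point runs in polynomial time, the overall algorithm is in \PTIME.
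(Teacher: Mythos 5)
Your overall architecture (split the DNF into clauses, extract the positive set $P$, the negative set $N$ and the admissible register values, compute the coverable states by forward saturation, then handle the emptiness constraints by a second fixed point, and stabilise everything together) matches the paper's, which computes a maximal coverable set $\covset{\prot}$ and a maximal ``backward'' set $\cocovset{\prot}$ and intersects them until stabilisation. The build-up half of your argument is fine: the paper's Lemma~\ref{lem:decomp_exec_dnfprp_onereg} justifies exactly your claim that a witness execution can be rearranged into a phase that populates the whole support and a phase that drains it.

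The gap is in the drainage half, and it is not merely a technical detail left to be filled in: the forward ``drainability'' fixed point you propose is unsound, and the paper's own running example refutes it. In the protocol of Figure~\ref{fig:example_roundless_protocol} with target $\targetstate = \errorstate$ (so $N = \{q_0,A,B,C\}$), the state $C$ is coverable, hence $\exa$ is writable, hence $A$ is marked drainable via $(A,\rlreadact{}{\exa},\errorstate)$; then $C$ is marked drainable via $(C,\rlreadact{}{\exc},A)$ since $\exc$ is writable and $A$ is already drainable; every state ends up drainable and your algorithm accepts. Yet Example~\ref{example:roundless_reachabilities} shows this \TARGET{} instance is negative: emptying $A$ requires $\exa$ in the register, which only a process still sitting on $C$ can provide, while emptying $C$ either repopulates $A$ or requires $\exb$, which is only written by a transition leaving $\errorstate$ and repopulating $A$. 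Your fixed point cannot see this cyclic dependency because it ignores the temporal constraint that the single register holds one symbol at a time and that the writer of each symbol used during drainage must itself still be populated at that moment and be drained afterwards. Your closing remark that ``these dependencies can always be met whenever the drainability fixed point succeeds'' is precisely the false statement. The paper avoids this by computing $\cocovset{\prot}$ \emph{backwards}: starting from the allowed final supports, it repeatedly guesses the symbol that was last written before the current suffix, saturates with backward reads of that symbol, and then \emph{checks that the symbol can be written by a transition ending inside the already-validated set}, backtracking otherwise. That write-into-$S$ check and the backtracking are what enforce a consistent linear order on the deserting steps; on the example above it correctly refuses to add $A$ (the only write of $\exa$ ends in $C \notin S$) and $C$ (the only write of $\exb$ ends in $A \notin S$). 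To repair your proof you would have to replace the forward drainability computation by such a backward, order-aware one.
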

\begin{proof}[Proof sketch]
We give here the proof for \TARGET. See Appendix~\ref{appendix:ptime_dnfprp_onereg_roundless} for the proof and pseudocode for \dnfPRP.  
Our algorithm shares similarities with \cite[page 41]{fournier_phd_thesis} for broadcast protocols, although it is more complex because of the persistence of symbols in the register. 

First, we have a polynomial reduction from "initialized" \TARGET with $\rdim =1$ to "uninitialized" \TARGET with $\rdim = 1$. It proceeds as follows. Consider the graph $G= (Q,E)$ when $(q_1,q_2) \in E$ when there exists $(q_1, \rlreadact{}{\datainit}, q_2) \in \transitions$. Let $I \subseteq Q$ the set of states that are reachable in $G$ from $\initialstates$. The reduction simply replaces $\initialstates$ by $I$ as set of initial states.

Any (abstract) execution
$\exec: \aconfig_0 \step{*} \confpair{\targetstate}{\finalsymb}$, called ""synchronizing execution"", can be rearranged into 
$\exec_{+}: \aconfig_0 \step{*} \confpair{S}{\asymb}$ and 
$\exec_-: \confpair{S}{\asymb} \step{*} \confpair{\targetstate}{\finalsymb}$ 
where $S$ contains all states that appear in $\exec$. Additionally, we can make $\exec_-$ start with a write action (there is a transition in $\exec$ that writes $\asymb$). 
To obtain the decomposition, $\exec_+$ mimics $\exec$ but does not empty any state, 
and $\exec_-$ mimics $\exec$ but from a configuration with more states.   
We compute the maximum such set $S$ by iteratively deleting states that cannot appear in any "synchronizing execution". 
Let \begin{align*}
\covset{\prot} := & \max \set{S \subseteq Q \mid \exists \asymb \in \dataalp, \, \exists \aconfig_0 \in \aconfiginitset,  \, \aconfig_0 \step{*} \confpair{S}{\asymb}}
\\
 \cocovset{\prot} := & 
\max \set{ S \subseteq Q \mid \forall \asymb \in \dataalp, \, \exists \finalsymb \in \dataalp, \, \confpair{S}{\asymb} \step{*} \confpair{\targetstate}{\finalsymb}} \end{align*}
Both maxima exist as the sets are non-empty ($Q_0$ is included in the first set and $q_f$ is in the second set) and they are stable by union (concatenate the corresponding executions). 
Intuitively, $\covset{\prot}$ corresponds to the set of coverable sets, and $\cocovset{\prot}$ to the set of backward coverable states. In the decomposition $\exec_{+}: \aconfig_0 \step{*} \confpair{S}{\asymb}$, $\exec_-: \confpair{S}{\asymb} \step{*} \confpair{\targetstate}{\finalsymb}$, $\exec_+$ is a witness that $S \subseteq \covset{\prot}$ and $\exec_-$ that $S \subseteq \cocovset{\prot}$ (because $\exec_-$ starts with a write action, for every $\asymb' \in \dataalp$ one has $\confpair{S}{\asymb'} \step{*} \confpair{\targetstate}{\finalsymb}$). 

$\covset{\prot}$ and $\cocovset{\prot}$ can be computed in polynomial time. For $\covset{\prot}$, we use a "saturation" technique. For $\cocovset{\prot}$, we work backwards: a symbol is read before it is written. We start with $S := \set{\targetstate}$. Until a fixpoint for $S$ is reached, we do the following. We iterate on $\dataalp$, trying to pick the symbol that was in the register before $S$ could be reached. For each $\asymb \in \dataalp$, we saturate $S$ with backward transitions reading $\asymb$, then check if $\asymb$ can be written by a transition ending in $S$. If not, we backtrack by removing states that were just added.  

The algorithm iteratively removes from $\prot$ states that are not in $\covset{\prot} \cap \cocovset{\prot}$. Indeed, states that are not in $\covset{\prot} \cap \cocovset{\prot}$ cannot appear in any "synchronizing execution". If it ends up with $\states(\prot) = \emptyset$, then there is no "synchronizing execution" and the algorithm rejects. If it ends up with $\covset{\prot} = \cocovset{\prot} = \states(\prot) \ne \emptyset$, then applying the definitions of $\covset{\prot}$ and $\cocovset{\prot}$ gives a synchronizing execution, and the algorithm accepts.
\end{proof}

It is unknown whether the previous result still holds when $\rdim$ is fixed to a value greater than $1$. The case $\rdim = 1$ is particularly easy because writing to the register completely erases its content.

Unlike \COVER, \TARGET and therefore \dnfPRP are not tractable under the "uninitialized" hypothesis. For \TARGET, one cannot add fresh processes at no cost, since the fresh processes would eventually have to get to $\targetstate$. For example, if a register $\regvar$ can only be written from a given state $q$, the last process to leave $q$ will fix the value in register $\regvar$.

\begin{restatable}{proposition}{targetnpharduninit}
\label{prop:target_uninit_nphard_roundless}
\TARGET for "uninitialized" "roundless register protocols" is \NP-hard.
\end{restatable}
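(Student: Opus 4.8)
The plan is to reduce from 3-\SAT, reusing the guess-and-verify skeleton of the \COVER reduction of Figure~\ref{fig:prot_3SAT_to_cover}, but laid out linearly so that it is the synchronization requirement of \TARGET, rather than the reading of $\datainit$, that enforces consistency of the guessed assignment. Given a 3-CNF formula $\phi = \bigland_{i=1}^{m} (l_{i,1} \lor l_{i,2} \lor l_{i,3})$ over variables $x_1, \dots, x_n$, I would build an \emph{uninitialized} roundless register protocol $\prot_\phi$ with $\dataalp = \set{\datainit, T, F}$ and $\rdim = n$ (one register $\rlreg{i}$ per variable). Its states form two linear chains. The \emph{assignment chain} $a_0, a_1, \dots, a_n$, with $a_0$ as unique initial state, has for each $i$ the two transitions $(a_{i-1}, \writeact{i}{T}, a_i)$ and $(a_{i-1}, \writeact{i}{F}, a_i)$, so that a process commits a truth value for $x_i$ by writing to $\rlreg{i}$. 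The \emph{verification chain} $c_0, c_1, \dots, c_m$ starts at $c_0 := a_n$ and ends at $c_m := \targetstate$; for each clause $i$ and each literal $l_{i,k}$ it contains one transition from $c_{i-1}$ to $c_i$ reading the witnessing value, namely $(c_{i-1}, \rlreadact{j}{T}, c_i)$ if $l_{i,k} = x_j$ and $(c_{i-1}, \rlreadact{j}{F}, c_i)$ if $l_{i,k} = \neg x_j$. No transition reads $\datainit$, so $\prot_\phi$ is uninitialized, and I claim $(\prot_\phi, \targetstate)$ is a positive instance of \TARGET iff $\phi$ is satisfiable.

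For the forward direction (satisfiable $\Rightarrow$ target reachable), I would exhibit a witness with a \emph{single} process, which is allowed since $\cconfiginitset$ contains configurations with one process. Fix a satisfying assignment $\nu$. Starting from $\confpair{a_0}{\datainit^{n}}$, the lone process walks down the assignment chain writing $T$ or $F$ to each $\rlreg{i}$ according to $\nu$, then walks down the verification chain, leaving each clause gadget through a literal made true by $\nu$; since it is the only process, every register it reads still holds the value it wrote. It terminates in $\confpair{\targetstate}{\cdot}$, a configuration in which every process sits on $\targetstate$.

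The backward direction is where the synchronization constraint does the real work, and this is the step I expect to be the main obstacle. Consider any synchronizing execution. Every process must traverse the entire assignment chain before reaching $\targetstate$, and the only register-writing transitions lie on that chain. I would examine the \emph{last} write transition of the execution. Because every edge of the assignment chain writes, and no write occurs afterwards, the process $P^\star$ performing this last write must take the final edge $a_{n-1} \to a_n$: had it written some $\rlreg{i}$ with $i < n$, it would still have to move from $a_i$ towards $\targetstate$ and thus write again, a contradiction. Hence $P^\star$ arrives at $c_0 = a_n$ with all registers frozen at a full valuation $\nu$ (each $\rlreg{i}$ was written by $P^\star$ itself while crossing $a_{i-1}$, so no frozen register still holds $\datainit$). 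From $c_0$ onward $P^\star$ performs only reads, all against the frozen $\nu$; since it reaches $c_m = \targetstate$, it passes every clause gadget under the single consistent valuation $\nu$, forcing $\nu \models \phi$.

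The crux is precisely this ``last writer freezes the registers'' phenomenon, which is exactly what breaks down for \COVER: there, spare helper processes may keep rewriting a register to satisfy each clause read in turn, defeating any consistency requirement and making uninitialized \COVER tractable. Under \TARGET no such helpers can be kept in reserve, since every process—including every would-be helper—must itself reach $\targetstate$, so the valuation ultimately observed by the last committing process is globally consistent. Combining the two directions yields the reduction and hence \NP-hardness; \NP membership transfers from the general case, so \TARGET is in fact \NP-complete in the uninitialized setting, although only hardness is claimed in the statement.
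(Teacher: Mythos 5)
Your reduction is correct and takes essentially the same approach as the paper's: both reduce from 3-\SAT{} using one register per variable holding a true/false symbol, an assignment (write) phase followed by a linear chain of clause-checking reads, and the same key observation that the synchronization requirement of \TARGET{} forces all writers to eventually commit, freezing the registers at a single consistent valuation against which the verification chain is traversed. The only cosmetic difference is that the paper performs the writes via self-loops on the initial state and reads off the valuation at that state's final desertion, whereas you linearize the assignment phase into a chain and identify the freeze point as the globally last write step.
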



\begin{figure}[h]
\small
\begin{center}
\begin{tabular}{c || c | c | c | c |} 
 & \COVER & \TARGET & \dnfPRP & \PRP \\
\hhline{|=#=|=|=|=|}
 \multirow{2}{*}{General case} & \cellcolor{black!8}\NP-complete & \cellcolor{black!8}\NP-complete & \cellcolor{black!8} \NP-complete & \cellcolor{black!8} \NP-complete \\
& \cellcolor{black!8}(Prop.~\ref{prop:prp_npc_roundless} ) & \cellcolor{black!8}(Prop.~\ref{prop:prp_npc_roundless} ) & \cellcolor{black!8}(Prop.~\ref{prop:prp_npc_roundless} ) & \cellcolor{black!8}(Prop.~\ref{prop:prp_npc_roundless})\\
\hline
 \multirow{2}{*}{"Uninitialized"} & \PTIME-complete & \cellcolor{black!8} \NP-complete & \cellcolor{black!8} \NP-complete & \cellcolor{black!8} \NP-complete \\
&(Prop.~\ref{prop:ptime_cover_restrictions})
 & \cellcolor{black!8}(Prop.~\ref{prop:prp_npc_roundless} \& \ref{prop:target_uninit_nphard_roundless})
 & \cellcolor{black!8}(Prop.~\ref{prop:prp_npc_roundless} \& \ref{prop:target_uninit_nphard_roundless})
 & \cellcolor{black!8}(Prop.~\ref{prop:prp_npc_roundless} \& \ref{prop:prp_np_hard_restrictions})
\\
 \hline
 \multirow{2}{*}{$\rdim = 1$ (one register)} 
& \PTIME-complete 
& \PTIME-complete 
& \PTIME-complete 
& \cellcolor{black!8} \NP-complete \\
&(Prop.~\ref{prop:ptime_cover_restrictions}) 
& (Prop.~\ref{prop:ptime_dnfprp_onereg_roundless} \& \ref{prop:ptime_cover_restrictions}) 
& (Prop.~\ref{prop:ptime_dnfprp_onereg_roundless} \& \ref{prop:ptime_cover_restrictions}) 
& \cellcolor{black!8}(Prop.~\ref{prop:prp_npc_roundless} \& \ref{prop:prp_np_hard_restrictions}) \\
 \hline
 \end{tabular}
\end{center}
\caption{Summary of complexity results for "roundless register protocols"}
\label{fig:table_summary_complexity}
\end{figure}

\end{scope}
\begin{scope}

\section{Round-based Register Protocols}
\label{sec:setting_roundbased}

We now extend the previous model to a round-based setting. The model and semantics are the same as in \cite{myicalppaper}, however we consider a more general problem than \COVER. Thus, the abstract semantics developed here differs from \cite{myicalppaper}. 

\subsection{Definitions}
\label{subsec:defs_roundbased}
In round-based settings, there is a fresh set of $\rdim$ registers at each round, and each process has its own private round value that starts at $0$ and never decreases. Processes may only read from and write to registers of nearby rounds. 

\begin{definition}[Round-based register protocols]
  A ""round-based register protocol"" is a tuple $\prot =
    \tuple{\states, \initialstates,\rdim, \dataalp, \datainit, \vrange, \transitions}$
  where
  \begin{itemize}
    \item $\states$ is a finite set of states with a distinguished subset of 
          initial states~$\initialstates \subseteq \states$;
    \item $\rdim \in \nats$ is the number of shared registers per round; 
    \item $\dataalp$ is a finite data
          alphabet with an initial symbol $\datainit$;
    \item $\vrange$ is the visibility range;
    \item
          $\transitions \subseteq \states \times \actions \times \states$ is
          the set of transitions, where
          $\actions =  \{\readact{-i}{\regid}{\asymb} \mid i \in \zset{\vrange}, \allowbreak \regid \in \regint,\allowbreak \asymb \in \dataalp\} \cup \{
            \writeact{\regid}{\asymb} \mid \regid \in \regint, \asymb \in \datawrite\} \cup \set{\incr}$
          is the set of actions.
  \end{itemize}
\end{definition}

Read actions specify the round of the register: $\readact{-i}{\regid}{\asymb}$ means, for a process at round $k$, ``read $\asymb$ from register $\regid$ of round $k{-}i$''. 
A process at round $k$ may only write to the registers of round $k$. The $\incr$ action increments the round of a process. 

Let $\reg{k}{\regid}$ denote the register $\regid$ of round $k$. The set of registers of round $k$ is written $\regset{k}$, and we let $\regset{} = \bigcup_{k \in \NN} \regset{k}$. The size of a protocol is $|\prot| = |\states|+ |\dataalp| + |\transitions| + \vrange + \rdim$. 
A given process is described by its state and round, formalized by a pair $(q,k) \in \states \times \nats$ called ""location"". Let $\locations := \states \times \nats$ denote the set of locations. A ""concrete configuration@@roundbased"" describes the number of processes in each location along with the value of each register. 
Formally, a ""concrete configuration"" is a pair $\confpair{\mu}{\dvec}$ with $\mu \in \nats^{\locations}$ such that $0<\sum_{(q,k) \in \locations} \mu(q,k) < \infty$ and $\dvec \in \dataalp^{\regset{}}$. For $\cconfig = \confpair{\mu}{\dvec}$, we write $\loc{\cconfig} := \mu$ and $\datafun{\cconfig} := \dvec$. Again, we write $\cconfigs$ for the set of "concrete configurations". The set of initial configurations is  $\cconfiginitset := \set{\cconfig \in \cconfigs  \mid \datafun{\cconfig} = \datainit^{\regset{}} \text{ and } \forall (q,k) \notin \initialstates \times \set{0}, \, \loc{\cconfig}(q,k) = 0}$.  

A ""move"" is a pair $\move \in \transitions \times \nats$: move $(\delta, k)$ expresses that transition $\delta$ is taken by a process at round $k$; we write $\moves := \transitions \times \nats$ for the set of all moves. A "move" $\move$ has ""effect"" on round $k$ when $\move$ is at round $k$ or $\move$ is an increment at round $k{-}1$.
We define a "step" as follows: for $\move = ((q,a,q'),k) \in \moves$, $\cconfig \step{\move} \cconfig'$ when $(q,k) \in \loc{\cconfig}$ and:
\begin{itemize}
\item if $a = \readact{-i}{\regid}{\asymb}$, $\loc{\cconfig'} = (\loc{\cconfig} \ominus \set{(q,k)}) \oplus \set{(q',k)}$, $\data{\cconfig}{\reg{k{-}i}{\regid}} = \asymb$ and $\datafun{\cconfig'} = \datafun{\cconfig}$;
\item if $a = \writeact{\regid}{\asymb}$, $\loc{\cconfig'} = (\loc{\cconfig} \ominus \set{(q,k)}) \oplus \set{(q',k)}$, $\data{\cconfig'}{\reg{k}{\regid}} = \asymb$ and for all $\regvar \ne \reg{k}{\regid}$, $\data{\cconfig'}{\regvar} = \data{\cconfig}{\regvar}$;
\item if $a = \incr$, $\loc{\cconfig'} = (\loc{\cconfig} \ominus \set{(q,k)}) \oplus \set{(q',k{+}1)}$ and $\datafun{\cconfig'} = \datafun{\cconfig}$.
\end{itemize}

A step is ""at round $k$"" when the corresponding "move" is of the form $(\atrans, k)$. Note that action $\readact{-i}{\regid}{\asymb}$ is only possible for processes at rounds $k \geq i$. 
The notions of execution, of reachability and of coverability are defined as in the roundless case. 

\begin{example}
\label{example:roundbased_prot}
\begin{figure}
\centering
\resizebox{\linewidth}{!}{
\begin{tikzpicture}[x = 1cm, y = 1cm, node distance = 1.35cm]
\tikzstyle{every node} = [font = \footnotesize]
\node[state] (q0) at (0,0) {$q_0$};
\node[state, left = of q0] (A) {$A$};
\node[state, left = of A] (B) {$B$};
\node[state, left = of B] (C) {$C$};
\node[state, right = of q0] (D) {$D$};
\node[state, right = of D] (E) {$E$};
\node[state, right = of E] (qf) {$\targetstate$};

\path[-latex', thick]
(q0.270) edge[latex'-] +(270:4mm)
(q0) edge[loop above] node[above] {$\incr$} ()
(q0) edge node[below] {$\writeact{}{\exa}$} (A)
(A) edge node [below] {$\readact{-1}{}{\datainit}$} (B)
(B) edge node [below] {$\readact{-1}{}{\exa}$} (C) 
(C) edge[bend left = 20] node[above] {$\writeact{}{\exb}$} (q0)
(q0) edge node[below] {$\readact{-1}{}{\exb}$} (D)
(D) edge node[below] {$\readact{0}{}{\datainit}$} (E)
(E) edge node[below] {$\readact{0}{}{\exb}$} (qf);
\end{tikzpicture}
}
\caption{An example of round-based register protocol}
\label{fig:example_prot_roundbased}
\end{figure}
Consider the "round-based protocol" $\prot$ from Figure~\ref{fig:example_prot_roundbased}, with $\rdim = 1$, $\vrange = 1$, $\initialstates = \set{q_0}$ and $\dataalp = \set{\datainit, \exa, \exb}$. 
In this protocol, state $\targetstate$ cannot be covered.
By contradiction, consider an execution $\cexec: \cconfig_0 \step{*} \cconfig$ with $\cconfig_0 \in \cconfiginitset$ and $\loc{\cconfig}(\targetstate,k)>0$ fo some $k \in \nats$. We have that, at some point in $\cexec$, $(E,k)$ is populated and $\exb$ is in $\reg{}{k}$. Therefore, some process went from $(A,k)$ to $(B,k)$, which implies that $\reg{}{k}$ lost value $\datainit$ before $\reg{}{k{-}1}$; this in turn implies that $\cexec$ does not send any process to $(E,k)$ which is a contradiction. 
\end{example}

Since "round-based register protocols" enjoy the same monotonicity properties as "roundless register protocols", we define the same non-counting abstraction. Note that this abstraction differs from the one in \cite{myicalppaper} which was designed specifically for \COVER.  
The set of ""abstract configurations"" is $\aconfigs := 2^\locations \times \dataalp^{\regset{}}$; the abstract semantics are defined as in Subsection~\ref{subsec:abstract_semantics_roundless}. Again, $
\aconfig \step{\atrans} \aconfig$ if and only if there exist $\cconfig, \cconfig' \in \cconfigs, \, \cconfig \step{\atrans} \nobreak \cconfig'$ and $\abstractproj{\cconfig} = \aconfig, \abstractproj{\cconfig'} = \aconfig'.$
All the properties of Subsection~\ref{subsec:abstract_semantics_roundless} apply to round-based abstract semantics. In particular, we have the soundness and completeness of the abstraction:

\begin{restatable}[Soundness and completeness of the abstraction]{proposition}{soundcompleteroundbased}
  \label{prop:soundcomplete_roundbased}
  For all $L \subseteq \locations$, $\dvec \in \dataalp^{\regset{}}$:
  \[
    (\exists \cconfig \in \creach{\cconfiginitset{}}:
    \supp{\cconfig} {=} L, \, \datafun{\cconfig} {=} \dvec)
     \  \Longleftrightarrow\ \\ (\exists \aconfig \in \areach{\aconfiginitset{}}:\
    \loc{\aconfig} {=} L, \, \datafun{\aconfig} {=} \dvec). 
  \]
\end{restatable}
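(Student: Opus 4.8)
The plan is to prove the two implications separately, mirroring the proof of Proposition~\ref{prop:soundcomplete_roundless}, with locations $(q,k)$ playing the role of states and the register map ranging over the (cofinitely $\datainit$) set $\regset{}$. Since every finite execution touches only finitely many rounds, and hence finitely many locations and registers, all the objects involved are finitely supported and the arguments from the roundless setting carry over essentially verbatim once states are replaced by locations.

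For the forward implication ($\Rightarrow$), I would simply project a concrete witness. Let $\cconfig_0 \step{*} \cconfig$ be a concrete execution with $\cconfig_0 \in \cconfiginitset$, $\supp{\cconfig} = L$ and $\datafun{\cconfig} = \dvec$. Applying $\abstractproj{\cdot}$ to every configuration of the execution yields a sequence of abstract configurations, and by the alternative characterization of abstract steps recalled just above the statement ($\aconfig \step{\atrans} \aconfig'$ iff some concrete step $\cconfig \step{\atrans} \cconfig'$ satisfies $\abstractproj{\cconfig} = \aconfig$ and $\abstractproj{\cconfig'} = \aconfig'$), each concrete step projects to an abstract step. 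Moreover $\abstractproj{\cconfig_0} \in \aconfiginitset$ because $\cconfig_0$ has all its processes on initial locations and all registers set to $\datainit$. Hence $\abstractproj{\cconfig_0} \step{*} \abstractproj{\cconfig}$ with $\abstractproj{\cconfig} = \confpair{L}{\dvec}$, which is the desired abstract witness.

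For the backward implication ($\Leftarrow$), I would build a concrete execution realizing a given abstract one $\aconfig_0 \step{*} \aconfig_\ell$ by induction on its length $\ell$, using the copycat property (the round-based analogue of Lemma~\ref{lem:copycat_roundless}) to create processes on demand. In the base case, one process per location of $\loc{\aconfig_0}$, with all registers at $\datainit$, gives a concrete initial configuration projecting to $\aconfig_0$. For the inductive step, suppose a concrete execution $\cconfig_0 \step{*} \cconfig_{\ell-1}$ with $\abstractproj{\cconfig_{\ell-1}} = \aconfig_{\ell-1}$ has been constructed, and consider the last abstract step, using a transition $\atrans = (q,a,q')$ at some location $(q,k)$. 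If the step is deserting (it removes $(q,k)$ from the support), I realize it concretely by letting every process sitting on $(q,k)$ take $\atrans$ one after the other: reads leave the register unchanged, repeated writes of the same symbol are idempotent, and increments act independently, so after all of them have moved the support has lost $(q,k)$ and gained $(q',k')$, exactly matching $\aconfig_\ell$. If the step is non-deserting (it keeps $(q,k)$), a single process on $(q,k)$ would not suffice, so I first apply copycat to the whole prefix $\cconfig_0 \step{*} \cconfig_{\ell-1}$ with target location $(q,k) \in \supp{\cconfig_{\ell-1}}$; this yields an execution from a still-initial configuration $\confpair{\loc{\cconfig_0} \oplus q_1}{\datainit^{\regset{}}}$ to $\confpair{\loc{\cconfig_{\ell-1}} \oplus (q,k)}{\datafun{\cconfig_{\ell-1}}}$, i.e. the same configuration with one extra process on $(q,k)$. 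Moving one of these two copies along $\atrans$ while leaving the other in place then realizes the non-deserting step and produces $\cconfig_\ell$ with $\abstractproj{\cconfig_\ell} = \aconfig_\ell$.

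The main obstacle is precisely this backward direction, and within it the non-deserting steps: the abstraction may keep a location populated after firing a transition out of it, whereas a concrete configuration might have only one process there. The copycat property is what resolves this, at the cost of injecting one fresh process per non-deserting step; since the number of such steps is finite, the resulting concrete initial configuration is finite, as required. The round-based features (new rounds and registers created by $\incr$) cause no additional difficulty: each increment simply produces a location at round $k{+}1$ and, when first reached, fresh registers still carrying $\datainit$, so the register map stays cofinitely $\datainit$ and the copycat duplications remain valid.
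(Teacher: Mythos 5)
Your proposal is correct and follows essentially the same route as the paper: the forward direction by projecting each concrete step through $\abstractproj{\cdot}$, and the backward direction by induction on the length of the abstract execution, using the copycat property to inject an extra process whenever a non-deserting step requires one (this is exactly the content of Lemmas~\ref{lem:concrete_to_abstract} and~\ref{lem:abstract_to_concrete}, which the paper transfers verbatim to the round-based setting). Your additional remarks on why the round-based features (finitely many touched rounds, cofinitely-$\datainit$ register maps, fresh registers created by $\incr$) cause no difficulty are accurate and match the paper's implicit justification.
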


\subsection{Presence Reachability Problem}
\label{subsec:def_prp_roundbased}
\COVER is extended to round-based protocols by asking whether some reachable configuration has a process on $\errorstate$ on some round $k$, and \TARGET by asking whether some reachable configuration has no process on states $q \ne \targetstate$ on any round $k$. Formally, one asks whether there exists $\cconfig \in \creach{\cconfiginitset}$ such that $\cconfig \models \psi$ where $\psi=$``$\exists k \in \NN, (q,k) \in \loc{\cconfig}$'' for \COVER and $\psi=$``$\forall k \in \nats, \forall q \ne \targetstate, (q,k) \notin \loc{\cconfig}$'' for \TARGET. 
We will now extend roundless \PRP to round-based \PRP, where the formula is allowed to have non-nested quantification over rounds.  

"Presence constraints" are first-order formulas (quantifying over the rounds)  without any nested quantifiers. See Appendix~\ref{def:presence_constraints_roundbased} for the full definition. 


 

\begin{example}
``$(\exists k \, (q_2,k) \populated) \lor (\forall k \, ((q_0,k{+}2) \populated) \land \reg{1}{1} \contains \exa)$'' is an example of "presence constraint". \\ 
Let $\cconfig := ((q_0,0) \oplus (q_1,1), \datainit^{\regset{}})$ with $q_0 \ne q_1$, $\rdim = 1$. One has $\cconfig \models (\reg{0}{1} \contains \nobreak \datainit) \land (\exists k \, (q_1,k{+}1) \populated)$ but  $\cconfig \not \models \forall k \, (((q_0,k) \populated) \lor \neg ((q_1,k) \populated))$. 
\end{example}

We define the ""round-based presence reachability problem"" (round-based \PRP): 
\smallskip

\noindent\fbox{\begin{minipage}{.99\linewidth}
    \textsc{Round-based \PRP} \\
    {\bf Input}: A round-based register protocol 
    $\prot$, a "presence constraint" $\psi$\\
    {\bf Question}: Does there exist $\cconfig \in \creach{\cconfiginitset}$ such that $\cconfig \models \psi$?
  \end{minipage}} \smallskip

\begin{example}
Consider $\prot$ from Example~\ref{example:roundbased_prot}.
If $\psi := \exists k, (\targetstate,k) \populated$, then $(\prot, \psi)$ is a negative instance of round-based \PRP. If $\psi' := \exists k, ((E,k) \populated) \allowbreak \land ((E,k{+}1) \populated)$, then $(\prot, \psi')$ is also negative. However, if $\psi'' := ((E,2) \populated) \allowbreak \land [\forall k, (\reg{}{k+1} \contains \exb) \lor (\reg{}{k+1} \contains \datainit)]$, then $(\prot, \psi'')$ is positive: a witness execution sends a process to $(B,1)$, writes $\exa$ to $\reg{}{0}$ then $\exb$ to $\reg{}{1}$ and finally sends a process from $(q_0,2)$ to $(E,2)$. 
\end{example}

\COVER and \TARGET for "round-based register protocols" are special cases of \PRP. The following lower bound hence applies to all these problems:

\begin{proposition}[{\cite[Theorem 23]{myicalppaper}}]
\label{prop:pspace_hard_cover_roundbased}
\COVER for "round-based register protocols" is \PSPACE-hard, even in the uninitialized case with $\vrange =0$ and $\rdim = 1$. 
\end{proposition}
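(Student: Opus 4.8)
The plan is to reduce from a canonical \PSPACE-complete problem, namely the acceptance problem of a deterministic Turing machine $M$ working in space $S$ polynomial in the size of its input $w$ (equivalently, membership for a linear bounded automaton). I would build, in polynomial time, a round-based register protocol $\prot_M$ with $\vrange = 0$, $\rdim = 1$ and no transition reading $\datainit$, together with a distinguished state $\errorstate$, such that $\errorstate$ is coverable if and only if $M$ accepts $w$. Thanks to Proposition~\ref{prop:soundcomplete_roundbased} I may reason entirely in the abstract semantics, which is what I would do throughout.

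The encoding uses the unbounded supply of rounds to lay out the time and the space of the computation at once. I group the rounds into consecutive blocks of length $S$, letting block $t$ (rounds $tS, \dots, tS + S - 1$) represent the $t$-th configuration $C_t$ of $M$: the single register of round $tS + j$ stores the content of tape cell $j$ of $C_t$, enriched to also record the head position and control state when the head sits on cell $j$. Successive blocks then encode successive time steps, so that an accepting run of $M$ corresponds to some block whose register contents encode a configuration in an accepting control state, and reaching such a block is exactly what should allow a process to cover $\errorstate$. The first block is hard-wired to encode the initial configuration $C_0$ on input $w$ by fixing the write actions at rounds $0, \dots, S-1$; because every register is written before it is ever read, the construction stays inside the uninitialized fragment and uses only the single register permitted by $\rdim = 1$.

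The heart of the construction is the simulation of a single step of $M$ under the constraint $\vrange = 0$, which forbids a process at a given round from reading any register other than that of its current round: the only channel between rounds is the control state a process carries across an $\incr$. Since the local rule of $M$ computes cell $j$ of $C_{t+1}$ from cells $j{-}1, j, j{+}1$ of $C_t$, which live $S{-}1$, $S$ and $S{+}1$ rounds earlier, the approach I would take is to use dedicated ferry processes that read those three cells in block $t$ and carry the read symbols forward in their state to the appropriate round of block $t{+}1$, where they recombine their values through that round's register (one writes, the next reads and augments, and so on) and the last writes the updated cell prescribed by the transition function of $M$. Separator and orientation markers, themselves written into the registers, would be used to let the ferry processes recognise the correct round at which to deposit their symbol.

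The main obstacle, and where most of the effort lies, is that the semantics is anonymous and monotone: by the copycat property (Lemma~\ref{lem:copycat_roundless}, which holds verbatim in the round-based model) one may freely inject duplicate processes, and presence constraints cannot count them. I must therefore design the ferry and recombination gadgets so that spurious or duplicated processes can never inject an incorrect symbol that fakes an accepting computation, and so that the position of a cell within a block is tracked without attempting to store a polynomially large counter in a finite-alphabet register. The leverage I would exploit is that $M$ is deterministic and that register writes are permanent within a round: I would arrange that, once a cell of $C_t$ has been correctly committed, no later write can overwrite it in a way that helps, so that the only register histories propagating all the way to an accepting block are precisely those faithfully simulating $M$. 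Establishing this soundness direction rigorously, ruling out every way the parameterized, non-atomic, copycat semantics could shortcut the simulation, is the delicate part; the completeness direction is routine, since the intended faithful run directly witnesses coverability of $\errorstate$.
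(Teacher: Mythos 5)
First, a point of reference: the paper does not prove this proposition at all --- it is imported verbatim as \cite[Theorem 23]{myicalppaper}, so there is no in-paper proof to compare your construction against. Judged on its own terms, your proposal is a reasonable high-level plan (reduce from a polynomial-space Turing machine, lay the tape out over rounds, exploit that with $\vrange=0$ the only inter-round channel is the state carried across an $\incr$), but it is not a proof: you explicitly defer the soundness direction --- ``ruling out every way the parameterized, non-atomic, copycat semantics could shortcut the simulation'' --- and that is exactly where all the difficulty of such reductions lives. Without the concrete gadgets and the invariant they maintain, the reduction is not established.

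Moreover, the one lever you do name for soundness is wrong as stated. You write that ``register writes are permanent within a round'' and that ``once a cell of $C_t$ has been correctly committed, no later write can overwrite it.'' In this model the only irreversible event is the loss of $\datainit$: any symbol in $\datawrite$ can be overwritten at any later time, and by the copycat property a duplicate process can always be parked in a state from which it wrote some symbol earlier and made to write it again arbitrarily late. So an adversarial schedule can re-inject stale cell values into the register of round $tS+j$ after the ``correct'' value was committed, and nothing in your description prevents a ferry process from reading such a stale value and propagating a corrupted configuration toward the accepting block. A closely related unaddressed issue is block alignment: the separator markers are themselves written by anonymous processes counting rounds in their state, and duplicated or delayed processes can deposit separators (and cell symbols) at wrong offsets, desynchronizing blocks. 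A correct reduction must design the alphabet and the read/write order so that every such corruption can only \emph{block} progress toward $\errorstate$, never enable it --- typically by making each consumer validate a value in a way that any second write at that round invalidates. Until those gadgets are exhibited and the invariant is proved by induction on the execution, the reduction does not go through.
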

Note that, in the round-based setting, $\rdim = 1$ means one register \emph{per round}, therefore still an unbounded number of registers. $\vrange =0$ means that a process can only interact with registers of its current round.
The previous proposition implies that all problems considered in Figure~\ref{fig:table_summary_complexity} are \PSPACE-hard when working with round-based protocols. In \cite{myicalppaper}, \COVER for "round-based register protocols" is shown to be \PSPACE-complete. In the rest of this paper, we establish that the more general "round-based \PRP" lies in the same complexity class:

\begin{theorem}
\label{thm:prp_pspace_roundbased}
"Round-based \PRP" is \PSPACE-complete.
\end{theorem}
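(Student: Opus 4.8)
The plan is to establish both membership in \PSPACE{} and \PSPACE-hardness. Hardness comes for free: Proposition~\ref{prop:pspace_hard_cover_roundbased} already states that \COVER{} for round-based register protocols is \PSPACE-hard, and since \COVER{} is a special case of round-based \PRP{} (with $\psi = \exists k\, (\errorstate,k)\populated$), the hardness transfers immediately. So the entire difficulty lies in the \PSPACE{} upper bound.

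For membership, the key obstacle is that both the number of processes and the number of rounds (hence registers) are unbounded, so a configuration cannot be stored explicitly. First I would lean on the soundness and completeness of the abstraction (Proposition~\ref{prop:soundcomplete_roundbased}) to work entirely in the abstract semantics, where each location carries only a Boolean ``populated or not'' flag; this removes the unboundedness in the number of processes. Next I would address the unboundedness in rounds. The crucial structural observation is that a process at round $k$ may only read from registers of rounds $k{-}\vrange,\dots,k$ and may only write to round $k$; therefore the dynamics are \emph{local} in the round parameter, and successive rounds interact only through a sliding window of width $\vrange{+}1$. Following the approach of \cite{myicalppaper} for \COVER, I would argue that the behaviour at round $k$ depends only on a bounded amount of information about the adjacent rounds, so that rounds can be processed one at a time by a machine that keeps in memory only the abstract state of a window of $O(\vrange)$ consecutive rounds together with the polynomially-sized relevant summary (e.g.\ the ``first-write order'' data alluded to in the excerpt) for the rounds still being built. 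This yields a polynomial-space guess-and-check procedure that simulates the abstract execution round by round, verifying step legality within the window and never storing more than polynomially many registers at once.

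The genuinely new part, compared to the \COVER{} result of \cite{myicalppaper}, is handling the richer acceptance condition: the presence constraint $\psi$ is a non-nested first-order formula quantifying over rounds, so checking $\cconfig \models \psi$ requires evaluating, for each outermost quantifier, a property that ranges over \emph{all} rounds $k \in \NN$, not just the bounded window the simulation keeps in memory. The plan is to rewrite $\psi$ into a form amenable to streaming evaluation: push the structure into a bounded Boolean combination of \emph{round-indexed atomic predicates} (presence of a location at an offset round, a register holding a symbol), and observe that each universally-quantified conjunct becomes an invariant that must hold at every round while each existentially-quantified disjunct becomes a reachability obligation that must be witnessed at some round. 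I would then track, as additional polynomial-size bookkeeping carried alongside the round-by-round simulation, (i) for each universal obligation a single Boolean flag recording whether it has ever been violated, and (ii) for each existential obligation a Boolean flag recording whether it has yet been satisfied. Because the quantifiers are non-nested, these flags can be updated locally as each round is finalized and leaves the sliding window, and at the end of the simulation the machine accepts iff all universal flags are still satisfied and all existential flags have fired.

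The main obstacle I anticipate is bounding the number of rounds that the witness execution must use and arguing the simulation terminates in polynomial space while still being able to fulfill the existential obligations of $\psi$. A naive execution might need arbitrarily many rounds, but one should be able to show a \emph{small-model} / \emph{pumping} property: if any abstract configuration satisfying $\psi$ is reachable, then one is reachable within a number of rounds bounded polynomially (or at worst singly-exponentially, which still fits \PSPACE{} via Savitch-style reachability) in $|\prot|$ and $|\psi|$, because rounds beyond those needed to witness the finitely many existential conjuncts and to establish the sliding-window interactions are redundant and can be truncated or collapsed by a copycat-style argument. Establishing this round bound, and reconciling it with the universal conjuncts of $\psi$ (which must hold on \emph{every} round, including the collapsed ones), is where the careful work lies; once it is in place, the whole procedure is a nondeterministic polynomial-space simulation, and \PSPACE{}${}={}$\NPSPACE{} (Savitch) closes the argument.
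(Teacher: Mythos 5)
Your proposal is correct and follows essentially the same route as the paper: hardness is inherited from \COVER{} (Proposition~\ref{prop:pspace_hard_cover_roundbased}), and membership is obtained by working in the abstract semantics, guessing the execution round by round through a sliding window of $O(\vrange)$ rounds (the paper's ``footprints'', glued via Lemma~\ref{lem:combiningfootprints}), carrying existential/universal/closed obligations of $\psi$ as polynomial-size bookkeeping (the sets $E$, $U$, $C$), and terminating via a pigeonhole bound on the polynomially-bounded memory, with Savitch closing the gap from \NPSPACE{} to \PSPACE. The only cosmetic difference is that you gesture at the first-write-order summary of the earlier \COVER{} work, whereas this paper replaces that \COVER-specific abstraction with footprints precisely so that the richer constraints can be evaluated on the window.
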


\section{A Polynomial-Space Algorithm for Round-Based \PRP}
\label{sec:algo_pspace_prp_roundbased}
In this section, we provide a polynomial-space algorithm for round-based \PRP. Thanks to Savitch's theorem, it suffices to find a non-deterministic polynomial-space algorithm. 
To do so, one wants to guess an execution that reaches a configuration satisfying the "presence constraint". However, as shown in \cite[Proposition 13]{myicalppaper}, one may need, at a given point along such an execution, the number of active rounds to be exponential (an active round being informally a round on which something has already happened and something else is yet to happen). Thus, storing the execution step by step in polynomial space seems hard; instead, our algorithm will guess the execution round by round. To do this, we define the notion of "footprint", which represents the projection of an execution onto a narrow window of rounds.

Thanks to Proposition~\ref{prop:soundcomplete_roundless}, round-based \PRP can be studied directly in the abstraction. In the rest of the paper, all configurations and executions are implicitly abstract. 

\subsection{Footprints}
\label{subsec:footprints}
Let $j \leq k$. We write $\locallocations{j}{k}$ for the set of locations at rounds $\max(j,0)$ to $k$; similarly, we write $\localregisters{j}{k}$ for the set of registers of rounds $\max(j,0)$ to $k$.  
A~""local configuration"" on (rounds) $\iset{j}{k}$ is an element of $2^{\locallocations{j}{k}} \times \dataalp^{\localregisters{j}{k}}$. The set of "local configurations" on $\iset{j}{k}$ is written $\localconfigurations{j}{k}$. 
Given $\aconfig \in \aconfigs$, the "local configuration" 
$\localconfigproj{j}{k}{\aconfig}$
is obtained by removing from $\aconfig$ all information that is not about rounds $j$ to $k$. Note that "local configurations" are local with respect to the rounds, and not with respect to processes.

Given $\localconfig, \localconfig' \in \localconfigurations{j}{k}$ and a "move" $\move$, we write $\localconfig \step{\move} \localconfig'$ when there exist two configurations $\aconfig$ and $\aconfig'$ such that $\aconfig \step{\move} \aconfig'$, $\localconfigproj{j}{k}{\aconfig} = \localconfig$ and $\localconfigproj{j}{k}{\aconfig'} = \localconfig'$. In practice:
\begin{itemize}
\item if $\move$ is a move with no "effect" on rounds $j$ to $k$, then
$\localconfig \step{\move} \localconfig'$ if $\localconfig = \localconfig'$;
\item if $\move = ((q, \incr,q'),j{-}1)$ then $\localconfig \step{\move} \localconfig'$ holds with no condition that $(q,j{-}1)$ is populated in $\localconfig$, since $j{-}1$ is outside of $\iset{j}{k}$; 
\item if $\move = ((q, \readact{-b}{\regid}{\asymb}), l)$ with $l{-}b < j$ (read from register of round $<j$), there is no condition on the content of the register.
\end{itemize}

A ""footprint"" on (rounds) $\iset{j}{k}$ corresponds to the projection of an execution on rounds $\iset{j}{k}$. Formally, it is an alternating sequence $\localconfig_0, \move_0, \localconfig_1, \dots, \move_{m-1}, \localconfig_m$ where for all $i \in \zset{m}$, $\localconfig_i \in \localconfigurations{j}{k}$ and for all $i \leq m-1$, $\localconfig_i \step{\move_i} \localconfig_{i+1}$ and $\localconfig_i \ne \localconfig_{i+1}$.  

Let $\exec = \aconfig_0, \move_0, \aconfig_1, \dots, \move_{m{-}1}, \aconfig_{m}$ be an execution. The ""footprint of $\exec$ on (rounds) $\iset{j}{k}$"", written $\fpproj{j}{k}{\exec}$,  is the "footprint" on $\iset{j}{k}$ 
obtained from $\exec$ by replacing $\aconfig_i$ by $\localconfig_i = \localconfigproj{j}{k}{\aconfig_i}$ and then removing all useless steps $\localconfig_{i} \step{\move} \localconfig_{i{+}1}$ with $\localconfig_i = \localconfig_{i{+}1}$ (by merging $\localconfig_i$ and $\localconfig_{i{+}1}$, so $\fpproj{j}{k}{\exec}$ can be shorter than $\exec$).  Similarly, for $\iset{j'}{k'} \supseteq \iset{j}{k}$ and $\afootprint$ a "footprint" on $\iset{j'}{k'}$, define the ""projection@@footprint"" $\fpproj{j}{k}{\afootprint}$ by the footprint obtained by replacing each "local configuration" in $\afootprint$ by its projection on $\iset{j}{k}$ and removing useless steps.

The following result provides a sufficient condition for a sequence of footprints to be seen as projections of a single common execution.
\begin{restatable}{lemma}{combiningfootprints}
\label{lem:combiningfootprints}
Let  $K \in \NN$, $(\afootprint_k)_{k \leq K}$ and $(T_k)_{k \leq K{-}1}$ such that:
\begin{itemize}
\item for all $k \leq K$, $\afootprint_k$ is a "footprint" on $\iset{k{-}\vrange{+}1}{k}$,
\item for all $k \leq K{-}1$, $T_k$ is a "footprint" on $\iset{k{-}\vrange{+}1}{k{+}1}$,
\item for all $k \leq K{-}1$, $\fpproj{k{-}\vrange{+}1}{k}{T_{k}} = \afootprint_{k}$,
\item for all $k \leq K{-}1$, $\fpproj{k{-}\vrange{+}2}{k{+}1}{T_{k}} = \afootprint_{k{+}1}$.
\end{itemize}
 There exists an execution $\exec$ such that, for all $k \leq K$, $\fpproj{k{-}\vrange{+}1}{k}{\exec} = \afootprint_k$. 
\end{restatable}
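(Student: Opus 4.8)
Let me parse the lemma carefully.The goal is to reconstruct a single abstract execution $\exec$ from a family of overlapping footprints. We have footprints $\afootprint_k$ on the window $\iset{k{-}\vrange{+}1}{k}$ (the last $\vrange$ rounds ending at $k$), and "bridge" footprints $T_k$ on the wider window $\iset{k{-}\vrange{+}1}{k{+}1}$ that project down onto $\afootprint_k$ (shifting the window left) and onto $\afootprint_{k{+}1}$ (shifting it right). The compatibility conditions say that consecutive windows agree on their overlap, as witnessed concretely by the $T_k$. The conclusion is that all these footprints come from one genuine execution.

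\textbf{Approach.} The plan is to build $\exec$ by induction on $K$, gluing in one new round at a time. At stage $k$ I will have an execution $\exec^{(k)}$ whose footprint on every window $\iset{l{-}\vrange{+}1}{l}$ for $l \leq k$ equals $\afootprint_l$. The bridge $T_k$ is the key device: it is a footprint on a window of width $\vrange{+}1$ that simultaneously refines $\afootprint_k$ and $\afootprint_{k{+}1}$. So $T_k$ tells me, for each step of $\afootprint_k$, how to interleave it with the freshly-appearing activity on round $k{+}1$ so as to also reproduce $\afootprint_{k{+}1}$. The induction step therefore amounts to replaying $\exec^{(k)}$ but inserting the round-$(k{+}1)$ moves at the positions dictated by $T_k$, and checking that the result is a legal execution whose footprint on $\iset{k{+}2{-}\vrange}{k{+}1}$ is exactly $\afootprint_{k{+}1}$.

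\textbf{Key steps.} First, I would make precise the sense in which $\exec^{(k)}$ and $T_k$ are "aligned": since $\fpproj{k{-}\vrange{+}1}{k}{T_k} = \afootprint_k = \fpproj{k{-}\vrange{+}1}{k}{\exec^{(k)}}$, both $T_k$ and $\exec^{(k)}$ induce the same sequence of local configurations on the common window $\iset{k{-}\vrange{+}1}{k}$ once useless (stuttering) steps are removed. I would use this common skeleton as a synchronization spine: walk through $\exec^{(k)}$ and $T_k$ in lockstep along this spine, and whenever $T_k$ performs a move with effect only on round $k{+}1$ (a move outside the window $\iset{k{-}\vrange{+}1}{k}$, hence invisible to $\afootprint_k$), insert the corresponding round-$(k{+}1)$ move into $\exec^{(k)}$ at that point. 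Because a move at round $k{+}1$ can read only registers of rounds $\geq k{+}2{-}\vrange$ (within the window of $T_k$) and an increment into round $k{+}1$ comes from round $k$, every such inserted move has all its read/write/increment preconditions already guaranteed by $T_k$, so insertion preserves legality of the execution. Second, I would verify that the resulting execution $\exec^{(k+1)}$ still has footprint $\afootprint_l$ on every older window $l \leq k$: the inserted moves act only on round $k{+}1$, which lies outside $\iset{l{-}\vrange{+}1}{l}$ for $l \leq k$, so projecting onto those older windows is unaffected. Third, the footprint of $\exec^{(k+1)}$ on $\iset{k{+}2{-}\vrange}{k{+}1}$ equals $\fpproj{k{+}2{-}\vrange}{k{+}1}{T_k} = \afootprint_{k{+}1}$, because by construction $\exec^{(k+1)}$ restricted to that window is a faithful replay of $T_k$ restricted to that window.

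\textbf{Main obstacle.} The delicate part is the interleaving/insertion bookkeeping: I must argue that the moves of $T_k$ acting on round $k{+}1$ can be slotted into $\exec^{(k)}$ in an order consistent with both the spine of $\afootprint_k$ and the ordering demanded by $\afootprint_{k{+}1}$, without violating any read-precondition. The subtlety is that a round-$(k{+}1)$ read can consult registers of rounds as low as $k{+}2{-}\vrange$, which straddle the boundary between "old" and "new" data, so I must confirm that the register values seen along $\exec^{(k)}$ at the insertion point agree with those prescribed by $T_k$. This is exactly what the hypothesis $\fpproj{k{-}\vrange{+}1}{k}{T_k} = \afootprint_k$ buys, provided I have matched up the spine of $T_k$ with that of $\exec^{(k)}$ correctly; getting this alignment canonical (by removing stuttering steps consistently) is where the real care is required. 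Everything else — preservation of older footprints and legality of non-boundary moves — is routine once the alignment is fixed.
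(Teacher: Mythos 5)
Your strategy is sound and rests on the same core insight as the paper's proof, namely: align the two objects along their common projection, interleave the off-spine moves block by block, and use the fact that the window width covers the visibility range $\vrange$ so that a move at round $k{+}1$ only consults registers of rounds in $\iset{k{-}\vrange{+}1}{k{+}1}$, which are frozen within each block of the spine. What differs is the organization of the induction. The paper never builds an execution until the very last step: it first proves a pairwise merging lemma (two footprints on windows shifted by one that agree on their overlap combine into a single footprint on the union window, by induction on the length of the common projection, the base case being a concatenation of the moves of the two boundary rounds, which cannot interact since the width exceeds $\vrange$), and then runs an induction on $K$ in which each step trades the $K{+}1$ footprints of width $w$ for $K$ footprints of width $w{+}1$ (the $T_k$ become the new $\afootprint_k$ and freshly merged $U_k$ become the new $T_k$), terminating when a single footprint covers all rounds and is itself the desired execution. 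You instead sweep left to right, maintaining an actual execution $\exec^{(k)}$ and splicing in the round-$(k{+}1)$ activity prescribed by $T_k$. Both routes work; yours avoids the auxiliary width-parameterized lemma, at the price of merging an unbounded-width object (the partial execution) with each $T_k$, which is exactly the alignment bookkeeping you flag and which the paper sidesteps by only ever merging two bounded, adjacent windows. One point worth making explicit if you write this out: the moves of $T_k$ that are invisible in $\afootprint_k$ and must be inserted include not only moves at round $k{+}1$ but also non-deserting increments taken at round $k$ (they populate a location of round $k{+}1$ without changing the projection onto $\iset{k{-}\vrange{+}1}{k}$); you should then observe that such moves cannot already occur in $\exec^{(k)}$, precisely because they are stuttering for every window ending at a round at most $k$, so no move is inserted twice.
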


\subsection{A Polynomial-Space Algorithm for Round-Based \PRP}
\label{subsec:outline_algorithm}

The algorithms guesses the witness execution "footprint" by "footprint", and stops when the "presence constraint" is satisfied. 
Algorithm~\ref{algo:pspace} provides the skeleton of this procedure. For the sake of simplicity, we suppose that $\vrange \geq 1$. If $\vrange = 0$, we artificially increase $\vrange$ to $1$.

\begin{algorithm}[htbp]
\SetKwFunction{InitFun}{\initalgo}
\SetKwFunction{OnestepFun}{\onestepalgo}
\SetKwFunction{TestFun}{\testalgo}
\textbf{Input:} A \PRP instance $(\prot, \psi)$ \\
  $E, U, C \assignalgo \emptyset$ \;
  $\tau \assignalgo \epsilon$ \nllabel{line_initialisation}  
   \tcp*[r]{dummy "footprint" on rounds $\iset{-\vrange}{-1}$}
  Guess the initial set $I \subseteq \initialstates$ of populated states at round $0$  \nllabel{line_guess_initial} \; 
  \InitFun{$E,U, C$} \nllabel{line_initfun} \;
  \For{$k$ from $0$ to
    $+ \infty$ \nllabel{line_loop_round}}{
    Guess $T$ a "footprint" on $\iset{k{-}\vrange}{k}$ such that $\fpproj{k{-}\vrange}{k{-}1}{T} = \tau$ \;
    Check that $T$ is consistent with the initial configuration \nllabel{line_check_initial} \;
    $\localconfig \assignalgo $ last configuration in $T$ \; 
    \OnestepFun{$E, U, C, \localconfig$} \nllabel{line_onestepfun}\;
    \lIf{\TestFun{$E, U, C, \localconfig$} \nllabel{line_testfun}}{
      Accept \nllabel{line_accept}
    }
    $\tau \assignalgo \fpproj{k{-}\vrange{+}1}{k}{T}$ \;
    }
    
\caption{Non-deterministic algorithm for round-based \PRP}\label{algo:pspace}
\end{algorithm}

For all $k \in \NN$, let $\tau_{k}$ be the value of $\tau$ at the end of iteration $k$ and $T_k$ the value of $T$ guessed at iteration $k{+}1$. Thanks to Lemma~\ref{lem:combiningfootprints}, if the algorithm reaches the end of iteration $K$ then there exists an execution $\exec$ whose projection on $\iset{k{-}\vrange}{k-1}$ is $\tau_k$ for every $k \leq K$.  

Handling the "round-based presence constraint" is technical, so we hide it in functions \texttt{\initalgo}, \texttt{\onestepalgo} and \texttt{\testalgo} and postpone the details to Appendix~\ref{appendix:details_algo_pspace}. We guess why $\psi$ is true by guessing satisfied \emph{atomic propositions} of three types: existentially quantified on the round (\emph{i.e.}, of the form ``$\exists k \, \phi$'' where $\phi$ has no quantifiers and only $k$ as free variable) which we put in $E$; universally quantified on the round (\emph{i.e.}, of the form ``$\forall k \, \phi$'' where $\phi$ has no quantifiers and only $k$ as free variable) which we put in $U$; with no quantifier (\emph{i.e.}, of the form ``$\phi$'' where $\phi$ has no quantifiers and no free variables) which we put in $C$. 
Formulas in $C$ refer to constant rounds and are checked at these rounds only.  Formulas in $U$ are checked at every round. For formulas in $E$, the algorithm guesses at which round the formula is true.
Our algorithm is correct with respect to round-based \PRP:

\begin{restatable}{proposition}{correctnessalgo}
\label{prop:correctness_algo_pspace}
$(\prot, \psi)$ is a positive instance of round-based \PRP if and only if there exists an accepting computation of Algorithm~\ref{algo:pspace} on $(\prot, \psi)$.
\end{restatable}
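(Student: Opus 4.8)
The final statement to prove is Proposition~\ref{prop:correctness_algo_pspace}, the correctness of Algorithm~\ref{algo:pspace}: $(\prot,\psi)$ is a positive instance of round-based \PRP if and only if the algorithm has an accepting computation.

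\textbf{Overall plan.}
The proof is a two-direction equivalence, and the natural strategy is to make the connection between computations of the algorithm and abstract executions fully explicit via Lemma~\ref{lem:combiningfootprints}. The central object is the sequence of footprints $(\tau_k)_k$ produced by the algorithm: by construction each guessed $T$ at iteration $k$ is a footprint on $\iset{k{-}\vrange}{k}$ whose projection onto $\iset{k{-}\vrange}{k{-}1}$ equals the previously stored $\tau_k$, and $\tau_{k{+}1}$ is set to the projection of $T$ onto $\iset{k{-}\vrange{+}1}{k}$. I would first observe that these consistency constraints are exactly the hypotheses of Lemma~\ref{lem:combiningfootprints} (after reindexing the window so that $T_k$ plays the role of the width-$(\vrange{+}1)$ footprint bridging $\afootprint_k$ and $\afootprint_{k{+}1}$). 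Hence any finite accepting run that halts at iteration $K$ yields, via the lemma, a genuine abstract execution $\exec$ with $\fpproj{k{-}\vrange}{k{-}1}{\exec} = \tau_k$ for all $k \le K$. Conversely, given any witnessing execution $\exec$ for the \PRP instance, its family of footprints $\fpproj{k{-}\vrange}{k}{\exec}$ automatically satisfies the same consistency relations, so the algorithm can guess exactly these footprints round by round.

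\textbf{Structure of the two directions.}
For the ``if'' direction I would take an accepting computation, extract the execution $\exec$ from Lemma~\ref{lem:combiningfootprints}, and argue that the final configuration $\aconfig$ of $\exec$ satisfies $\psi$. This is where the bookkeeping sets $E$, $U$, $C$ enter: I would prove an invariant stating that, after processing round $k$, the contents of $E$, $U$, $C$ correctly record which atomic propositions of each quantifier-type have been witnessed or remain to be checked, and that acceptance by \texttt{\testalgo} certifies that every conjunct/disjunct needed to make $\psi$ true has been validated against the rounds seen so far. The key semantic point is that an existentially-round-quantified atom is satisfied as soon as it holds at one round (so \texttt{\onestepalgo} may discharge it at any iteration), a universally-round-quantified atom must hold at every round (so \texttt{\testalgo} checks it at each $k$ and only allows acceptance once all relevant rounds are closed), and constant-round atoms are checked at their fixed rounds. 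Soundness and completeness of the abstraction (Proposition~\ref{prop:soundcomplete_roundbased}) then lets me transfer satisfaction of $\psi$ from the abstract execution to a concrete reachable configuration. For the ``only if'' direction I would start from a concrete witness, abstract it via Proposition~\ref{prop:soundcomplete_roundbased}, read off the round $K$ beyond which nothing relevant to $\psi$ happens (finitely many rounds are touched, and the quantifiers are non-nested so each atom is decided by a bounded window), guess $I$, $E$, $U$, $C$ according to why $\psi$ holds on $\aconfig$, and feed the footprints of $\exec$ to the algorithm.

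\textbf{Main obstacle.}
The delicate part is not Lemma~\ref{lem:combiningfootprints}, which is quoted and does the heavy lifting of reconstructing an execution from overlapping footprints; it is the correct handling of the universally-quantified atomic propositions. Because $k$ ranges over all of $\NN$ and registers of every round must eventually satisfy the universal constraint, the algorithm must be sure that no \emph{future} round will ever violate a formula placed in $U$, even though it only inspects one window at a time and the execution is unbounded. I expect to resolve this by exploiting that, past the last round on which any process ever acts, all later rounds are in a fixed ``inert'' configuration determined by the registers' values, so a universal check reduces to a check on the finitely many active rounds plus one symbolic check on the stationary tail; dually, the completeness argument must verify that the honest witness execution does make every $U$-formula hold on all rounds, including the inert tail. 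Making this tail argument precise, and matching it against the constant-round atoms in $C$ whose fixed indices may lie beyond the active region, is the step that requires the most care, and it is exactly what is deferred to the appendix description of \texttt{\initalgo}, \texttt{\onestepalgo} and \texttt{\testalgo}.
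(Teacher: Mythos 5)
Your proposal is correct and follows essentially the same route as the paper: Lemma~\ref{lem:combiningfootprints} to convert between consistent footprint sequences and genuine executions in both directions, an invariant on $E$, $U$, $C$ showing that every formula placed in these sets is eventually validated against the reconstructed execution, and the ``inert tail'' check in \texttt{\testalgo} (formalized in the paper as $\confpair{\emptyset}{\datainit^{\regset{}}} \models \phi$) to discharge universally quantified and out-of-range constant-round atoms beyond the last active round $K$. The obstacle you single out is indeed exactly where the paper concentrates its effort, so no further comparison is needed.
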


The integer constants in the "presence constraint" $\psi$ are encoded in unary, like the "visibility range" $\vrange$. These two hypotheses are reasonable since practical examples typically use constants of small value (\emph{e.g.}, $1$). Under these hypotheses, we obtain a polynomial spatial bound on the size of "footprints" of a well-chosen witness execution, which in turn gives a polynomial spatial bound for the algorithm: 
\begin{restatable}{proposition}{algopolyspace}
\label{prop:polynomial_space}
Algorithm~\ref{algo:pspace} works in space $O(|\psi|^3 + |Q|^2 \, (\vrange{+}1)^2 \, \log(\rdim \, |\dataalp|))$.
\end{restatable}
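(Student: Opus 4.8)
The plan is to separate the memory used by Algorithm~\ref{algo:pspace} into two independent parts: the bookkeeping of the presence constraint $\psi$ through the sets $E$, $U$, $C$, and the storage of the footprints $\tau$ and $T$ (together with the auxiliary variables $\localconfig$, $I$ and the round index $k$). The round counter $k$ need only be compared against the constants occurring in $\psi$; since these are written in unary, $k$ may be capped at the largest of them and kept in $O(\log |\psi|)$ bits. The variable $I$ is a subset of $\initialstates$, hence $O(|Q|)$ bits. It therefore suffices to bound, on the one hand, the space for $E$, $U$, $C$ and, on the other hand, the space for a single footprint.

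For the formula part, each element of $E$, $U$, $C$ is an atomic proposition of $\psi$, of which there are $O(|\psi|)$, each of size $O(|\psi|)$; storing for every existentially quantified proposition its guessed witness round (in unary, $O(|\psi|)$) and the per-proposition satisfaction status yields the claimed $O(|\psi|^3)$ contribution. The precise encoding is that of the functions \texttt{\initalgo}, \texttt{\onestepalgo} and \texttt{\testalgo} detailed in Appendix~\ref{appendix:details_algo_pspace}, so I would only sketch this accounting here.

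For the footprint part, a footprint on a window of $\vrange{+}1$ rounds is an alternating sequence of local configurations and moves. A single local configuration in $\localconfigurations{k{-}\vrange}{k}$ records a subset of $\locallocations{k{-}\vrange}{k}$, that is $O(|Q|(\vrange{+}1))$ bits, together with the values of the registers in $\localregisters{k{-}\vrange}{k}$; after discarding writes whose value is never read, only $O(|Q|(\vrange{+}1))$ registers hold a non-initial value, each described by an index--value pair in $O(\log(\rdim|\dataalp|))$ bits. Hence one local configuration occupies $O(|Q|(\vrange{+}1)\log(\rdim|\dataalp|))$ bits, and the moves are cheaper, $O(\log|\transitions|)$ each. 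If the footprint has length $L$, the whole object fits in $O(L \cdot |Q|(\vrange{+}1)\log(\rdim|\dataalp|))$ bits, and the target bound $O(|Q|^2(\vrange{+}1)^2\log(\rdim|\dataalp|))$ follows provided $L = O(|Q|(\vrange{+}1))$.

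The heart of the argument, and the step I expect to be the main obstacle, is therefore establishing that the non-deterministic guesses may be restricted to footprints of length $L = O(|Q|(\vrange{+}1))$ without losing completeness. By Proposition~\ref{prop:correctness_algo_pspace} a positive instance admits a witness execution, and by Lemma~\ref{lem:combiningfootprints} soundness holds for footprints of any length, so it suffices to normalize the witness so that its footprint on every window $\iset{k{-}\vrange}{k}$ is short. I would adapt the shortening used for the roundless case in Proposition~\ref{prop:prp_npc_roundless}: inside each window, keep only the steps that populate a new location, desert a location, or write a value that is later read, and use the round-based copycat property to turn deserting steps into non-deserting ones whenever their source location is repopulated later. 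Since a window contains $O(|Q|(\vrange{+}1))$ locations, this leaves $O(|Q|(\vrange{+}1))$ location-changing steps and hence $O(|Q|(\vrange{+}1))$ relevant read and write steps. The delicate points, compared with the roundless proof, are that increments at round $k{-}\vrange{-}1$ cross the lower boundary of the window and that reads may consult registers of rounds below the window; I would handle these by observing that such moves affect the window only through the consistency constraints of Lemma~\ref{lem:combiningfootprints}, so that the normalization is carried out globally on the execution while respecting the overlaps between consecutive windows, and the per-window footprints inherit the length bound.
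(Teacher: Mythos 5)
Your proposal follows the paper's proof in all essentials: normalize the witness execution so that every step either populates a new location, deserts its source, or writes a value that is later read (the paper's ``normal form''), bound the number of such steps per window of $\vrange{+}1$ rounds (the paper counts per round and gets $O(|Q|(\vrange{+}1))$ steps per round, hence $O(|Q|(\vrange{+}1)^2)$ per footprint, which still fits the claimed bound), and account separately for the footprints and for the sets $E$, $U$, $C$. The one imprecision is your accounting of $C$: its elements are round-instantiated literals, and a universally quantified constraint deposits fresh literals at every round that persist for up to $M=O(|\psi|)$ rounds before being checked, so $C$ holds $O(|\psi|^2)$ literals of size $O(|\psi|)$ each — this accumulation, not the per-proposition witness rounds you describe, is where the $O(|\psi|^3)$ term actually comes from.
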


Finally, we need to discuss the termination of the algorithm.
According to the pigeonhole principle, after an exponential number of iterations, the elements stored in memory repeat from a previous iteration and we can stop the computation. 
One can thus use a counter, encoded in polynomial
space, to count iterations and return a
decision when the counter reaches its largest value.
Thanks to the space bounds from Proposition~\ref{prop:polynomial_space}, correctness from Proposition~\ref{prop:correctness_algo_pspace} and the stopping criterion, our algorithm decides "round-based \PRP" in non-deterministic polynomial space, proving Theorem~\ref{thm:prp_pspace_roundbased}.  

\paragraph*{Acknowledgements}
Many thanks to Nathalie Bertrand, Nicolas Markey and Ocan Sankur for their invaluable advice.
\end{scope}

\bibliography{refs}

\begin{thebibliography}{10}

\bibitem{tso}
Parosh~Aziz Abdulla, Mohamed~Faouzi Atig, and Rojin Rezvan.
\newblock Parameterized verification under {TSO} is {PSPACE}-complete.
\newblock {\em Proc. {ACM} Program. Lang.}, 4({POPL}):26:1--26:29, 2020.
\newblock \href {https://doi.org/10.1145/3371094} {\path{doi:10.1145/3371094}}.

\bibitem{Aspnes-ja02}
James Aspnes.
\newblock Fast deterministic consensus in a noisy environment.
\newblock {\em Journal of Algorithms}, 45(1):16--39, 2002.
\newblock \href {https://doi.org/10.1016/S0196-6774(02)00220-1}
  {\path{doi:10.1016/S0196-6774(02)00220-1}}.

\bibitem{BW-erratum}
A.~R. Balasubramanian, Lucie Guillou, and Chana Weil-Kennedy.
\newblock Erratum to parameterized analysis of reconfigurable broadcast
  networks, 2022.
\newblock \url{https://www.model.in.tum.de/~weilkenn/erratum-fossacs22.pdf}.

\bibitem{BW-fossacs21}
A.~R. Balasubramanian, Lucie Guillou, and Chana Weil{-}Kennedy.
\newblock Parameterized analysis of reconfigurable broadcast networks (long
  version).
\newblock 2022.
\newblock \url{https://arxiv.org/abs/2201.10432}.

\bibitem{BW-gandalf21}
A.~R. Balasubramanian and Chana Weil{-}Kennedy.
\newblock Reconfigurable broadcast networks and asynchronous shared-memory
  systems are equivalent.
\newblock In {\em GandALF 2021}, volume 346, pages 18--34, 2021.
\newblock \href {https://doi.org/10.4204/EPTCS.346.2}
  {\path{doi:10.4204/EPTCS.346.2}}.

\bibitem{myicalppaper}
Nathalie Bertrand, Nicolas Markey, Ocan Sankur, and Nicolas Waldburger.
\newblock Parameterized safety verification of round-based shared-memory
  systems.
\newblock In {\em {ICALP} 2022}, pages 113:1--113:20. Schloss Dagstuhl -
  Leibniz-Zentrum f{\"{u}}r Informatik, 2022.
\newblock \href {https://doi.org/10.4230/LIPIcs.ICALP.2022.113}
  {\path{doi:10.4230/LIPIcs.ICALP.2022.113}}.

\bibitem{BJKKRVW-book15}
Roderick Bloem, Swen Jacobs, Ayrat Khalimov, Igor Konnov, Sasha Rubin, Helmut
  Veith, and Josef Widder.
\newblock {\em Decidability of Parameterized Verification}.
\newblock Synthesis Lectures on Distributed Computing Theory. Morgan {\&}
  Claypool Publishers, 2015.
\newblock \href {https://doi.org/10.2200/S00658ED1V01Y201508DCT013}
  {\path{doi:10.2200/S00658ED1V01Y201508DCT013}}.

\bibitem{BMRSS-icalp16}
Patricia Bouyer, Nicolas Markey, Mickael Randour, Arnaud Sangnier, and Daniel
  Stan.
\newblock Reachability in networks of register protocols under stochastic
  schedulers.
\newblock In {\em {ICALP} 2016}, volume~55 of {\em LIPIcs}, pages
  106:1--106:14. Schloss Dagstuhl - Leibniz-Zentrum f{\"{u}}r Informatik, 2016.
\newblock \href {https://doi.org/10.4230/LIPIcs.ICALP.2016.106}
  {\path{doi:10.4230/LIPIcs.ICALP.2016.106}}.

\bibitem{delzanno2012}
Giorgio Delzanno, Arnaud Sangnier, Riccardo Traverso, and Gianluigi Zavattaro.
\newblock On the complexity of parameterized reachability in reconfigurable
  broadcast networks.
\newblock In {\em {FSTTCS} 2012}, volume~18 of {\em LIPIcs}, pages 289--300.
  Schloss Dagstuhl - Leibniz-Zentrum f{\"{u}}r Informatik, 2012.
\newblock \href {https://doi.org/10.4230/LIPIcs.FSTTCS.2012.289}
  {\path{doi:10.4230/LIPIcs.FSTTCS.2012.289}}.

\bibitem{Esparza-stacs14}
Javier Esparza.
\newblock Keeping a crowd safe: On the complexity of parameterized verification
  (invited talk).
\newblock In {\em {STACS} 2014}, volume~25 of {\em LIPIcs}, pages 1--10.
  Schloss Dagstuhl - Leibniz-Zentrum f{\"{u}}r Informatik, 2014.
\newblock \href {https://doi.org/10.4230/LIPIcs.STACS.2014.1}
  {\path{doi:10.4230/LIPIcs.STACS.2014.1}}.

\bibitem{EGM-jacm16}
Javier Esparza, Pierre Ganty, and Rupak Majumdar.
\newblock Parameterized verification of asynchronous shared-memory systems.
\newblock {\em Journal of the {ACM}}, 63(1):10:1--10:48, 2016.
\newblock \href {https://doi.org/10.1145/2842603} {\path{doi:10.1145/2842603}}.

\bibitem{fournier_phd_thesis}
Paulin Fournier.
\newblock {\em Parameterized verification of networks of many identical
  processes}.
\newblock PhD thesis, University of Rennes 1, France, 2015.
\newblock \url{https://tel.archives-ouvertes.fr/tel-01355847}.

\bibitem{GS-jacm92}
Steven~M. German and A.~Prasad Sistla.
\newblock Reasoning about systems with many processes.
\newblock {\em Journal of the {ACM}}, 39(3):675--735, 1992.
\newblock \href {https://doi.org/10.1145/146637.146681}
  {\path{doi:10.1145/146637.146681}}.

\bibitem{hague}
Matthew Hague.
\newblock Parameterised pushdown systems with non-atomic writes.
\newblock In {\em {FSTTCS} 2011}, volume~13 of {\em LIPIcs}, pages 457--468.
  Schloss Dagstuhl - Leibniz-Zentrum f{\"{u}}r Informatik, 2011.
\newblock \href {https://doi.org/10.4230/LIPIcs.FSTTCS.2011.457}
  {\path{doi:10.4230/LIPIcs.FSTTCS.2011.457}}.

\bibitem{pcomplete}
Richard~E. Ladner.
\newblock The circuit value problem is log space complete for {P}.
\newblock {\em {SIGACT} News}, 7(1):18--20, 1975.
\newblock \href {https://doi.org/10.1145/990518.990519}
  {\path{doi:10.1145/990518.990519}}.

\bibitem{raynalSimpleAsynchronousShared2012}
Michel Raynal and Julien Stainer.
\newblock A simple asynchronous shared memory consensus algorithm based on
  omega and closing sets.
\newblock In {\em {CISIS} 2012}, pages 357--364. {IEEE} Computer Society, 2012.
\newblock \href {https://doi.org/10.1109/CISIS.2012.198}
  {\path{doi:10.1109/CISIS.2012.198}}.

\end{thebibliography}

\appendix
\newpage
\pagebreak
\setcounter{theorem}{0}
\def\theHtheorem{\theHsection.\arabic{theorem}}
\def\theHlemma{\theHsection.\arabic{lemma}}
\def\theHproposition{\theHsection.\arabic{proposition}}
\def\theHcorollary{\theHsection.\arabic{corollary}}
\def\theHexample{\theHsection.\arabic{example}}
\def\thetheorem{\thesection.\arabic{theorem}}
\def\thelemma{\thesection.\arabic{lemma}}
\def\theproposition{\thesection.\arabic{proposition}}
\def\thecorollary{\thesection.\arabic{corollary}}
\def\theexample{\thesection.\arabic{example}}

\noindent {\LARGE\bfseries\sffamily Technical appendix}

\section{Roundless Register Protocols}
Recall that the abstract semantics is non-deterministic due to the choice between $\state{\cconfig'} = (\state{\cconfig} \setminus \set{q}) \cup \set{q'}$ and  $\state{\cconfig'} = \state{\cconfig} \cup \set{q'}$. The first case is called ""deserting"", and the latter ""non-deserting"". A "deserting" "step" corresponds in the concrete semantics to all processes in $q$ taking the transition at once.

Moreover, given a transition $(q,a,q')$, state $q$ is called the ""source"" and state $q'$ is called the ""destination"" of the "transition", and similarly for a step.

\subsection{Proof of Proposition~\ref{lem:copycat_roundless}}
\label{appendix:proof_copycat}
\copycat*

\begin{proof}
We prove the result by induction on the length of the execution.
If the execution is of length $0$ then one simply considers $q_1 := q_2$.
Let $\cexec: \cconfig_1 \step{*} \cconfig_2$ and suppose that the property is true for all executions of length $|\cexec| -1$. Decompose $\cexec$ into $\cconfig_1 \step{*} \cconfig_3 \step{\atrans} \cconfig_2$. If $q_2$ is not the destination of $\atrans$, then $q_2 \in \supp{\cconfig_3}$ and we directly apply the induction hypothesis on $\cconfig_1 \step{*} \cconfig_3$ and $q_2$ then conclude by taking $\atrans$ to get to $\confpair{\state{\cconfig_2} \oplus q_2}{\datafun{\cconfig_2}}$.
Assume that $q_2$ is the destination of $\atrans$; let $q_3$ the source of $\atrans$. We have $q_3 \in \supp{\cconfig_3}$, so we apply the induction hypothesis on $\cconfig_1 \step{*} \cconfig_3$ and $q_3$: we obtain that there exists $q_1 \in \supp{\cconfig_1}$ such that $\confpair{\state{\cconfig_1} \oplus q_1}{\datafun{\cconfig_3}} \step{*} \confpair{\state{\cconfig_3} \oplus q_3}{\datafun{\cconfig_2}}$ Moreover, we have $\confpair{\state{\cconfig_3} \oplus q_3}{\datafun{\cconfig_2}} \step{\atrans}  \confpair{\state{\cconfig_2} \oplus q_3}{\datafun{\cconfig_2}} \step{\atrans}  \confpair{\state{\cconfig_2} \oplus q_2}{\datafun{\cconfig_2}}$. Indeed, if $\atrans$ is a read transition then the symbol is still in the register in $\cconfig_2$ and may be read again, and if $\atrans$ is write transition, then writing again a symbol to a register does not change its content. 
\end{proof}

\subsection{Proof of Proposition~\ref{prop:soundcomplete_roundless}}
\label{appendix:proof_soundcomplete}
\soundcompleteroundless*
First, the following lemma states that any concrete execution can easily be transformed into an abstract one. 
\begin{lemma}
\label{lem:concrete_to_abstract}
Let $\cconfig, \cconfig' \in \cconfigs$ and $\cexec: \cconfig \step{*} \cconfig'$. There exists $\exec: \abstractproj{\cconfig} \step{*} \abstractproj{\cconfig'}$. 
\end{lemma}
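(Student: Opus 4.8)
The plan is to proceed by induction on the number of steps in the concrete execution $\cexec$, lifting it one step at a time while relying on the equivalent characterization of abstract steps stated in Subsection~\ref{subsec:abstract_semantics_roundless}, namely that $\aconfig \step{\atrans} \aconfig'$ holds if and only if there exist $\cconfig, \cconfig' \in \cconfigs$ with $\cconfig \step{\atrans} \cconfig'$, $\abstractproj{\cconfig} = \aconfig$ and $\abstractproj{\cconfig'} = \aconfig'$. This reformulation is precisely what makes the lifting immediate, since each concrete step already exhibits the witnesses required to induce the corresponding abstract move.

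For the base case, if $\cexec$ has length $0$ then $\cconfig = \cconfig'$, hence $\abstractproj{\cconfig} = \abstractproj{\cconfig'}$ and the empty abstract execution suffices. For the inductive step, I would decompose $\cexec$ as $\cconfig \step{*} \cconfig'' \step{\atrans} \cconfig'$. The induction hypothesis applied to the prefix yields an abstract execution $\exec'' : \abstractproj{\cconfig} \step{*} \abstractproj{\cconfig''}$, so it only remains to connect $\abstractproj{\cconfig''}$ to $\abstractproj{\cconfig'}$ across the final concrete step.

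Here I distinguish two cases. If $\abstractproj{\cconfig''} = \abstractproj{\cconfig'}$, that is, if $\supp{\cconfig''} = \supp{\cconfig'}$ and $\datafun{\cconfig''} = \datafun{\cconfig'}$, then $\exec''$ already witnesses $\abstractproj{\cconfig} \step{*} \abstractproj{\cconfig'}$ and there is nothing more to do. Otherwise $\abstractproj{\cconfig''} \ne \abstractproj{\cconfig'}$, and since $\cconfig'' \step{\atrans} \cconfig'$ is a concrete step whose endpoints project to $\abstractproj{\cconfig''}$ and $\abstractproj{\cconfig'}$, the equivalent characterization directly gives the abstract step $\abstractproj{\cconfig''} \step{\atrans} \abstractproj{\cconfig'}$; appending it to $\exec''$ produces the desired execution.

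This argument is entirely routine, as expected of the easy direction of Proposition~\ref{prop:soundcomplete_roundless}: it uses no copycat reasoning and requires no increase in the number of processes. The only point demanding a little care is the case split on whether consecutive projections coincide, which is necessary because a concrete step may leave the abstract configuration unchanged (for instance, a read step whose source keeps other processes and whose destination is already populated), in which case no abstract step should be inserted lest one produce a spurious self-loop.
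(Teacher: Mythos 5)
Your proof is correct and takes essentially the same route as the paper's: induction on the length of the execution, lifting one concrete step at a time to an abstract step between the projections (where you invoke the stated equivalence between abstract steps and projected concrete steps, the paper instead makes the underlying choice explicit, taking the deserting variant when the source state is emptied and the non-deserting one otherwise). Your case split on whether consecutive projections coincide is harmless but unnecessary: abstract executions, unlike footprints, are not required to have distinct consecutive configurations, so a step that leaves the projection unchanged is still a legitimate abstract step.
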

\begin{proof}
By induction, it suffices to prove it for one step; suppose that $\cconfig \step{\atrans} \cconfig'$. 
Let $(q,a,q') := \atrans$; if $q \in \state{\cconfig'}$, then we consider the "non-deserting" abstract step with transition $\atrans$, otherwise we consider the "deserting" step with transition $\atrans$. Either way, we have 
 $\abstractproj{\cconfig} \step{\atrans} \abstractproj{\cconfig'}$.  
\end{proof}

Conversely, from an abstract execution, for a large
enough number of processes, using the copycat property one can build a
concrete execution with the same final states and data. 
\begin{lemma}
\label{lem:abstract_to_concrete}
Let $\aconfig, \aconfig' \in \aconfigs$ and $\exec : \aconfig \step{*} \aconfig'$. There exist $\cconfig, \cconfig'$ such that $\abstractproj{\cconfig} = \aconfig$, $\abstractproj{\cconfig'} = \aconfig'$ and $\cexec: \cconfig \step{*} \cconfig'$. 
\end{lemma}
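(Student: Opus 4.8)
The plan is to prove Lemma~\ref{lem:abstract_to_concrete} by induction on the length of the abstract execution $\exec$, building the concrete execution step by step while using the copycat property (Lemma~\ref{lem:copycat_roundless}) to supply ``spare'' processes whenever a \emph{non-deserting} step needs a process in a state that would otherwise be emptied. The base case (length $0$) is trivial: take any concrete $\cconfig$ with $\abstractproj{\cconfig} = \aconfig$, e.g.\ one process per state of $\state{\aconfig}$ and the registers set to $\datafun{\aconfig}$. For the inductive step, decompose $\exec$ as $\aconfig \step{*} \aconfig'' \step{\atrans} \aconfig'$ and apply the induction hypothesis to the prefix to obtain a concrete execution $\cexec'' : \cconfig \step{*} \cconfig''$ with $\abstractproj{\cconfig''} = \aconfig''$.

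The core of the argument is simulating the last abstract step $\aconfig'' \step{\atrans} \aconfig'$ in the concrete world, where $\atrans = (q, a, q')$. There are two cases depending on whether the step is \emph{deserting} or \emph{non-deserting}. In the deserting case ($\state{\aconfig'} = (\state{\aconfig''} \setminus \set{q}) \cup \set{q'}$), in the concrete semantics \emph{all} processes currently on $q$ must move to $q'$; since $q \in \state{\aconfig''} = \supp{\cconfig''}$, there is at least one such process, and I simply fire $\atrans$ on every process sitting on $q$ one after another (the read condition or write effect is unaffected by repeating $\atrans$, exactly as in the copycat proof). The non-deserting case ($\state{\aconfig'} = \state{\aconfig''} \cup \set{q'}$) is the delicate one: here $q$ must remain populated after the step, so I need a process on $q$ that I can afford to move to $q'$ while keeping another on $q$. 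The main obstacle is guaranteeing such a spare process exists, and this is precisely what the copycat property provides.

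To resolve the non-deserting case, I invoke Lemma~\ref{lem:copycat_roundless}: since $\cconfig \step{*} \cconfig''$ and $q \in \supp{\cconfig''}$, there is some $q_1 \in \supp{\cconfig}$ such that $\confpair{\state{\cconfig} \oplus q_1}{\datafun{\cconfig}} \step{*} \confpair{\state{\cconfig''} \oplus q}{\datafun{\cconfig''}}$. This gives a new concrete execution reaching a configuration with \emph{two} processes on $q$ (one extra copy), starting from $\cconfig$ augmented with one process on $q_1 \in \initialstates$ if $\cconfig$ was initial, or more generally from an augmented concrete configuration with the same projection. I then fire $\atrans$ on one of the two copies of $q$, leaving the other in place, which realizes the non-deserting step concretely and yields a configuration whose projection is exactly $\aconfig'$. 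The augmentation of the starting configuration is harmless for the statement, since the lemma only requires the existence of \emph{some} $\cconfig$ with $\abstractproj{\cconfig} = \aconfig$; to keep the projection fixed I fold the copycat additions into the choice of $\cconfig$ up front, applying the copycat lemma once for each non-deserting step so that the final $\cconfig$ carries enough processes on each relevant state.

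The remaining bookkeeping is routine: read steps preserve register contents so the read symbol is still available, and write steps only overwrite the targeted register, so the concrete side faithfully tracks $\datafun{\aconfig'}$ throughout. Concatenating the concrete simulations of all steps, together with the copycat-provided duplications inserted at the front, produces the desired $\cexec : \cconfig \step{*} \cconfig'$ with $\abstractproj{\cconfig} = \aconfig$ and $\abstractproj{\cconfig'} = \aconfig'$, completing the induction. Combining Lemma~\ref{lem:concrete_to_abstract} and Lemma~\ref{lem:abstract_to_concrete} then immediately yields the two implications of Proposition~\ref{prop:soundcomplete_roundless}.
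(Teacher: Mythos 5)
Your proof is correct and follows essentially the same route as the paper's: induction on the length of the abstract execution, simulating the last step concretely and invoking the copycat lemma to supply a spare process on the source state whenever a non-deserting step would otherwise empty it. Your explicit treatment of the deserting case (firing the transition on every process in the source state, which is sound since repeating a read or a write does not change the register contents needed) is a detail the paper leaves implicit, but the argument is the same.
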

\begin{proof}
The proof is by induction on the number of steps in $\cexec$. If $\cexec$ has $0$ steps, then $\aconfig = \aconfig'$ and suffices to consider $\cconfig = \cconfig' :=  \confpair{\bigoplus_{q \in \state{\aconfig}}q}{\datafun{\aconfig}}$.

Assume that $\exec: \aconfig_1 \step{*} \aconfig_2 \step{\atrans} \aconfig_3$ and that the property is true for executions shorter than $\exec$. By induction hypothesis, there exists $\cexec: \cconfig_1 \step{*} \cconfig_2$ such that $\abstractproj{\cconfig_1} = \aconfig_1$, $\abstractproj{\cconfig_2} = \aconfig_2$. There exists $\cconfig_3$ such that 
$\cconfig_2 \step{\atrans} \cconfig_3$; however, it could be that $\cconfig_3$ does not have a process on the "source" state of $\atrans$ while $\aconfig_3$ does. In that case, we modify $\cexec$ to put an additional process on the source "state" of $\atrans$ by using the copycat property and increasing the number of processes by one.  
\end{proof}

Lemmas~\ref{lem:concrete_to_abstract} and \ref{lem:abstract_to_concrete} together prove Proposition~\ref{prop:soundcomplete_roundless}.

\subsection{Proof of Proposition~\ref{prop:ptime_cover_restrictions}}
\coverrestrictions*
\subsubsection{\PTIME when the Registers are "Uninitialized"}
The algorithm computes the set of coverable states using a fixpoint technique called ""saturation"".
The algorithm starts with $S := \{\initialstates\}$, then iteratively adds to $S$ all states $q_2$ for which there exist $q_1 \in S$ and an action $a \in \actions$ such that $(q_1,a,q_2) \in \transitions$ 
and:
\begin{itemize}
\item either $a = \writeact{\regid}{\asymb}$ with $\asymb \in \dataalp$, 
\item or $a = \rlreadact{\regid}{\asymb}$ with $\regid, \asymb$ s.t. there exist $q_3, q_4 \in S$, $(q_3, \writeact{\regid}{\asymb}, q_4) \in \transitions$.
\end{itemize}
We prove that, when a fixpoint is reached, $S$ is exactly the set of coverable states.
First, any coverable state is added to $S$ by the algorithm, by induction in the number of steps of the execution. Conversely, we show by induction in the number of iterations of the algorithm that any state added to $S$ is coverable. For the first case, if $(q_1, \writeact{\regid}{\asymb}, q_2) \in \transitions$ and $q_1$ is coverable then clearly $q_2$ is coverable.
For the second case, suppose that we have an execution $\exec$ covering $S \subseteq \states$, and that there exist $q_2 \in Q, q_1, q_3, q_4 \in S$, $\asymb \in \datawrite$, $\regid \in \nset{\rdim}$ such that $(q_1,\rlreadact{\regid}{\asymb},q_2), (q_3, \writeact{\regid}{\asymb}, q_4) \in \transitions$. Because we have an unlimited supply of processes, we use the copycat property to put an extra process on $q_3$ then make that process write $\asymb$ to $\rlreg{\regid}$ again, so that a process in state $q_1$ may read $\asymb$ from register $\rlreg{\regid}$ and get to $q_2$, which is therefore coverable.


\subsubsection{\PTIME when the Number of Registers $\rdim$ is Fixed}
The first write to a register is an irreversible action, as $\datainit$ cannot be written again. For that reason, we cannot work with a single saturation phase like in the "uninitialized" case. We iterate over all possible orders in which registers are first written to (see \emph{first-write orders} in \cite[Definition 15]{myicalppaper}).

For a given such order $\regvar_1, \dots, \regvar_m$ ($m \leq \rdim$), we proceed using $m+1$ successive saturation phases, numbered from $i=0$ to $m$.
The algorithm starts with $S = \set{\initialstates}$. During saturation phase $i$, the algorithm saturates $S$ by iteratively adding all states $q_2$ such that:

\begin{itemize} 
\item there exists $q_1 \in S$ with $(q_1, \rlreadact{\regvar}{\datainit}, q_2)$ and $\regvar \notin \set{\regvar_1,\dots, \regvar_i}$,
\item there exists $q_1 \in S$ with $(q_1, \writeact{\regvar_j}{\asymb}, q_2)$, $j\leq i$,
\item there exists $q_1 \in S$ with $(q_1, \rlreadact{\regvar_j}{\asymb}, q_2)$, $j\leq i$, and $\asymb$ may be written to $\regvar_j$ using a transition whose source is in $S$.
\end{itemize}

First, if there exists an execution covering $\targetstate$, then it writes to registers for the first time in some order $\regvar_1, \dots, \regvar_m$. When the algorithm considers this first-write order, the set of states computed includes $q_f$.  Conversely, suppose that the algorithm finds that $q_f$ is covered for some first-write order $\regvar_1, \dots, \regvar_m$. Observe that, if two executions share a first-write order $\regvar_1, \dots, \regvar_m$ then they may be merged into a common execution \cite[Lemma 17]{myicalppaper}. Therefore, all the states computed by the algorithm may be covered in a single, big execution and $q_f$ is coverable. 

\subsubsection{\PTIME-hardness}
The proof is similar to the one presented in \cite[Proposition 1]{delzanno2012} for broadcast protocols. 
It uses a \LOGSPACE-reduction for the Circuit Value Problem, which is \PTIME-complete for \LOGSPACE reductions \cite{pcomplete}. This problem consists in determining the output value of an acyclic Boolean circuit with given input values and Boolean gates that can be negations $\neg$, disjunctions $\lor$ and conjunctions $\land$. 

Consider an instance of the Circuit Value Problem, we write $V$ for the set of input, intermediate and output values of the circuit. A gate is represented as a tuple of the form $(\neg, i, o)$, $(\lor, i_1, i_2, o)$ or $(\land, i_1, i_2, o)$ where $i, i_1, i_2 \in V$ denote input(s) and $o \in V$ the output of the gate.
We construct an instance $(\protcvp, \errorstate)$ of \COVER with $\rdim = 1$. In $\dataalp$, we have $\datainit$ (which is never read) along with, for every $v \in V$, symbols $\cvptrue{v}$ and $\cvpfalse{v}$, denoting that $v$ is respectively true and false. 
First, $\protcvp$ has a part containing a state in $\initialstates$ from which one may write the symbols corresponding to the assignment of the input values of the circuit. 
Moreover, for every gate of the circuit, there is a part of the protocol corresponding to this gate, which has a state in $\initialstates$ from which a process may read the values of the inputs and write the corresponding value of the output. A depiction for gate $(\land, i_1, i_2, o)$ may be found in Figure~\ref{fig:subprot_land_gate}. Lastly, state $\errorstate$ is the destination of the transition writing the symbol corresponding to the output variable of the circuit having the desired value, so that $\errorstate$ is coverable if and only if this desired value is indeed the output value of the circuit. 

\begin{figure}[h]
\begin{center}
\begin{tikzpicture}[node distance = 1cm, auto]
\tikzstyle{every node}=[font=\normalsize]
\tikzstyle{every state} = [minimum size = 0.5cm]

\node[state] (init) at (0,0) {};
\node[state] at (0,-2) (readT1) {};
\node[state] at (3,-2) (readT2) {};
\node[state] (readF) at (3,0) {};

\draw  (init.180) edge[latex'-] +(180:4mm);
\path[-latex, draw] 
(init) edge[bend left] node {$\rlreadact{}{\cvpfalse{i_1}}$} (readF)
(init) edge[bend right] node[below] {$\rlreadact{}{\cvpfalse{i_2}}$} (readF)
(readF) edge[loop right] node {$\writeact{}{\cvpfalse{o}}$} ()
(init) edge node[left] {$\rlreadact{}{\cvptrue{i_1}}$} (readT1)
(readT1) edge node {$\rlreadact{}{\cvptrue{i_2}}$} (readT2)
(readT2) edge[loop right] node {$\writeact{}{\cvptrue{o}}$} ();
\end{tikzpicture}
\end{center}
\caption{Part of the protocol $\protcvp$ that corresponds to gate $(\land, i_1, i_2, o)$}
\label{fig:subprot_land_gate}
\end{figure}

\subsection{Proof of Proposition~\ref{prop:ptime_dnfprp_onereg_roundless}}
\label{appendix:ptime_dnfprp_onereg_roundless}
\ptimednfprponereg*

We here prove the result in the general case of Proposition~\ref{prop:ptime_dnfprp_onereg_roundless}. We first prove that it suffices to prove the result for "uninitialized protocols".

\begin{lemma}
\label{lem:reduc_init_to_uninit}
There exists a polynomial-time reduction from "initialized" \dnfPRP with $\rdim =1$ to "uninitialized" \dnfPRP with $\rdim = 1$. 
\end{lemma}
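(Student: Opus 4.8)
The goal is a polynomial-time reduction from initialized \dnfPRP with $\rdim = 1$ to uninitialized \dnfPRP with $\rdim = 1$. The plan is to mimic the reduction sketched for \TARGET in the proof of Proposition~\ref{prop:ptime_dnfprp_onereg_roundless}, but to be careful about the fact that the "presence constraint" may refer to the register holding the initial value $\datainit$, which cannot happen once we forbid reading $\datainit$.

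\medskip

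\noindent\textbf{The construction.} First I would isolate the states reachable from $\initialstates$ using only reads of $\datainit$. Form the graph $G = (Q,E)$ with $(q_1,q_2)\in E$ iff $(q_1,\rlreadact{}{\datainit},q_2)\in\transitions$, and let $I\subseteq Q$ be the set of states reachable in $G$ from $\initialstates$. Intuitively, as long as no process has written anything, the register still holds $\datainit$, so in the initialized semantics processes may freely walk along $\datainit$-reads; the set of states occupiable before the first write is exactly $I$ (this uses the copycat property of Lemma~\ref{lem:copycat_roundless} to populate any state of $I$ without consuming processes needed elsewhere, since $\rdim=1$ and the register is untouched). The reduction produces the protocol $\prot'$ obtained from $\prot$ by (i) deleting all transitions that read $\datainit$, and (ii) replacing $\initialstates$ by $I$ as the set of initial states. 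Since $G$-reachability is computable in polynomial time, $\prot'$ has polynomial size and is computed in polynomial time, and by construction $\prot'$ is uninitialized.

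\medskip

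\noindent\textbf{Handling the formula.} The subtlety absent from the \TARGET case is that $\phi$ may contain atomic propositions of the form $\onlyreg \contains \datainit$. In the original semantics such a proposition is satisfied exactly when no write has yet occurred, i.e.\ at a configuration of the form $\confpair{S}{\datainit}$ with $S\subseteq I$; in the uninitialized protocol the register symbol $\datainit$ is never produced after any step, so these propositions behave differently. I would therefore split the target formula: writing $\phi = \bigvee_{t} C_t$ in "disjunctive normal form", treat each conjunctive clause $C_t$ separately (this is legitimate since a \dnfPRP instance is positive iff some clause is reachably satisfiable). For a clause $C_t$ that forces $\onlyreg\contains\datainit$ (consistently, i.e.\ without also demanding some $\onlyreg\contains\asymb$ with $\asymb\ne\datainit$), reachable satisfaction in the initialized semantics means reaching some $\confpair{S}{\datainit}$ with $S\subseteq I$ satisfying the remaining state-atoms of $C_t$; this is decidable directly in polynomial time (check whether the state-presence literals of $C_t$ are jointly satisfiable by a subset of $I$, using that any subset of $I$ is reachable with register value $\datainit$). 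For every clause that does not force $\onlyreg\contains\datainit$, the witnessing configuration has register value $\ne\datainit$, hence the first write has occurred; such a clause is reachably satisfiable in $\prot$ iff it is reachably satisfiable in $\prot'$, because after the first write the $\datainit$-reads of $\prot$ are useless (they can never fire again, the register never returning to $\datainit$ once $\rdim=1$) so $\prot$ and $\prot'$ have the same reachable abstract configurations with non-initial register value, modulo the initialization prefix already absorbed into $I$. I would collect the ``non-$\datainit$'' clauses into an uninitialized \dnfPRP instance $(\prot',\phi')$ and, if any ``$\datainit$-clause'' is already satisfiable, reduce to a trivially positive instance instead.

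\medskip

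\noindent\textbf{Correctness and the main obstacle.} Correctness follows from the decomposition of any "synchronizing" witness into an initialization phase (reaching a subset of $I$ with the register still at $\datainit$) followed by a phase that starts with the first write, together with the observation that no $\datainit$-read can occur after the first write when $\rdim=1$; the copycat property guarantees that populating the extra states of $I$ costs nothing, so the two phases recombine into a single execution in $\prot$, and conversely any $\prot'$-execution lifts to a $\prot$-execution through the $\datainit$-walk that realizes $I$. The main obstacle I anticipate is precisely the careful treatment of register-presence literals mentioning $\datainit$: one must argue that the only configurations of $\prot$ with register value $\datainit$ are reachable ``for free'' (all of $I$, no other states) and that every other reachable configuration is faithfully reproduced by $\prot'$, so that no clause is gained or lost by the reduction. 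Once this case analysis on the $\datainit$-literal of each clause is pinned down, the rest is the same bookkeeping as in the \TARGET reduction.
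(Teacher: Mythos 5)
Your construction of the reduced protocol is exactly the paper's: the graph $G$ of $\rlreadact{}{\datainit}$-transitions, the set $I$ of states $G$-reachable from $\initialstates$ taken as the new set of initial states, and the deletion of all $\datainit$-reading transitions. Where you diverge is in the treatment of the formula. The paper keeps $\psi$ unchanged and proves the equivalence by splitting any witness execution of $\prot$ into a prefix $\exec_p$ using only $\datainit$-reads (ending in a configuration whose support is contained in $I$, hence an initial configuration of $\prot'$) and a suffix $\exec_s$ that is either empty or starts with a write; the case where the witness configuration still has register value $\datainit$ is precisely the case where $\exec_s$ is empty, and it is covered automatically because that configuration is initial for $\prot'$ and therefore reachable there by the empty execution. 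So the ``main obstacle'' you identify --- clauses forcing $\onlyreg \contains \datainit$ --- is illusory: the reachable configurations of $\prot$ with register value $\datainit$ are exactly the $\confpair{S}{\datainit}$ with $S \subseteq I$ nonempty, and these coincide with the reachable configurations of $\prot'$ with register value $\datainit$ (namely its initial configurations, since in $\prot'$ no transition can fire before the first write). Your clause-by-clause split and the separate polynomial-time decision for $\datainit$-forcing clauses are sound --- in particular your claim that every nonempty subset of $I$ is reachable with register value $\datainit$ does hold, e.g.\ by walking one concrete process per target state along its $G$-path and invoking Proposition~\ref{prop:soundcomplete_roundless} --- but they buy nothing: passing $\psi$ through unchanged already gives the right answer on those clauses. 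The paper's version is preferable in that it remains a plain many-one reduction on a single instance rather than a hybrid of a reduction and a partial solver.
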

\begin{proof}
Let $\prot = \tuple{\states, \initialstates, 1, \dataalp, \datainit, \transitions}$ a "roundless register protocol" with a single register.
Any execution will be composed of two phases: the phase where the register has value $\datainit$ and no write transition is taken and the phase where the register no longer has value $\datainit$ and write transitions may be taken. The reduction relies on the observation that, in the first phase, only transitions labeled by $\rlreadact{}{\datainit}$ may be taken, and processes do not interact during this phase. Therefore, we can consider as initial any state that may be covered from $\initialstates$ with a path of transitions all labeled by $\rlreadact{}{\datainit}$.

Consider the graph $G=(Q,E)$ whose vertices are the states of the system and whose edges are the transitions labeled by $\rlreadact{}{\datainit}$: $(q_1,q_2) \in E$ if and only if $(q_1, \rlreadact{}{\datainit},q_2) \in \transitions$. Let $\initialstates' := \set{q \in \states \mid \exists m \geq 0, \exists q_0 \in \initialstates, \, \exists q_1, \dots, q_{m-1}, q_m = q \in \states, \, \forall i \in \iset{0}{m-1}, \, (q_i,q_{i+1}) \in E}$ the set of states reachable from $\initialstates$ in $G$. $\initialstates'$ can trivially be computed in polynomial time. Additionally, let $\transitions' := \transitions \setminus  \set{(q,\rlreadact{}{\datainit}, q')\mid q, q' \in Q}$. The reduction maps $\prot$ to the protocol $\prot' = \tuple{\states, \initialstates', 1, \dataalp, \datainit, \transitions'}$.

We now prove that $(\prot, \psi)$ is a positive instance of \dnfPRP if and only if $(\prot',\psi)$ is. First, suppose that, in $\prot$, there exists $\exec: \aconfig_0 \step{*} \aconfig$ such that $\aconfig \models \psi$. We decompose $\exec$ into $\exec_p: \aconfig_0 \step{*} \aconfig_1$ and $\exec_s: \aconfig_1 \step{*} \aconfig$ where $\data{\aconfig_1}{\regvar} = \datainit$ and $\exec_s$ either is the empty execution or starts with a write transition. $\exec_p$ only uses transitions labeled by $\rlreadact{}{\datainit}$ therefore, for every $q \in \state{\aconfig_1}$, there exists a path in $G$ from $\initialstates$ to $q$; this proves that $\state{\aconfig_1} \subseteq \initialstates'$ and therefore that $\aconfig_1$ is initial for $\prot'$, moreover $\exec_s$ does not use transitions labeled by $\rlreadact{}{\datainit}$ hence $\exec_s$ is a witness that $(\prot', \psi)$ is positive.

Suppose now that $(\prot', \psi)$ is positive. There exists $\exec: \aconfig_0 \step{*} \aconfig$ with $\aconfig \models \psi$.  For every $q \in \state{\aconfig_0} \setminus \initialstates$, there exists $f(q) \in \initialstates$ such that $q$ is reachable from $f(q)$ in $G$. Let $S:= (\state{\aconfig_0} \cap \initialstates) \cup f(\state{\aconfig_0} \setminus \initialstates)$, we have $S \subseteq \initialstates$. We have that $\confpair{S}{\datainit} \step{*} \aconfig_0$ in $\prot$ by only taking transitions appearing in $G$. Therefore $\confpair{S}{\datainit} \step{*}  \aconfig$ with $\aconfig \models \psi$ and $\confpair{S}{\datainit} $ is initial for $\prot$, which proves that $(\prot, \psi)$ is positive. 
\end{proof}

Thanks to the previous lemma, we prove Proposition~\ref{prop:ptime_dnfprp_onereg_roundless} in the "uninitialized case". 
Consider a instance $(\prot, \psi)$ of \dnfPRP where $\prot$ is "uninitialized" and $\rdim = 1$. $(\prot, \psi)$ is positive if and only if $(\prot, \clause)$ is positive for some clause $\clause$ of $\psi$. Our algorithm hence iterates over all clauses in $\psi$.

Let $\clause$ a clause in $\psi$. $\clause$ is a conjunction of literals, hence it may be seen as a set of atomic propositions that the configuration reached has to satisfy. Let $Q_+(\clause)$ be the set of states that need to be populated in the final configuration, $Q_-(\clause)$ the states that need to not be populated, and $\dataok(\clause)$ the symbols that are allowed in the final configuration. Formally, $Q_+(\clause) := \set{q \mid \text{``$q \populated$''} \in \clause}$,
 $Q_-(\clause) := \set{ q \mid \text{``$\neg (q \populated)$''} \in \clause}$
 and $\dataok(\clause) := \set{\asymb \in \dataalp \mid 
\text{``$\neg(\regvar \contains \asymb)$''} \notin \clause 
\text{ and } \forall \asymb' \ne \asymb, \, \text{``$\regvar \contains \asymb'$''} \notin \clause}$ 
where $\regvar$ denotes the register. For all $\confpair{S}{\asymb} \in \aconfigs$, $\confpair{S}{\asymb} \models \clause$ if and only if $Q_+(\clause) \subseteq S \subseteq Q \setminus Q_-(\clause)$ and $\asymb \in \dataok(\clause)$. Let \[ \allowedsets(\clause) := \set{S \subseteq Q \mid Q_+(\clause) \subseteq S \subseteq Q \setminus Q_-(\clause)} \] denote the collection of all sets of states allowed in the final configuration. 

\begin{lemma}
\label{lem:decomp_exec_dnfprp_onereg}
Any execution $\exec: \aconfig_0 \step{*} \aconfig$ with $\aconfig_0 \in \aconfiginitset$ may be decomposed in the following form: $\aconfig_0 \step{*} \confpair{S}{\finalsymb} \step{*} \aconfig$ with $S$ containing all states appearing in $\exec$.
\end{lemma}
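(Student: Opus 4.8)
The plan is to realise the decomposition through the two executions $\exec_+$ and $\exec_-$ sketched in the proof of Proposition~\ref{prop:ptime_dnfprp_onereg_roundless}, working in the uninitialized setting to which we have reduced. Write $\exec : \aconfig_0 \step{} \cdots \step{} \aconfig$, let $S$ be the union of the supports of all configurations visited by $\exec$ (so $S$ is exactly the set of states appearing in $\exec$), and set $\finalsymb := \datafun{\aconfig}$. I would first build $\exec_+ : \aconfig_0 \step{*} \confpair{S}{\finalsymb}$ by replaying the transitions of $\exec$ in the same order but always choosing the \emph{non-deserting} variant. This is legal: the support only grows, so every transition source is available when needed, and since the sequence of write actions is unchanged the register follows the same trajectory as in $\exec$, so every read stays enabled and the final register value is $\datafun{\aconfig} = \finalsymb$. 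By construction the support reached is exactly $S$.

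Next I would build $\exec_- : \confpair{S}{\finalsymb} \step{*} \aconfig$ by replaying $\exec$ once more, now from $\confpair{S}{\finalsymb}$ and keeping at each step exactly the same \emph{deserting}/\emph{non-deserting} choice as in $\exec$. This is where the uninitialized hypothesis is used for the reads: before the first write of $\exec$ no read occurs, since a process can never read $\datainit$, so the initial register content $\finalsymb$ is irrelevant; from the first write onwards the register trajectory of $\exec_-$ coincides with that of $\exec$, hence all reads succeed and the final register value is again $\finalsymb$. The transition sources are always available because the support of $\exec_-$ contains the support of $\exec$ at the corresponding step.

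The main obstacle is to check that $\exec_-$ ends with support exactly $\state{\aconfig}$, rather than $\state{\aconfig}$ together with leftover states imported from the enlarged start $S$. I would control this with a monovariant: let $D$ be the set-difference between the current support of $\exec_-$ and the current support of $\exec$ at the same step, so $D$ starts as $S \setminus \state{\aconfig_0}$. A short case analysis on the abstract step rules shows that a non-deserting step $(q,a,q')$ removes the destination $q'$ from $D$, while a deserting step removes both $q$ and $q'$; in either case $D$ can only shrink. Every state $x \in S \setminus \state{\aconfig_0}$ enters the support of $\exec$ for the first time as the \emph{destination} of some step, and at that step $x$ is deleted from $D$ and, as $D$ only shrinks, never returns; hence $D$ is empty at the end. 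Since the support of $\exec_-$ always contains that of $\exec$, an empty final difference forces the two final supports to coincide, so $\exec_-$ ends in $\confpair{\state{\aconfig}}{\finalsymb} = \aconfig$. Concatenating $\exec_+$ and $\exec_-$ then yields the desired decomposition; the degenerate case where $\exec$ is empty is immediate, with $S = \state{\aconfig_0}$ and $\finalsymb = \datainit$.
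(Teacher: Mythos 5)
Your proposal is correct and follows essentially the same route as the paper: build $\exec_+$ by making every step non-deserting (so the register trajectory is preserved and the support grows to $S$), then replay $\exec$ from $\confpair{S}{\finalsymb}$ using the uninitialized hypothesis to discharge the initial register value, with an inductive invariant that the replayed execution always dominates $\exec$ in support while matching its register. The only cosmetic difference is how you show the replay lands exactly on $\state{\aconfig}$ — your shrinking set-difference $D$ versus the paper's observation that each $q \in S \setminus S_f$ is last touched by a deserting step with source $q$ — and both are valid.
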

\begin{proof}
Let $\exec: \aconfig_0 \step{*} \aconfig$; let $\confpair{S_f}{\finalsymb} := \aconfig$.  
The execution $\aconfig_0 \step{*} \confpair{S}{\finalsymb}$ is obtained by turning into "non-deserting" all "deserting" steps in $\exec$, so that all states covered in $\exec$ appear in $S$.
 For the second execution, we claim that there exists  $\exec': \confpair{S}{\finalsymb} \step{*} \confpair{S_f'}{\finalsymb}$ that is obtained by mimicking steps of $\exec$ starting from $\confpair{S}{\finalsymb}$. 
First, $\prot$ is "uninitialized" therefore $\exec$ starts with a write and the register value at the beginning of $\exec$ is irrelevant. 
Moreover, $\state{\aconfig_0} \subseteq S$ by definition of $S$, so that $\exec'$ starts from a configuration with more states that $\exec$. By induction, for all $n \geq 1$, the $n$-th configuration in $\exec'$ has the same register value as and more states than the $n$-th configuration in $\exec$. This in fact proves that $S_f \subseteq S_f'$.   
Moreover, for every $q \in S \setminus S_f$, since $q \notin S_f$ the last step in $\exec$ about step $q$ has $q$ as source and is "deserting", hence $q$ is also deserted in $\exec'$ which shows that $S_f' \subseteq S_f$. In the end, $\exec'$ goes from $\confpair{S}{\finalsymb}$ to $\confpair{S_f}{\finalsymb}$ which concludes the proof.
\end{proof}

We define the following two sets:
\begin{itemize}
\item $\covset{\prot} := \max \set{S \subseteq Q \mid \exists \aconfig_0 \in \aconfiginitset, \exists \asymb \in \dataalp, \, \aconfig_0 \step{*} \confpair{S}{\asymb}}$,
\item $\cocovset{\prot, \clause} = \max \set{S \subseteq Q \mid \forall \asymb \in \dataalp, \exists \finalsymb \in \dataok(\clause), {\exists S_f \in \allowedsets(\clause), \, \confpair{S}{\asymb} \step{*} \confpair{S_f}{\finalsymb}}}$.
\end{itemize}
where the $\max$ is for inclusion of sets. Note that for $S \subseteq Q$, it is equivalent that $S$ satisfies the condition $\forall \asymb \in \dataalp, \dots$ in the second set and that there exists a witness execution that starts with a write and therefore is applicable from any $\confpair{S}{\asymb}$ with $\asymb \in \dataalp$.  By convention, we consider that $\max(\emptyset) = \emptyset$, \emph{i.e.}, if $\initialstates = \emptyset$ then $\covset{\prot, \clause} = \emptyset$ and if $Q_-(\prot, \clause) = Q$ then $\cocovset{\prot, \clause} = \emptyset$.

We first prove that both maxima are well-defined because the sets considered are stable under union. 
Let $\aconfig_0, \aconfig_0' \in \aconfiginitset$, $\exec:\aconfig_0 \step{*} \confpair{S}{\asymb}$, $\exec': \aconfig'_0 \step{*} \confpair{S'}{\asymb'}$. 
We show that we can merge $\exec$ and $\exec'$ into a single execution $\confpair{\state{\aconfig_0} \cup \state{\aconfig_0'}}{\datainit} \step{*} \confpair{S \cup S'}{\asymb'}$. By mimicking $\exec$ (without deserting states from $\state{\aconfig_0'}$), we obtain an execution $\confpair{\state{\aconfig_0} \cup \state{\aconfig_0'}}{\datainit} \step{*} \confpair{S \cup \state{\aconfig_0'}}{\asymb}$. By mimicking $\exec'$, we obtain an execution $\confpair{S \cup \state{\aconfig_0'}}{\asymb} \step{*} \confpair{S \cup S'}{\asymb'}$ (because $\prot$ is "uninitialized", $\exec'$ starts with a write). Therefore $S \cup S'$ is in the set $\set{S \subseteq Q \mid \exists \aconfig_0 \in \aconfiginitset, \exists \asymb \in \dataalp, \, \aconfig_0 \step{*} \confpair{S}{\asymb}}$, proving that it is closed under union. 

For $\cocovset{\prot,\clause}$, we suppose that we have $S, S' \subseteq \states$ that satisfy the condition of the set, and we prove that $S \cup S'$ does as well.  
Let $\asymb \in \dataalp$. By hypothesis on $S$ applied with $\asymb$, there exist $S_f$, $\finalsymb$ such that $\confpair{S}{\asymb} \step{*} \confpair{S_f}{\finalsymb}$, and therefore $\confpair{S \cup S'}{\asymb} \step{*} \confpair{S_f \cup S'}{\finalsymb}$. By hypothesis on $S'$ applied with $\finalsymb$, there exist $S_f'$, $\finalsymb'$ such that $\confpair{S'}{\finalsymb} \step{*} \confpair{S_f'}{\finalsymb'}$. Therefore we also have $\confpair{S_f \cup S'}{\finalsymb} \step{*} \confpair{S_f \cup S_f'}{\finalsymb'}$, which combined with the previous execution provides a witness that $S \cup S'$ is in the set.

\begin{algorithm}[p]
\SetKwRepeat{Do}{do}{while}
\SetKwProg{Fn}{Function}{:}{}
\SetKwFunction{solvednfPRP}{DNFPRP{}\_Oneregister\_Uninit}
\SetKwFunction{Coverable}{Compute\_$\covset{\prot}$}
\SetKwFunction{PreviousSymbol}{PreviousSymbol}
\SetKwFunction{coCoverable}{Compute\_$\cocovset{\prot, \clause}$}
\SetKwFor{Lfp}{Until}{reaches a fixpoint do}{endFixpoint}
\Fn{\solvednfPRP{$\prot$}}{
\For{$\clause$ clause of $\psi$}{  
$\prot_{\clause} \assignalgo \prot$ \tcp*[r]{copy of $\prot$ that will be modified}
\Lfp{$\states(\prot_{\clause})$}{ 
  $Q(\prot_{\clause}) \assignalgo Q(\prot_{\clause}) \cap \covset{\prot_{\clause}} \cap \cocovset{\prot_{\clause}, \clause}$ \nllabel{line:new_protocol_less_states_target_onereg} \tcp*[r]{modifies $\prot_{\clause}$}
}
\lIf{$Q_+(\clause) \subseteq \states(\prot_{\clause}) \ne \emptyset$}{Accept}
}
Reject \;
}
\Fn{\Coverable}{
 $S \assignalgo \initialstates$ \;
  \Lfp{$S$}{ 
$S \assignalgo S \cup \set{q' \mid  \exists q \in S, \exists \asymb \in \dataalp, \, (q,\writeact{}{\asymb},q') \in \transitions}$ \nllabel{line:add_write}\;
  $S \assignalgo S \cup \set{q' \mid \exists q, q_1, q_2 \in S, \exists \asymb,  \, (q, \rlreadact{}{\asymb}, q') \in \transitions, (q_1, \writeact{}{\asymb}, q_2) \in \transitions}$ \nllabel{line:add_read}\; }
  \Return $S$ \;
}
\Fn{\coCoverable}{
  \leIf{\PreviousSymbol{$Q \setminus Q_-(\clause), \dataok$} $\ne$ ``\texttt{\upshape Not found}''}{
    $S \assignalgo $\PreviousSymbol{$Q \setminus Q_-(\clause), \dataok$}\;  
  }
  {
    \Return $\emptyset$  
  }
  \Lfp{$S$}{
    \If{\PreviousSymbol{$S, \datawrite$}$ \ne $``\texttt{\upshape Not found}''}{
    $S \assignalgo $\PreviousSymbol{$S, \datawrite$} \;
  }
  }
  \Return $S$\;
}
\Fn{\PreviousSymbol{$S, \mathsf{Symbols}$}}{ 
  $\mathsf{Found} \assignalgo \mathsf{False}$ \;
  \For{$\asymb \in \mathsf{Symbols}$}{ 
    $T \assignalgo S$\;
    \Lfp{$T$}{
        $T \assignalgo T \cup \set{q \in \states \mid \exists q' \in T, (q, \rlreadact{}{\asymb}, q') \in \transitions}$ \; 
      }
    \If{there exist $q \in \states$, $q' \in T$ s.t. $(q, \writeact{}{\asymb}, q') \in \transitions$}{
      $S \assignalgo T \cup \set{q}$ \;
      $\mathsf{Found} \assignalgo \mathsf{True}$ \;
    } 
  }
  \leIf{$\mathsf{Found}$}{\Return $S$}{\Return ``\texttt{Not found}''
}
}

\caption{A polynomial-time algorithm for \dnfPRP with $\rdim = 1$}
\label{algo:ptime_cover_onereg}
\end{algorithm}

Algorithm~\ref{algo:ptime_cover_onereg} provides functions computing $\covset{\prot,\clause}$ and $\cocovset{\clause,\prot}$ along with the function solving \dnfPRP when the protocol is "uninitialized" and $\rdim = 1$.

First, we prove that \texttt{Compute\_$\covset{\prot}$} returns $\covset{\prot}$. By induction, any state added in $S$ in \texttt{Compute\_$\covset{\prot}$} are in $\covset{\prot}$. Indeed, any state that can be covered from a state in $\covset{\prot}$ using a write transition is in $\covset{\prot}$. Similarly, any state that can be covered from a state in $\covset{\prot}$ using a read transition which symbol may be written from $\covset{\prot}$ is in $\covset{\prot}$: first write the corresponding value then read it (all states of $\covset{\prot}$ can be covered in a single, common execution). Conversely, for any execution $\exec: \aconfig_0 \step{*} \aconfig$, every state appearing in $\exec$ is added to $S$. Observe that any execution may be split into phases, where a phase starts with a step writing a symbol then performs some number (possibly zero) of steps reading the symbol. We therefore process by induction on the number of such phases. The initialization comes from $\state{\aconfig_0} \subseteq \initialstates$. Let $\asymb$ the symbol of the last phase in $\exec$, and suppose that all states appearing before this last phase are added to $S$. The write transition of the phase is detected at line~\ref{line:add_write} of the iteration and the corresponding destination state is added to $S$. This write transition is now a witness for $\asymb$ at line~\ref{line:add_read}, allowing every read transition appearing in the phase to be detected in this iteration. 

We now claim that \texttt{Compute\_$\cocovset{\prot,\clause}$} returns $\cocovset{\prot,\clause}$. If $S$ satisfies the condition in $\cocovset{\prot,\clause}$ then one can go from $\confpair{S}{*}$ to a configuration satisfies the clause with an execution starting with a write. Again, this execution may be split into phases, each phase being composed of a write of a symbol followed by some reads of this symbol. The symbol of the last phase must be in $\dataok$ as the last configuration satisfies $\clause$. Therefore, by induction, all states appearing in such an execution appear in some $\PreviousSymbol$ computation in \texttt{Compute\_$\cocovset{\prot,\clause}$}, and the states returned include all states of the execution. Conversely, given a computation of \texttt{Compute\_$\cocovset{\prot,\clause}$} that returns $S$, one may by reversed induction build an execution that covers every state in $S$ and ends on a configuration satisfying $\clause$. All in all, we have proven that \texttt{Compute\_$\cocovset{\prot,\clause}$} computes $\cocovset{\prot, \clause}$.



We will now prove that \texttt{DNFPRP\_Oneregister\_Uninit} of Algorithm~\ref{algo:ptime_cover_onereg} solves \dnfPRP for "uninitialized" protocols with $\rdim = 1$.  First, suppose that the algorithm accepts during the iteration corresponding to clause $\clause$. It ends with a protocol $\prot_{\clause}$ such that $Q_+(\clause) \subseteq \states(\prot_{\clause}) = \covset{\prot_{\clause}} \cap \cocovset{\prot_{\clause}, \clause}$. 
In this protocol, there exist $\aconfig_0 \in \aconfiginitset$ and $\asymb \in \dataalp$ such that $\aconfig_0 \step{*} \confpair{\covset{\prot_{\clause}}}{\asymb}$; since $\covset{\prot_{\clause}} = \cocovset{\prot_{\clause}, \clause}$ we also have $\confpair{\covset{\prot_{\clause}}}{\asymb} \step{*} \confpair{S_f}{\finalsymb} \models \clause$ and the instance is positive. 

Suppose now that the instance is positive. There exist a clause $\clause$ in $\psi$ and a witness execution $\exec: \aconfig_0 \step{*} \confpair{S_f}{\finalsymb}$ with $\finalsymb \in \dataok(\clause)$ and $S_f \in \allowedsets(\clause)$. 
Let $S$ the set of states appearing in $\exec$. By induction, we have that $S \subseteq \covset{\prot_{\clause}} \cap \cocovset{\prot_{\clause}, \clause}$ at every iteration of \texttt{DNFPRP{}\_Oneregister\_Uninit}, because $\exec$ remains a witness of both inclusions at every iteration. Moreover, $Q_+(\clause) \subseteq S$ therefore the algorithm accepts.

\subsection{Proof of Proposition~\ref{prop:target_uninit_nphard_roundless}}
\label{appendix:target_uninit_nphard_roundless}

\targetnpharduninit*


Once again, we provide a reduction from 3-\SAT. Consider a 3-CNF formula $\phi = \bigland_{i=1}^{m} l_{i,1} \lor l_{i,2} \lor l_{i,3}$ over $n$ variables $x_1, \dots, x_n$
where, for all $i \in \nset{m}$, for all $k \in \nset{3}$, $l_{i,k} \in \set{x_j, \neg x_j \mid j \in \nset{n}}$. 

We define an instance of the "uninitialized" \TARGET $(\protsat{\phi}, \targetstate)$ which is positive if and only if $\phi$ is satisfiable.   
Let $\rdim := n$, \emph{i.e.}, the protocol has a register $\rlreg{i}$ for each $i \in \nset{n}$. Each register can have values $\sattrue$ and $\satfalse$ (along with $\datainit$ which cannot be read nor written). 
A depiction of the protocol can be found in Figure~\ref{fig:prot_3SAT_to_uninit_target}. 

  \begin{figure}[tb]
    \centering
  \resizebox{.95\linewidth}{!}{
    \begin{tikzpicture}[node distance = 1.5cm, auto, y = 0.8cm, x = 0.7cm]
\tikzstyle{every node}=[font=\small]
\tikzstyle{every state} = [minimum size = 0.8cm, inner sep = 0.05cm]

\node (q0) [state] at (-0.25,0) {$q_0$}; 
\node(testc1) [state] at (1.5,0) {$C_1?$};
\node (c1ok) [state] at (6,0) {$C_2?$};
\node (susp) at (8.125,0) {{\large \dots}};
\node (testcm) [state] at (10.25,0) {$C_m?$};
\node (qf) [state] at (14.75,0) {$\errorstate$};
\node (int1) [state, rectangle] at (3.75,2) {$\testsat{l_{1,1}}$}; 
\node (int2) [state, rectangle] at (3.75,0) {$\testsat{l_{1,2}}$}; 
\node (int3) [state, rectangle] at (3.75,-2) {$\testsat{l_{1,3}}$}; 
\node (int1bis) [state, rectangle] at (12.5,2) {$\testsat{l_{m,1}}$}; 
\node (int2bis) [state, rectangle] at (12.5,0) {$\testsat{l_{m,2}}$}; 
\node (int3bis) [state, rectangle] at (12.5,-2) {$\testsat{l_{m,3}}$}; 
\node (subprot) [state, rectangle] at (0, -4.5) {$\testsat{x_j}$};
\node (subprot2) [state, rectangle] at (8, -4.5) {$\testsat{\neg x_j}$};
\node (assign) at (1.7,-4.5) {$:=$};
\node (test1) [state, minimum size = 0.6cm] at (3.5,-4.5) {};
\node (test2) [state, minimum size = 0.6cm] at (5.5,-4.5) {};
\node (assign2) at (9.7,-4.5) {$:=$};
\node (test3) [state, minimum size = 0.6cm] at (11.5,-4.5) {};
\node (test4) [state, minimum size = 0.6cm] at (13.5,-4.5) {};

\path[-latex', thick]
    (q0.180) edge[latex'-] +(180:4mm)
    (q0) edge[loop above] node[align = center]{{\small $\forall j \in \nset{n}$} \\ {\small $\writeact{\rlreg{j}}{\sattrue}$}} ()
    (q0) edge[loop below] node[align = center]{{\small $\writeact{\rlreg{j}}{\satfalse}$} \\{\small $\forall j \in \nset{n}$}} ()
    (q0) edge (testc1)
    (testc1) edge (int1)
    (testc1) edge (int2)
    (testc1) edge (int3)
    (testcm) edge (int1bis)
    (testcm) edge (int2bis)
    (testcm) edge (int3bis)
    (int1bis) edge (qf)
    (int2bis) edge (qf)
    (int3bis) edge (qf)
    (int1) edge (c1ok)
    (int2) edge (c1ok)
    (int3) edge (c1ok)
    (test1.180) edge[latex'-] +(180:4mm)
    (test3.180) edge[latex'-] +(180:4mm)
    (test2.0) edge[-latex'] +(0:4mm)   
    (test4.0) edge[-latex'] +(0:4mm)   
    (test1) edge[bend left] node[above] {{\small $\rlreadact{\rlreg{j}}{\sattrue}$}} (test2)
    (test3) edge[bend left] node[above] {{\small $\rlreadact{\rlreg{j}}{\satfalse}$}} (test4);

\path[dashed] 
    (c1ok) edge (6 + 0.5*3, 0.5*3)
    (c1ok) edge (6+0.5*3, 0)
    (c1ok) edge (6 + 0.5*3, -0.5*3);
\path[dashed, -latex']
    (10.25-0.5*3, 0.5*3) edge (testcm)
    (10.25-0.5*3, 0) edge (testcm)
    (10.25-0.5*3, -0.5*3) edge (testcm);
\end{tikzpicture}
  }
 \caption{The protocol $\protsat{\phi}$ for \NP-hardness of "uninitialized" \TARGET.}
\label{fig:prot_3SAT_to_uninit_target}
  \end{figure}

Suppose first that $\phi$ is satisfiable by an assignment $\nu$. For all $i \in \nset{m}$, there exists $k(i) \in \nset{3}$ such that $\nu(l_{i,k(i)}) = \true$. Consider the execution that writes symbols according to $\nu$, then deserts $q_0$ to go to $C_1?$, and one by one deserts all $C_i?$-s through states $\testsat{l_{i,k(i)}}$. This execution goes from $\confpair{q_0}{\datainit^{\regset{}}}$ to $\confpair{\targetstate}{\dvec_f}$ hence the instance of \TARGET is positive. 
Conversely, suppose that there exists such an execution $\exec: \aconfig_0 \step{*} \confpair{\targetstate}{\dvec}$. Let $\nu$ be the valuation corresponding to the register values when $\exec$ deserts $q_0$ for the last time. From this point onwards, $\exec$ successively deserts all $C_i?$, hence for all $i \in \nset{m}$, there exists $k(i) \in \nset{3}$ such that $\nu(l_{i,k(i)}) = \true$, proving that $\phi$ is satisfied by $\nu$. 

\section{Round-based Register Protocols}

\subsection{Formal Definition of Atomic Presence Constraints}
\label{def:presence_constraints_roundbased}
We first define some more precise notions to refer to parts of "presence constraints". 
A ""term"" is of the form $m$ or $k{+}m$ with $m \in \NN$ and $k$ a free variable. An ""atomic proposition"" is either of the form ``$(q,t) \populated$'' with $t$ a "term" and $q \in \states$ or of the form ``$\reg{t}{\regid} \contains \asymb$'' with $t$ a "term", $\regid \in \regint$ and $\asymb \in \dataalp$. 
A ""literal"" is either an atomic proposition or the negation of an atomic proposition. A ""proposition"" is a Boolean combination of "atomic propositions" that has at most one free variable.
An ""atomic presence constraint"" is either a ""closed"" "proposition" (no free variables), or of the form ``$\exists k \, \phi$'' or ``$\forall k \, \phi$'' where $\phi$ is a "proposition" with $k$ as a free variable. 
 A ""presence constraint@@roundbased"" is a Boolean combination of "atomic presence constraints".

\subsection{Proof of Lemma~\ref{lem:combiningfootprints}}
\label{appendix:combiningfootprints}

We prove the following, more general statement. 
\begin{lemma}
\label{lem:combiningfootprintsextended}
Let  $K \in \NN$, $w \geq \vrange{-}1$, $(\afootprint_k)_{k \leq K}$ and $(T_k)_{k \leq K{-}1}$ such that:
\begin{itemize}
\item for all $k \leq K$, $\afootprint_k$ is a "footprint" on $\iset{k{-}w}{k}$,
\item for all $k \leq K{-}1$, $T_k$ is a "footprint" on $\iset{k{-}w}{k{+}1}$,
\item for all $k \leq K{-}1$, $\fpproj{k{-}w}{k}{T_k} = \afootprint_k$,
\item for all $k \leq K{-}1$, $\fpproj{k{-}w{+}1}{k{+}1}{T_k} = \afootprint_k$.
\end{itemize}
 There exists an execution $\exec$ such that, for all $k \leq K$, $\fpproj{k{-}w}{k}{\exec} = \afootprint_k$. 
\end{lemma}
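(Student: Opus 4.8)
The plan is to prove Lemma~\ref{lem:combiningfootprintsextended} in two stages: first amalgamate the bridge footprints $(T_k)_{k\leq K-1}$ into a single footprint $U_K$ on $\iset{-w}{K}$ whose projection onto each window $\iset{k-w}{k}$ is $\afootprint_k$, and then realize $U_K$ as a genuine execution. The amalgamation proceeds by induction on $m$, building footprints $U_m$ on $\iset{-w}{m}$ with $\fpproj{k-w}{k}{U_m}=\afootprint_k$ for all $k\leq m$. I would take $U_0:=\afootprint_0$ (a footprint on $\iset{-w}{0}$), and to pass from $U_m$ to $U_{m+1}$ I glue $U_m$ (on $\iset{-w}{m}$) with $T_m$ (on $\iset{m-w}{m+1}$) along their common window $\iset{m-w}{m}$. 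The hypotheses give $\fpproj{m-w}{m}{U_m}=\afootprint_m=\fpproj{m-w}{m}{T_m}$, so the two footprints agree on the overlap; composing projections then yields $\fpproj{k-w}{k}{U_{m+1}}=\fpproj{k-w}{k}{U_m}=\afootprint_k$ for $k\leq m$ and $\fpproj{m+1-w}{m+1}{U_{m+1}}=\fpproj{m+1-w}{m+1}{T_m}=\afootprint_{m+1}$.

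The heart of the argument is a gluing statement for two footprints agreeing on a wide enough overlap: if $F$ is a footprint on $\iset{a}{b}$, $G$ a footprint on $\iset{a'}{b'}$ with $a\leq a'\leq b\leq b'$, overlap width $b-a'+1\geq\vrange$ (or $a'\leq 0$, so that nothing lies below the overlap), and $\fpproj{a'}{b}{F}=\fpproj{a'}{b}{G}$, then there is a footprint $H$ on $\iset{a}{b'}$ with $\fpproj{a}{b}{H}=F$ and $\fpproj{a'}{b'}{H}=G$. The common projection $P:=\fpproj{a'}{b}{F}=\fpproj{a'}{b}{G}$ is a shared skeleton: a move survives into $P$ exactly when it has an effect on the overlap $\iset{a'}{b}$, so the overlap moves of $F$ and of $G$ coincide with the moves of $P$ and are identified. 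Between two consecutive skeleton moves, $F$ contributes a block of moves acting only on rounds $<a'$ and $G$ a block acting only on rounds $>b$; I build $H$ by concatenating these two private blocks around each skeleton move.

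The overlap-width hypothesis, which is exactly where $w\geq\vrange-1$ is used, is what makes this interleaving legal. An upper private move acts at some round $l>b$ and reads registers only down to round $l-\vrange\geq b+1-\vrange\geq a'$, so it never touches a register below the overlap; a lower private move writes only to rounds $<a'$, which those upper moves never read, and conversely lower moves read rounds $<a'\leq b$ while upper moves write rounds $>b$. Hence lower and upper private moves are independent, their relative order is immaterial, and each stays enabled in $H$ because the register values it reads are precisely those dictated by $F$ on $\iset{a}{b}$ (for lower moves) or by $G$ on $\iset{a'}{b'}$ (for upper moves) and are unchanged by the interleaving. I would then check that $H$ is indeed a footprint (consecutive local configurations differ once useless steps are removed) and that restricting $H$ to either window deletes exactly the foreign private block, recovering $F$ and $G$. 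This validity bookkeeping is the main obstacle; once the visibility-range bound isolates the two families of private moves, the remainder is routine.

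Finally, I would realize $U_K$ as an execution. Since the bottom of its window is round $\max(-w,0)=0$, no move in $U_K$ reads a register of a round below the window, so $U_K$ fixes every register value it consults; reading its local configurations as honest abstract configurations — processes that increment past round $K$ simply leave the tracked window and never influence it again, as rounds never decrease and a process writes only to its own round — produces an execution $\exec$ with $\fpproj{k-w}{k}{\exec}=\fpproj{k-w}{k}{U_K}=\afootprint_k$ for every $k\leq K$. This proves Lemma~\ref{lem:combiningfootprintsextended}, and Lemma~\ref{lem:combiningfootprints} is the special case $w=\vrange-1$.
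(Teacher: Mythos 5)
Your proof is correct and rests on the same core insight as the paper's, but the induction is organized quite differently, so a comparison is worthwhile. The paper's key lemma (Lemma~\ref{lem:upwardsuccessormiddleproj}) glues a footprint on $\iset{k{-}w}{k}$ with one on $\iset{k{-}w{+}1}{k{+}1}$ — each window exceeds the overlap by exactly one round — and its proof is precisely your skeleton-plus-private-blocks interleaving, justified by the same observation that an overlap of width at least $\vrange$ prevents the below-overlap and above-overlap moves from touching a common register. The paper then iterates this \emph{symmetric} gluing in a double induction: it bridges consecutive $T_k$'s into footprints $U_k$ on windows of width $w{+}2$ and recurses with $K{-}1$ footprints of width $w{+}1$, so the window widens as the count shrinks until a single footprint covers $\iset{-w}{K}$; this is exactly why the lemma is stated with the parameter $w$ rather than just $\vrange{-}1$. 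You instead fold left to right, maintaining one accumulated footprint $U_m$ on $\iset{-w}{m}$ and gluing $T_m$ onto it, which requires the \emph{asymmetric} version of the gluing lemma where the lower window extends arbitrarily far below the overlap. That generalization is harmless — your lower private moves live entirely at rounds below $a'$ and by definition leave the overlap's local configuration untouched, so the same commutation argument applies verbatim — and it buys you a single, more transparent induction in place of the paper's simultaneous recursion on $(K,w)$. The final realization of $U_K$ as an execution, and the use of compositionality of projections to propagate the invariant $\fpproj{k-w}{k}{U_m}=\afootprint_k$, match what the paper leaves implicit. (One nit: the fourth hypothesis as printed should read $\fpproj{k-w+1}{k+1}{T_k}=\afootprint_{k+1}$; you silently and correctly use that reading.)
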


We start by proving the following lemma:
\begin{lemma}
\label{lem:upwardsuccessormiddleproj}
Let $w \geq \vrange$, $k \in \NN$, $\afootprint_-$ a "footprint" on $\iset{k{-}w}{k}$ and $ \afootprint_+$ a "footprint" on $\iset{k{-}w{+}1}{k{+}1}$ such that $\fpproj{k{-}w{+}1}{k}{\afootprint_-} = \fpproj{k{-}w{+}1}{k}{\afootprint_+}$. There exists $T$ a footprint on $\iset{k{-}w}{k{+}1}$ such that $\fpproj{k{-}w}{k}{T} = \afootprint_-$ and $\fpproj{k{-}w{+}1}{k{+}1}{T} = \afootprint_+$.
\end{lemma}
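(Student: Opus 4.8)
The plan is to build $T$ by shuffling the ``private'' parts of $\afootprint_-$ and $\afootprint_+$ around their common projection on the overlap rounds $\iset{k-w+1}{k}$. Write $\sigma := \fpproj{k-w+1}{k}{\afootprint_-} = \fpproj{k-w+1}{k}{\afootprint_+}$, say $\sigma = C_0, c_0, C_1, \dots, c_{p-1}, C_p$. First I would record a structural observation about projection: since projecting a footprint to a sub-window only replaces each configuration by its restriction and then drops the steps whose effect is confined to the removed rounds, the moves of $\afootprint_-$ split into those surviving in $\sigma$ (exactly the $c_i$, in order) and those touching only round $k-w$ (the \emph{lower} moves); symmetrically the moves of $\afootprint_+$ split into the $c_i$ and moves touching only round $k+1$ (the \emph{upper} moves). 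This gives a block decomposition: both $\afootprint_-$ and $\afootprint_+$ factor as alternating (private block, surviving move $c_i$), and during the $i$-th block the middle restriction is frozen at $C_i$.

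Next I would construct $T$ block by block. A local configuration on $\iset{k-w}{k+1}$ decomposes into a round-$(k-w)$ part, a middle part on $\iset{k-w+1}{k}$, and a round-$(k+1)$ part; in $T$ the middle part will follow $\sigma$, the round-$(k-w)$ part will follow $\afootprint_-$, and the round-$(k+1)$ part will follow $\afootprint_+$. Concretely, for $i$ from $0$ to $p$ I would, while the middle is held at $C_i$, replay the lower moves of $\afootprint_-$'s $i$-th block and the upper moves of $\afootprint_+$'s $i$-th block in any interleaving, and then, if $i<p$, apply $c_i$. I maintain the invariant that just before applying $c_i$ the three parts of the current $T$-configuration coincide with the round-$(k-w)$ part of $\afootprint_-$, the middle $C_i$, and the round-$(k+1)$ part of $\afootprint_+$ immediately before their respective $c_i$-steps; applying $c_i$ then updates all three parts exactly as the move dictates, preserving the invariant.

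The main obstacle, and the place where $w \geq \vrange$ is essential, is justifying that replaying the lower and upper moves inside a block is legitimate and order-independent. A lower move touches only round $k-w$ and reads only registers of rounds $\leq k-w$, whereas an upper move sits at round $k+1$, touches round $k+1$, and reads registers of rounds $\geq k+1-\vrange$. Since $w \geq \vrange$ gives $k-w < k+1-\vrange$, the two kinds of moves act on disjoint registers and locations, so each stays valid regardless of the other and their effects commute; an upper move may additionally read middle registers, but these are frozen at $C_i$ exactly as in $\afootprint_+$, so its guard still holds. I would also check the one point where a surviving move's guard reaches outside the overlap: a read-move $c_i$ at a round $\leq k$ inspects only registers of rounds $\leq k$, and if it inspects round $k-w$ (unconstrained in $\sigma$ but present in $T$) the value there is the one tracked from $\afootprint_-$, where the same move was valid, so its guard holds in $T$ as well; no surviving move ever inspects round $k+1$.

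Finally I would verify the two projections. Projecting $T$ to $\iset{k-w}{k}$ turns every upper move into a trivial step (it leaves rounds $\leq k$ unchanged), which the projection discards, while the lower and surviving moves stay non-trivial and their configurations are precisely those of $\afootprint_-$; hence $\fpproj{k-w}{k}{T} = \afootprint_-$, and symmetrically $\fpproj{k-w+1}{k+1}{T} = \afootprint_+$. The degenerate cases ($\sigma$ empty, or empty private blocks) are immediate, and the initial configurations match because $\afootprint_-$ and $\afootprint_+$ already agree on the middle $C_0$.
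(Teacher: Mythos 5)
Your proof is correct and follows essentially the same route as the paper: the paper proceeds by induction on the length of the common projection $\sigma$, where the inductive step peels off one surviving move and the base case performs exactly your single-block shuffle (lower moves of $\afootprint_-$ then upper moves of $\afootprint_+$, justified by the same disjointness argument from $w \geq \vrange$). Your explicit block-by-block construction with an invariant is just an unrolled version of that induction, with the same key observations about which registers each kind of move can inspect.
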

\begin{proof}
Let $\afootprint_{com} := \fpproj{k{-}w{+}1}{k}{\afootprint_-} = \fpproj{k{-}w{+}1}{k}{\afootprint_+}$. We proceed by induction on the number of steps in $\afootprint_{com}$. 

First if $\afootprint_{com}$ is the dummy footprint with no steps, then all steps in $\afootprint_-$ are at round $k{-}w$ and steps in $\afootprint_+$ are at round $k{+}1$. It suffices to consider $T$ that first copies the behavior of $\afootprint_-$ and then the behavior of $\afootprint_+$: steps at round $k{-}w$ cannot depend on the information of rounds $> k {-} w$, and steps at round $k{+}1$ cannot depend on the information of rounds $< k{-}w$ because $w \geq \vrange$.

Assume that the property is true if $\afootprint_{com}$ has $m$ steps, and suppose that $\afootprint_{com}$ has $m{+}1$ steps. We decompose $\afootprint_- = t_-, \move, s_-$ and $\afootprint_+ = t_+, \move, s_+$ where $t_-$ and $t_+$ coincide on rounds $k{-}w{+}1$ to $k$ and their projection on these rounds has exactly $m$ steps, $\move$ is the move of the $m{+}1$-th step of $\afootprint_{com}$, and $s_-$ and $s_+$ have no step at rounds $k{-}w{+}1$ to $k$. By induction hypothesis, there exists $t$ such that $\fpproj{k{-}w}{k}{t} = t_-$ and $\fpproj{k{-}w{+}1}{k{+}1}{t} = t_+$. By applying the property for $m=0$, there exists $s$ such that $\fpproj{k{-}w}{k}{s} = s_-$ and $\fpproj{k{-}w{+}1}{k{+}1}{s} = s_+$. Letting $T := t,\move,s$ (with $\move$ deserting if and only if it was "deserting" in $\afootprint_{com}$) concludes the proof.
\end{proof}
 
We now prove Lemma~\ref{lem:combiningfootprintsextended}.
We proceed by induction on $K$.
First, if $K=0$, "footprint" $\afootprint_0$ only has moves at round $0$ and may be seen as an execution. 
Suppose that the property is true for $K$, and consider $(\afootprint_k)_{k \leq K{+}1}$, $(T_k)_{k \leq K}$ satisfying the hypothesis. For all $k \leq K{-}1$, $T_k$ and $T_{k{+}1}$ both have projection $\tau_{k{+}1}$ on rounds $\iset{k{-}w{+}1}{k{+}1}$, hence thanks to Lemma~\ref{lem:upwardsuccessormiddleproj} applied with $w' := w{+}1$ and $k' := k{+}1$, there exists $U_k$ on rounds $\iset{k{-}w}{k{+}2}$ that projects to $T_k$ and $T_{k{+}1}$ on $\iset{k{-}w}{k{+}1}$ and $\iset{k{-}w{+}1}{k{+}2}$ respectively. 
 By applying the induction hypothesis on $(T_k)$ and $(U_k)$ with $K' := K{-}1$, there exists an execution $\exec$ such that, for all $k \leq K$, $\fpproj{k{-}w}{k{+}1}{\exec} = T_k$; this implies that, for all $k \leq K{+}1$, $\fpproj{k{-}w}{k}{\exec} = \afootprint_k$, concluding the proof of Lemma~\ref{lem:combiningfootprintsextended}. Applying Lemma~\ref{lem:combiningfootprintsextended} with $w := \vrange{-}1$ gives Lemma~\ref{lem:combiningfootprints}.

\subsection{Technical Details about Algorithm~\ref{algo:pspace}}
\label{appendix:details_algo_pspace}

Here, we describe in full details how Algorithm~\ref{algo:pspace} handles the "presence constraint". The pseudocode of the three functions used in Algorithm~\ref{algo:pspace} can be found in Algorithm~\ref{algo:functions}.

For $\psi$ a "presence constraint", we write \AP $\intro*\Atomcons{\psi}$ for the set containing all "atomic presence constraints" in $\psi$ as well as their negations.  
For $\phi$ a closed "proposition", we write \AP $\intro*\Atomprop{\phi}$ for the set of "atomic propositions" in $\phi$.
Given a set $S$ of "propositions" or "presence constraints", we write \AP $\intro* \negset{S}:= S \cup \{ \neg P \mid P \in S\}$ for the set containing all elements in $S$ and the negations of all elements in $S$. 

\begin{algorithm}[htbp]
\SetKwProg{Fn}{Function}{:}{}
\SetKwFunction{InitFun}{\initalgo}
\SetKwFunction{OnestepFun}{\onestepalgo}
\SetKwFunction{TestFun}{\testalgo}
\Fn{\InitFun{$E,U,C$} \nllabel{line_def_initfun}}{ 
\tcc{Sets containing what needs to be checked: $U$ and $E$ contain respectively universally and existentially quantified "atomic presence constraints", $C$ contains closed "literals"}
Guess $X \subseteq \negset{\Atomcons{\psi}}$ s.t. $\psi$ is true when all APCs in $X$ are true  \nllabel{line_guess_apcs} \;

\For{$P$ in $X$}{
  \For{$\phi$ closed "atomic proposition" in $P$ }{
  \tcc{$\phi$ refers to constant rounds only}
  \lIf{$\phi$ guessed to be true \nllabel{line:guess_closed_aps}}{
    Add $\phi$ to $C$ \nllabel{line:add_to_C_1}; 
    Replace $\phi$ by true in $P$ \nllabel{line:simplify_1}
  }
  \lElse {
  Add $\neg \phi$ to $C$ \nllabel{line:add_to_C_2};
  Replace $\phi$ by false in $P$ \nllabel{line:simplify_2}
  }
  }
\If{$P$ is a closed "proposition"}{
  Check that $P$ is true with guessed values of "atomic propositions"  \nllabel{line:check_closed}\;
}
  \lIf{$P$ universal}{Add $P$ to $U$ \nllabel{line_init_u}}
  \lIf{$P$ existential}{Add $P$ to $E$ \nllabel{line_init_e}} 
} 
}
\Fn{\OnestepFun{$E,U,C, \localconfig$} \nllabel{line_def_onestepfun}}{
\For{``$\forall l \, \phi$'' in $U$ \nllabel{line_guess_forall_1}}{
  Guess  $\mathcal{L} \subseteq \negset{\Atomprop{\phi[l \shortleftarrow k]}}$ s.t. $\phi[l \shortleftarrow k]$ is true when all literals in $\mathcal{L}$ are true \nllabel{line_guess_forall_2}\; 
  Add all "literals" in $\mathcal{L}$ to $C$ \nllabel{line_guess_forall_3}\nllabel{line:add_to_C_3}\;
}
\For{``$\exists l \, \phi$'' in $E$ \nllabel{line_guess_exists_1}}{
  \If{$\phi[l \shortleftarrow k]$ guessed to be true \nllabel{line_guess_exists_3}}{
     Guess  $\mathcal{L} \subseteq \negset{\Atomprop{\phi[l \shortrightarrow k]}}$ s.t. $\phi[l \shortleftarrow k]$ is true when all literals in $\mathcal{L}$ are true \nllabel{line_guess_exists_4} \; 
    Add all "literals" in $\mathcal{L}$ to $C$ \nllabel{line_guess_exists_5} \nllabel{line:add_to_C_4};
    Remove ``$\exists l \, \phi$'' from $E$ \nllabel{line_guess_exists_6}\;
  }
}
\For{$\phi$ in $C$ about round $k$ \nllabel{line_check_c_1}}{
    \tcp{$\phi$ is of the form (negation of) ``$(q,k) \populated$'', or (negation of) ``$\reg{k}{\regid}  \contains \asymb$''}
    Check that $\phi$ is satisfied in $\localconfig$ \nllabel{line_check_c_3};
    Remove $\phi$ from $C$ \nllabel{line_check_c_4}\;
}
}
\Fn{\TestFun{$E,U,C, \localconfig$} \nllabel{line_def_testfun}}{
\lIf{$E \ne \emptyset$ \nllabel{line_test_emptiness}}{\Return false \nllabel{line_check_empty}}
\For{$\phi \in C$ or ``$\forall l \, \phi$'' in $U$ \nllabel{line_check_infinite_1}}{
    \If{$\confpair{\emptyset}{\datainit^{\regset{}}} \not \models \phi$}{\Return false \nllabel{line_check_infinite_2}\tcp*[r]{Execution cannot stop at round $k$}}
}
\Return true \;
}
\caption{The functions at \nlref{line_initfun}, \nlref{line_onestepfun} and \nlref{line_testfun} of Algorithm~\ref{algo:pspace}}\label{algo:functions}
\end{algorithm}

\subsubsection{Function \texttt{\initalgo} (\nlref{line_def_initfun}):}
At \nlref{line_guess_apcs}, we guess a partial assignment over "atomic presence constraints" that makes $\psi$ true. Recall that "atomic presence constraints" either are closed "propositions" or of the form ``$\forall l \, \phi$'' or ``$\exists l \, \phi$'' with $\phi$ a "proposition" that has $l$ as free variable. We see this assignment as a set of "atomic presence constraints" which, when set to true, make $\psi$ true. Note that negations of "atomic presence constraints" are "atomic presence constraints". All closed "atomic propositions" refer to constant rounds; guess which ones are true (\nlref{line:guess_closed_aps}). This simplifies all closed "propositions" in $X$ to either true of false: if any of them is false, we reject (\nlref{line:check_closed}). 
We put universally quantified element of $X$ in $U$ 
(\nlref{line_init_u}) and existentially quantified ones in $E$
(\nlref{line_init_e}).

\subsubsection{Function \texttt{\onestepalgo} (\nlref{line_def_onestepfun}):}
The universal "atomic presence constraints" are checked at every round (\mulnlref{line_guess_forall_1}{line_guess_forall_3}), 
while for each existential "atomic presence constraints" is checked at a round chosen non-deterministically  (\mulnlref{line_guess_exists_1}{line_guess_exists_6}). 
When checking a proposition, we guess which "literals" make them true, and put these "literals" in $C$ to be checked later.
Moreover, we check at round $k$ all "literals" in $C$ that are about round $k$ (at \twonlref{line_check_c_1}{line_check_c_4}). Note that all "literals" in $C$ are closed formulas hence their "terms" are constant integers. 

\subsubsection{Function \texttt{\testalgo} (\nlref{line_def_testfun}):} In this functon, we check whether we can stop the execution at round $k$, leaving all rounds $\geq k{+}1$ untouched. First, we check that $E$ is empty. This means that a round has been guessed for every existential formula that has been in $E$. Moreover, we check that remaining formulas in $C$ and $U$ would be satisfied at rounds $\geq k{+}1$ if these rounds are left untouched by the execution, which is done in \mulnlref{line_check_infinite_1}{line_check_infinite_2}. 
The test is expressed under the condition $\confpair{\emptyset}{\datainit^{\regset{}}} \models \phi$ (although $\confpair{\emptyset}{\datainit^{\regset{}}}$ is technically not a configuration as it has zero processes), and is implemented as follows. Any formula $\phi$ that is in $C$ at the end of iteration $k$ is about round $l \geq k{+}1$ at this stage, and we check that $\phi$ is either of the form ``$\neg((q,l) \populated)$'' or of the form ``$\reg{l}{\regid} \contains \datainit$''. 
A universal "presence constraint" $\forall l \, \phi$ must be satisfied on arbitrarily large rounds $\geq k{+}1$, and we check that we obtain true by setting in $\phi$ all ``$(q,t) \populated$'' to false, ``$\reg{t}{\regid}\contains \datainit$'' to true and ``$\reg{t}{\regid}\contains \asymb$'' to false for $\asymb \ne \datainit$. 

\begin{example}
Consider $\phi_1 := \forall l \, ((q,l) \populated) \lor (\reg{l}{\regid} \contains \datainit)$ and $\phi_2 := \forall l \, (\reg{l}{\regid} \contains \asymb)$ with $\asymb \ne \datainit$. One has $\confpair{\emptyset}{\datainit^{\regset{}}} \models phi_1$, but $\confpair{\emptyset}{\datainit^{\regset{}}} \not \models \phi_2$. There is no hope of finding a $\aconfig \in \areach{\aconfiginitset}$ such that $\aconfig \models \phi_2$. 
\end{example}

\subsection{Proof of Correctness of the Algorithm}
\label{appendix:proof_correctness}
\correctnessalgo*

First, consider a computation of the algorithm that accepts at round $K \in \NN$.
For all $k \in \iset{0}{K}$, let $\tau_k$ denote the "footprint" on $\iset{k{-}\vrange{+}1}{k}$ guessed by the algorithm during iteration $k$. 
By applying Lemma~\ref{lem:combiningfootprints}, there exist $\aconfig_0 \in \aconfiginitset$ and an execution $\exec: \aconfig_0 \step{*} \aconfig$ such that, for all $k \leq K$, $\fpproj{k{-}\vrange{+}1}{k}{\exec} = \tau_k$. Moreover, $\exec$ leaves rounds $\geq K$ untouched.

\begin{lemma}
\label{fact:all_presence_constraints_checked}
For every formula $P$ that was in $U$, $E$ or $C$ at any point throughout the computation, one has $\aconfig \models P$. 
\end{lemma}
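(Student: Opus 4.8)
The plan is to show the claim first for the literals stored in $C$ and then to lift it to the quantified constraints stored in $U$ and $E$, relying on two semantic anchors on the execution $\exec$ produced by Lemma~\ref{lem:combiningfootprints}: first, that on each round $k \leq K$ the final configuration $\aconfig$ agrees with the last local configuration $\localconfig$ inspected at iteration $k$; second, that $\exec$ leaves every round $>K$ untouched, so that $\aconfig$ restricted to such a round is exactly $\confpair{\emptyset}{\datainit^{\regset{}}}$.

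For the first anchor, I would show that $\localconfig$ and $\aconfig$ have the same support and register values on round $k$. Because $\vrange \geq 1$, round $k$ belongs to $\iset{k-\vrange+1}{k}$, so the projection $\tau_k = \fpproj{k-\vrange+1}{k}{T}$ retains all information of the guessed footprint $T$ about round $k$; in particular the last local configuration of $\tau_k$ agrees with $\localconfig$ on round $k$. On the other hand $\fpproj{k-\vrange+1}{k}{\exec} = \tau_k$, and the last local configuration of any footprint of $\exec$ is the projection of the final configuration $\aconfig$; hence $\localconfig$ and $\aconfig$ coincide on round $k$.

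Next I would prove that every literal ever placed in $C$ is satisfied by $\aconfig$, distinguishing by the round $m$ it concerns. The closed atomic propositions placed in $C$ by \texttt{\initalgo} concern constant rounds and are present from the start; moreover \texttt{\initalgo} strips these closed propositions out of the quantified constraints, so every literal that \texttt{\onestepalgo} later inserts at iteration $k$ comes from an atomic proposition with term $l{+}m'$, $m' \geq 0$, and therefore concerns a round $\geq k$. In either case no literal is registered after round $m$ has been processed. If $m \leq K$ the literal is still in $C$ at iteration $m$ and is checked against $\localconfig$ in \texttt{\onestepalgo}; as it mentions only round $m$, the first anchor transfers its satisfaction to $\aconfig$. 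If $m > K$ the literal survives until iteration $K$ and is checked in \texttt{\testalgo} through $\confpair{\emptyset}{\datainit^{\regset{}}} \models \phi$; the second anchor then yields $\aconfig \models \phi$, since $\aconfig$ looks like $\confpair{\emptyset}{\datainit^{\regset{}}}$ on round $m$.

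Finally I would lift this to the quantified constraints. A universal constraint $\forall l\, \phi$ remains in $U$ throughout, and at each iteration $k \leq K$ the function \texttt{\onestepalgo} inserts into $C$ a set of literals forcing $\phi[l \shortleftarrow k]$; these hold in $\aconfig$ by the previous step, so $\aconfig \models \phi[l \shortleftarrow k]$, while for every round $r > K$ all rounds named by $\phi[l \shortleftarrow r]$ are untouched and the test $\confpair{\emptyset}{\datainit^{\regset{}}} \models \phi$ of \texttt{\testalgo} yields $\aconfig \models \phi[l \shortleftarrow r]$; hence $\aconfig \models \forall l\, \phi$. For an existential constraint $\exists l\, \phi$ that ever entered $E$, acceptance forces $E = \emptyset$ at the final test, so it was removed from $E$ at some iteration $k_0 \leq K$, where \texttt{\onestepalgo} placed in $C$ literals forcing $\phi[l \shortleftarrow k_0]$; these hold in $\aconfig$, giving $\aconfig \models \exists l\, \phi$. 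The main obstacle is precisely the first anchor: one must handle the shift between the windows $\iset{k-\vrange}{k}$ and $\iset{k-\vrange+1}{k}$ and exploit consistency only on the single round $k$ that the checked literal constrains, rather than relying on agreement over the whole window.
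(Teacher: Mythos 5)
Your proposal is correct and follows essentially the same route as the paper's proof: handle the literals in $C$ first (checked against $\localconfig$ at their round if $\leq K$, or via the $\confpair{\emptyset}{\datainit^{\regset{}}}\models\phi$ test in \texttt{\testalgo} if their round exceeds $K$), then derive satisfaction of the quantified constraints in $U$ and $E$ from the literals they deposit in $C$. The only difference is that you spell out two facts the paper leaves implicit — that $\localconfig$ and $\aconfig$ agree on round $k$, and that literals inserted at iteration $k$ only concern rounds $\geq k$ — which is a welcome but not substantively different elaboration.
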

\begin{proof}
Let $L$ be a literal that has been in $C$ at some point.  If it was removed from $C$ at \nlref{line_check_c_4}, then $C$ is satisfied by $\localconfig$ hence by $\aconfig$. If it has remained in $C$ until the end, then it is about round $l \geq K{+}1$ and $\confpair{\emptyset}{\datainit^{\regset{}}} \models L$, hence $\aconfig \models L$. 

Consider ``$\exists l \, \phi$'' that has appeared in $E$ at some point; it was added to $E$ at  \nlref{line_init_e}. 
At some iteration $k$, ``$\exists l \, \phi$'' is removed from $E$ at \nlref{line_guess_exists_6}. All literals guessed at \nlref{line_guess_exists_4} are added to $C$ at \nlref{line_guess_exists_5} hence are satisfied by $\aconfig$, thus $\aconfig \models \phi[l \shortleftarrow k]$ and $\aconfig \models \exists l \, \phi$.

Similarly, consider ``$\forall l \, \phi$'' that has appeared in $U$ at some point. By the same argument, for all $k \leq K$, $\aconfig \models \phi[l \shortleftarrow k]$. Also, thanks to the verification at \mulnlref{line_check_infinite_1}{line_check_infinite_2}, for all $k \geq K{+}1$, $\aconfig \models \phi[l \shortleftarrow k]$, which proves that $\aconfig \models \exists l \, \phi$. 
\end{proof}


The previous lemma proves that all APCs guessed at \nlref{line_guess_apcs} are satisfied by $\aconfig$. Note that the simplification at \twonlref{line:simplify_1}{line:simplify_2} does not change the truth value of APC $P$. Finally, we have $\aconfig \models \psi$. 

We now prove the converse implication: suppose that there exists $\exec: \aconfig_0 \step{*} \aconfig$ with $\aconfig \models \psi$.  
Since $\exec$ is a finite "execution", there exists $K$ such that $\aconfig$ has no "move" with "effect" on rounds $>K$. 
We build an accepting computation of the algorithm as follows.
First, the computation of the algorithm guesses $\aconfig_0$ as initial configuration.
At \nlref{line_guess_apcs}, it guesses APCs $P$ such that $\aconfig \models P$.
At \nlref{line:guess_closed_aps}, it guesses the truth value of closed APs in $\aconfig$, so that all formulas added to $C$, $E$ and $U$ are satisfied by $\aconfig$. 
In the loop on $k$, it guesses $\exec$ "footprint" by "footprint". At execution $k$, the "local configuration" $\localconfig$ obtained is equal to $\fpproj{k{-}\vrange}{k}{\aconfig}$.
Formulas in $E$ and $U$ do not have closed terms, and since quantified terms are of the form $l{+}m$ with $l$ a free variable, "literals" added to $C$ at iteration $k$ refer to rounds $\geq k$; thanks to \mulnlref{line_check_c_1}{line_check_c_4}, at the end of iteration $k$, all "literals" in $C$ are about rounds $\geq k{+}1$. At the end of iteration $K$ (or an earlier iteration), all formulas in $C$ and $U$ are satisfied by $\aconfig$ (which is blank after round $K$) hence \texttt{\testalgo}$(E,U,C,\localconfig)$
succeeds and the computation accepts.  This concluded the proof of Proposition~\ref{prop:correctness_algo_pspace}.

\subsection{Proof of Proposition~\ref{prop:polynomial_space}}
\label{appendix:polynomial_space}

\algopolyspace*

We first prove that footprints may be stored in polynomial space.

\begin{restatable}{lemma}{normalformroundbased}
\label{lem:normal_form_roundbased}
For all $\aconfig \in \aconfigs, \aconfig' \in \areach{\aconfig}$, there exists $\exec: \aconfig \step{*} \aconfig'$ s.t., for all $k$,  $\fpproj{k{-}\vrange}{k}{\exec}$ is storable in space $O((|Q|^2 \, (\vrange{+}1)^2 \,  \log(\rdim \, |\dataalp|)))$.
\end{restatable}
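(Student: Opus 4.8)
The plan is to start from an arbitrary execution $\exec_0 : \aconfig \step{*} \aconfig'$ and rewrite it into a normal form whose "footprints" on every window are short, mimicking the shortening argument of Proposition~\ref{prop:prp_npc_roundless} but carried out \emph{per round} rather than globally: the whole execution may be exponentially long and cannot be shortened, yet each round only needs a bounded amount of activity. Concretely, I would iterate three rewritings until a fixpoint, the difference with the roundless case being that states become "locations" $(q,k') \in \locations$ and that a write to $\reg{k'}{\regid}$ may be read by processes at rounds $k'$ to $k'{+}\vrange$: (i) delete every "non-deserting" "step" (read or increment) that leaves the abstract configuration unchanged, i.e.\ does not add a new "location" to the support and touches no register; (ii) delete every "non-deserting" write that adds no new "location" and whose written value is never read before the next write to the same register; (iii) turn into a "non-deserting" step every "deserting" step whose "source" "location" is populated again later. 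As in the roundless proof, each rewriting preserves the endpoints $\aconfig,\aconfig'$: a deleted step is a self-loop in the abstract semantics, and un-deserting a step merely keeps one extra "location" in the support, which is harmless by monotonicity since that location reappears anyway. Each rewriting strictly decreases either the number of "steps" or the number of "deserting" steps, so the process terminates on the finite execution $\exec_0$ and yields a normal-form execution $\exec$.

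In $\exec$, after (iii) a deserted "location" never returns, so every "location" $(q,k')$ enters the support at most once and every "deserting" step is the last one to leave its "source"; hence, \emph{at each fixed round} $k'$, at most $|Q|$ steps add a new "location" and at most $|Q|$ steps are "deserting". Rewritings (i)--(ii) ensure that every surviving step is of one of three kinds: it adds a new "location", it is "deserting", or it is a write whose value is read. I would then count the "moves" appearing in a single footprint $\fpproj{k{-}\vrange}{k}{\exec}$: these are the "moves" with "effect" on rounds $k{-}\vrange$ to $k$, namely the reads, writes and increments at those $\vrange{+}1$ rounds (plus the increments from round $k{-}\vrange{-}1$). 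Summing the per-round bound of $O(|Q|)$ over the window gives $O(|Q|(\vrange{+}1))$ new-location and "deserting" moves; the value-read writes whose write lies in the window can be charged injectively to the reads that read them, which are performed at rounds $k{-}\vrange$ to $k{+}\vrange$ and hence number $O(|Q|\,\vrange)$. Altogether the footprint contains $m = O(|Q|(\vrange{+}1))$ "moves".

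From this I derive the space bound. The footprint is an alternating sequence of $m{+}1$ "local configurations" and $m$ "moves", so it suffices to bound the size of one "local configuration" on $\iset{k{-}\vrange}{k}$. Its location component is a subset of $\locallocations{k{-}\vrange}{k}$, storable in $|Q|(\vrange{+}1)$ bits. For the register component, the key observation is that a register of the window becomes non-initial only through a write appearing in the footprint; since there are at most $m = O(|Q|(\vrange{+}1))$ such writes, every "local configuration" differs from the all-$\datainit$ valuation on at most $O(|Q|(\vrange{+}1))$ registers. Storing the register component sparsely—as the list of non-initial registers together with their symbol—thus costs $O(|Q|(\vrange{+}1)\log(\rdim\,|\dataalp|))$ per "local configuration" (an in-window index and a symbol per entry, the index contributing only an additive $\log(\vrange{+}1)$ absorbed into the polynomial factors). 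Multiplying the $O(|Q|(\vrange{+}1))$ size of one "local configuration" by the $O(|Q|(\vrange{+}1))$ of them yields the claimed $O(|Q|^2(\vrange{+}1)^2\log(\rdim\,|\dataalp|))$.

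The main obstacle I anticipate is the global correctness of the rewriting in the round-based semantics: unlike the roundless case, deleting a write or un-deserting a step affects registers that may be read several rounds later, so one must check carefully that (i)--(iii) genuinely preserve $\aconfig \step{*} \aconfig'$ and, crucially, that applying a rewriting at one round never reintroduces activity at another round, so that the per-round bounds hold \emph{simultaneously} for every window of the single execution $\exec$. A secondary, more bookkeeping difficulty is the sparse register accounting: arguing that only $O(|Q|(\vrange{+}1))$ registers per window ever leave their initial value is exactly what rewriting (ii) buys, and it is the step that prevents the unbounded number $\rdim$ of registers per round from entering the space bound linearly rather than logarithmically.
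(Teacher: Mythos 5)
Your proposal is correct and follows essentially the same route as the paper: the same three normal-form rewritings (delete non-deserting non-populating reads/increments, delete non-deserting non-populating never-read writes, un-desert steps whose source location is repopulated), a per-round count of populating/deserting/value-read steps bounding the number of moves in each window, and a size bound on the local configurations of the footprint. The only differences are harmless bookkeeping refinements — you charge value-read writes injectively to reads across the whole window rather than round by round, and you store register valuations sparsely so that $\rdim$ only enters logarithmically — both of which give bounds at least as tight as the paper's.
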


More specifically, we will prove that $\fpproj{k{-}\vrange}{k}{\exec}$ is storable in space $O((|Q|^2 \, (\vrange{+}1)^2 + |Q| \, (\vrange{+}1)^2 \,  \log(\rdim \, |\dataalp|)))$.
Similarly to the roundless case, we introduce a notion of normal form.
An execution $\exec$ is in ""normal form"" if for every step in $\exec$, one the following conditions is satisfied:
\begin{itemize}
\item the step writes a symbol to a register, and this symbol is later read by another step, or
\item it "deserts" the source location, or
\item its destination location was not "populated" before the step and has never been populated before in the execution. 
\end{itemize}
Note that the last two conditions combined imply that a given "location" is deserted at most once, as it cannot be deserted and then populated again. 

\begin{lemma}
\label{lem:normal_form_exists_roundbased}
For all execution $\exec: \aconfig \step{*} \aconfig'$, there exists a execution $\ti{\exec}: \aconfig \step{*} \aconfig'$ that is in "normal form".
\end{lemma}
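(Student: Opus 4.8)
The plan is to prove this by an iterative rewriting of $\exec$ that removes or modifies ``useless'' steps while preserving the two endpoints $\aconfig$ and $\aconfig'$, mirroring the transformation already used in the proof of Proposition~\ref{prop:prp_npc_roundless} for the roundless model. The whole argument rests on two structural features of the abstract semantics that carry over verbatim to the round-based setting: \emph{monotonicity} (if a step $\localconfig_1 \step{\move} \localconfig_2$ is enabled and we enlarge the set of populated locations of $\localconfig_1$ while keeping the register contents fixed, then the same move stays enabled and yields a correspondingly enlarged successor), and the fact that \emph{emptiness of a location is never a precondition of a step} --- a step only requires its source location to be populated, never that some location be empty. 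Consequently, keeping a location populated for longer than in the original execution can never invalidate a later step nor change any register value.

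Concretely, I would define three rewriting operations, applied as long as some step of the current execution violates all three normal-form conditions (call such a step \emph{bad}): \textbf{(i)} delete a non-deserting read step whose destination location is already populated at that point (such a step is a null step, leaving the configuration unchanged, so deletion trivially preserves everything); \textbf{(ii)} delete a non-deserting write step that does not populate a fresh location, whose written value is never read before being overwritten, and which is \emph{not} the last write to its register (so the register's final value in $\aconfig'$ is untouched and no intermediate read is affected); \textbf{(iii)} turn a deserting step into a non-deserting one whenever its source location is repopulated later in $\exec$. A bad step whose destination is currently empty but was populated earlier is handled by first applying (iii) to the earlier deserting step that emptied that destination, which --- by monotonicity --- keeps the destination populated and reduces the situation to case (i) or (ii).

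Termination follows from the lexicographic measure $\bigl(\#\text{steps},\ \#\text{deserting steps}\bigr)$: operations (i) and (ii) strictly decrease the number of steps, while operation (iii) preserves the number of steps and strictly decreases the number of deserting steps, and neither quantity is ever increased. The measure is bounded below, so the process stops, and at termination no step is bad, i.e.\ $\ti{\exec}$ is in normal form. Preservation of the endpoints is checked step by step: deletions in (i) and (ii) leave the final location set and all register values unchanged, and each application of (iii) only enlarges the populated sets of intermediate configurations (harmless by monotonicity and by emptiness never being a precondition), while the final configuration is unchanged because the location kept populated is, by the triggering condition, either present in $\aconfig'$ or deserted again by a still-present later step.

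The delicate point --- and the one I would be most careful about --- is the bookkeeping around writes that fix the \emph{final} contents of a register: such a write may simultaneously be non-deserting, populate no fresh location, and have its value never read, so it fails all three conditions literally, yet it cannot be deleted without altering $\aconfig'$. I would treat the last write to each register as \emph{useful by definition} (its written value is observed by the final configuration), which is precisely why operation (ii) excludes it; this is the intended reading of the first normal-form condition and is exactly what keeps the rewriting consistent with reaching $\aconfig'$ on the nose. Everything else reduces to routine case analysis on the type (read / write / increment) and the deserting status of each step.
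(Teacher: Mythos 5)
Your proposal is correct and is essentially the paper's own proof: the paper establishes the lemma by iterating exactly your three rewriting rules (delete non-deserting reads/increments that cover no new location, delete non-deserting, non-populating writes whose symbol is never read, and turn into non-deserting any deserting step whose source is later repopulated) until no rule applies, with the same implicit termination measure and the same reliance on monotonicity of the abstract semantics. Your caveat about the last write to each register is not redundant but an actual refinement of the paper's one-line argument --- the deletion rule as literally stated could erase the final write to a register and thus change the register contents of $\aconfig'$ --- and your fix (exempting the last write to each register, which costs only an additive $\rdim$ extra steps per round in the subsequent counting lemma, or equivalently reading the first normal-form condition as ``read later or observed by the final configuration'') is exactly the right way to close that small gap.
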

\begin{proof}
It suffices to iteratively:
\begin{itemize}
\item remove any read or increment that is "non-deserting" and does not cover a new location,
\item remove any write that is "non-deserting", does not "populate" a new location and whose written symbol is never read,
\item turn into "non-deserting" any "deserting" step that "deserts" a location which is later "populated" again.
\end{itemize}\end{proof}

\begin{lemma}
\label{lem:normal_form_limited_moves_roundbased}
An execution in "normal form" has at most $|Q| (2 \vrange{+}5)$ steps on a given round $k$. 
\end{lemma}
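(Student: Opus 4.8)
The plan is to classify each step at round $k$ according to which of the three \emph{normal form} conditions it satisfies and to bound each class separately. A step at round $k$ has a move of the form $(\delta,k)$ where $\delta$ is a read, a write, or an increment; a read or a write has its destination location at round $k$, whereas an increment has its destination at round $k{+}1$. I would partition the steps at round $k$ into three classes: (I) those that freshly populate their destination location (third condition), (II) those that are \emph{deserting} but do not freshly populate their destination (second condition), and (III) those that are neither deserting nor freshly populating.

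Classes (I) and (II) are straightforward. Since a location is freshly populated at most once during the whole execution, the steps of class (I) whose destination is at round $k$ number at most $|Q|$, and those whose destination is at round $k{+}1$ (necessarily increments) also number at most $|Q|$; hence class (I) contains at most $2|Q|$ steps. Likewise, every deserting step at round $k$ deserts a location of round $k$, and each such location is deserted at most once, so class (II) contains at most $|Q|$ steps.

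The main work is class (III). A step that is neither deserting nor freshly populating cannot be a read nor an increment, since in normal form these can only satisfy the second or third condition; it must therefore be a write whose written symbol is read by a later step. I would then exhibit an injection from class (III) into the set of read steps that read a register of round $k$: map each such write $w$, say to register $\reg{k}{\regid}$, to a step reading the value placed by $w$, i.e.\ a read of $\reg{k}{\regid}$ occurring after $w$ and before the next write to $\reg{k}{\regid}$. This map is injective because the reading windows of two distinct writes to the same register are disjoint, and writes to distinct registers are matched to reads of distinct registers. To finish, I observe that a register of round $k$ can only be read by a process at some round $k' \in \{k, \dots, k{+}\vrange\}$; for each fixed $k'$, every read at round $k'$ is deserting or freshly populating in normal form, so there are at most $2|Q|$ of them. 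Summing over the $\vrange{+}1$ possible values of $k'$ gives at most $2|Q|(\vrange{+}1)$ reads of round-$k$ registers, hence class (III) contains at most $2|Q|(\vrange{+}1)$ steps.

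Adding the three bounds gives $2|Q| + |Q| + 2|Q|(\vrange{+}1) = |Q|(2\vrange{+}5)$ steps at round $k$, as claimed. I expect the delicate point to be the injectivity argument for class (III), together with the observation that a round-$k$ register can only be read from the $\vrange{+}1$ rounds $k,\dots,k{+}\vrange$, which is precisely where the factor $\vrange$ enters the bound.
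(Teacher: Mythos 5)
Your proof is correct and follows essentially the same route as the paper's: bound the deserting/freshly-populating steps at round $k$ by $3|Q|$, then charge each remaining write (necessarily one whose symbol is later read) to a read of a round-$k$ register, of which there are at most $2|Q|(\vrange{+}1)$ since such reads occur only at rounds $k,\dots,k{+}\vrange$. Your explicit injectivity argument for the write-to-read map just spells out what the paper leaves implicit.
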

\begin{proof}
First, any read or increment step at round $k$ either deserts its source location which is never populated again, or populates its destination (\emph{i.e.}, its destination was not populated before the step). However, each location has at most one step populating the location and one deserting the location. Since steps at round $k$ may only desert locations of round $k$ and populate locations at rounds $k$ and $k{+}1$, at most $3 |Q|$ steps at round $k$ either desert or populate a location, among which at most $2 |Q|$ read steps as they may only desert and populate locations of round $k$.
Moreover, any write step at round $k$ that does not populate or desert must be read later, and that has to be by a read step on a round between $k$ and $k{+} \vrange$. Since there are at most $2 |Q| (\vrange{+}1)$ read steps on these rounds, there are at most $2 |Q| (\vrange{+}1)$ writes at round $k$ that do not populate nor desert, hence in total at most $|Q|(2 \vrange{+}5)$ steps at round $k$.
 \end{proof}

\begin{lemma}
\label{lem:footprint_storable_polyspace_roundbased}
If $\exec$ is in "normal form" and $k \in \NN$, then $\fpproj{k{-}\vrange}{k}{\exec}$ is storable in polynomial space $O((|Q|^2 \, (\vrange{+}1)^2 + |Q| \, (\vrange{+}1)^2 \,  \log(\rdim \, |\dataalp|)))$. 
\end{lemma}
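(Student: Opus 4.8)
The plan is to store the footprint $\fpproj{k{-}\vrange}{k}{\exec}$ succinctly, not as the explicit list of all its \emph{local configurations} but as its first \emph{local configuration} together with the ordered list of its moves, each tagged by one bit recording whether it is \emph{deserting}. This data determines the whole footprint: given $\localconfig$, a move $\move$ and its deserting bit, the successor $\localconfig'$ in a step $\localconfig \step{\move} \localconfig'$ is fixed, since the location component is updated by (possibly) removing the source location and adding the destination, and the register component changes only on a write. Thus the problem reduces to (i) bounding the number of moves of the footprint and (ii) bounding the size of one move and of the first \emph{local configuration}.

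For (i), I would observe that the steps of $\fpproj{k{-}\vrange}{k}{\exec}$ are exactly the moves of $\exec$ that have an \emph{effect} on some round of $\iset{k{-}\vrange}{k}$: the moves at rounds $k{-}\vrange, \dots, k$, together with the $\incr$ moves at round $k{-}\vrange{-}1$. Since $\exec$ is in \emph{normal form}, Lemma~\ref{lem:normal_form_limited_moves_roundbased} bounds the number of steps on each round by $|Q|(2\vrange{+}5)$. Summing over the $\vrange{+}1$ rounds of the window (and the increments of the round just below it) gives a bound $m = O(|Q|\,(\vrange{+}1)^2)$ on the number of moves of the footprint.

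For (ii), the location component of the first \emph{local configuration} is a subset of $\locallocations{k{-}\vrange}{k}$, storable in $O(|Q|(\vrange{+}1))$ bits. Its register component is where care is needed: in the intended application $\exec$ starts from an initial configuration, so every register of the window still holds $\datainit$ and the component is free to store; more generally, a register is observable in the footprint only if it is read or written by one of the $m$ moves, so only $O(m)$ registers carry information and the rest may be recorded as $\datainit$. Each move is then encoded by its transition — a source and a destination state, in $O(\log|Q|)$ bits, and an action naming at most one register (index in $O(\log\rdim)$ bits) and at most one symbol (in $O(\log|\dataalp|)$ bits) — plus a round offset in $\{-1,\dots,\vrange\}$ and a deserting bit. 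Summing over the $m$ moves yields $O\bigl(|Q|(\vrange{+}1)^2(\log|Q| + \log(\rdim\,|\dataalp|))\bigr)$, where the cheap $O(\log(\vrange{+}1))$ offsets are absorbed into the leading term; bounding $\log|Q| \leq |Q|$ turns the first summand into the stated $|Q|^2(\vrange{+}1)^2$ term and leaves the register-and-symbol contribution as the $|Q|(\vrange{+}1)^2\log(\rdim\,|\dataalp|)$ term.

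The main obstacle, and the reason the normal-form hypothesis is indispensable, is avoiding the register blow-up: writing out the $\dataalp^{\localregisters{k{-}\vrange}{k}}$ part of every \emph{local configuration} explicitly would cost $\Theta(\rdim(\vrange{+}1))$ symbols per configuration and introduce a forbidden factor $\rdim$ instead of $\log\rdim$. The move-based encoding circumvents this because normal form bounds both the number of writes (hence the number of non-$\datainit$ registers ever present in the window) and the number of reads (hence the number of starting register values that are ever observed), so every register that must be named is named by a single logarithmic-size index rather than stored in a dense table.
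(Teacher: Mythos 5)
Your proof is correct and follows essentially the same route as the paper: bound the number of steps of the footprint on rounds $\iset{k{-}\vrange}{k}$ via Lemma~\ref{lem:normal_form_limited_moves_roundbased} (giving $O(|Q|\,(\vrange{+}1)^2)$ moves) and then count the bits needed per move and per local configuration. Your succinct encoding (first local configuration plus the moves tagged with deserting bits) is in fact a more careful accounting than the paper's, which stores each local configuration densely at cost $O((|Q|+\rdim\log|\dataalp|)(\vrange{+}1))$ and, taken literally, would incur a factor $\rdim$ rather than $\log\rdim$; your version is what actually delivers the stated bound.
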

\begin{proof}
This "footprint" only has steps at rounds $k{-}\vrange$ to $k$, hence in total at most $(\vrange{+}1) |Q| (2 \vrange {+} 5)$ steps. Since a move can be stored in $O(\log(|Q|) + \log(\dataalp) + \log(\vrange) + \log(\rdim))$ and a local configuration in $O((|Q| + \rdim \, \log(|\dataalp|)) (\vrange{+}1))$ (storing the relative round instead of the absolute one),  a "footprint" on $\iset{k{-}\vrange}{k}$ can be stored in polynomial space $O((|Q|^2 \, (\vrange{+}1)^2 + |Q| \, (\vrange{+}1)^2 \, \log(\rdim \, |\dataalp|)))$.
\end{proof}

Combining Lemmas~\ref{lem:normal_form_exists_roundbased}, \ref{lem:normal_form_limited_moves_roundbased} and \ref{lem:footprint_storable_polyspace_roundbased} proves Lemma~\ref{lem:normal_form_roundbased}. 
Observe that Lemma~\ref{lem:normal_form_roundbased} is only true under the assumption that we do not store the rounds of a footprint in absolute value but in relative value with respect to $k$; otherwise the space used would depend on $k$.

We now prove Proposition~\ref{prop:polynomial_space}.

Thanks to Proposition~\ref{lem:normal_form_roundbased}, one may store $\tau$ and $T$ in polynomial space.
$U$ and $E$ are storable in $O(|\psi|)$, as for every "atomic presence constraint" $\phi$ in $U$ and $E$, either $\phi$ is present in $\psi$ or its negation $\neg \phi$ is. 
Let $M$ be the value of the greatest integer constant in $\psi$, which is in $O(|\psi|)$ thanks to unary encoding of the "terms". A "literal" can get to $C$ in two different ways: during the initialization (\twonlref{line:add_to_C_1}{line:add_to_C_2}) or while processing a "presence constraint" from $U$ or $E$ (\twonlref{line:add_to_C_3}{line:add_to_C_4}). There are at most $O(|\psi|)$ "literals" added to $C$ in the 
initialization. Consider $L$ a "literal" that is added to  $C$ at \nlref{line:add_to_C_3} or \nlref{line:add_to_C_4} during the computation of round $k$. Let $r$ be the round appearing in $L$. Either $r$ is a constant from $\psi$, or it was added at iteration $k' \leq k$, hence $r$ is of the form $k' + m$ with $m \leq M$. In that case, note that $r \geq k$ because otherwise the "literal" would have been removed from $C$ at iteration $r$. Either way, one has $0 \leq r -k \leq M$, hence a given element in $C$ is storable in $O(|\psi|)$. 
Also, elements in $C$ at round $k$ were added to $C$ either at the initialization or at a round in $\iset{k{-}M}{k}$, which bounds the total number of elements in $C$ by $O(M |\psi|) = O(|\psi|^2)$ at any point in the computation, and $C$ is storable in $O(|\psi|^3)$.

\end{document}